\newenvironment{proof}{\noindent\textit{Proof: }}{{\hfill $\Box$}}
\newtheorem{observation}{Observation}
\newtheorem{definition}{Definition}
\newtheorem{lemma}{Lemma}
\newtheorem{theorem}{Theorem}
\newtheorem{corollary}{Corollary}
\newtheorem{notation}[theorem]{Notation}
\begin{document}

\title{\textbf{A Survey on Algorithmic Aspects of \\ Modular Decomposition}\thanks{Work supported by the French research grant ANR-06-BLAN-0148-01 ``\textit{Graph Decompositions and Algorithms} - \textsc{graal}".}}

\author{Michel Habib\thanks{LIAFA, Universit\'e Paris 7 Diderot, France} \and Christophe Paul\thanks{CNRS - LIRMM, Universit\'e de Montpellier 2, France} }

\date{\today}

\maketitle

\begin{abstract}
The modular decomposition is a technique that applies but is not restricted to graphs. The notion of module naturally appears in the proofs of many graph theoretical theorems. Computing the modular decomposition tree is an important preprocessing step to solve a large number of combinatorial optimization problems. Since the first polynomial time algorithm in the early 70's, the algorithmic of the modular decomposition has known an important development. This paper survey the ideas and techniques that arose from this line of research.
\end{abstract}

\newcommand{\overlap}{\perp}
\newcommand{\lca}{lca}
\renewcommand{\P}{\mathcal{P}}
\newcommand{\M}{\mathcal{M}}
\newcommand{\X}{\mathcal{X}}
\newcommand{\K}{\mathcal{K}}
\newcommand{\F}{\mathcal{F}}
\newcommand{\B}{\mathcal{B}}
\newcommand{\C}{\mathcal{C}}
\newcommand{\G}{\mathcal{G}}
\renewcommand{\S}{\mathcal{S}}
\newcommand{\FS}{\mathcal{S}_F}
\renewcommand{\arc}[1]{\overrightarrow{#1}}

\newcommand{\Id}{\Bbb{I}_n}
\newcommand{\Idk}{\Bbb{I}_k}

\newcommand{\ie}{\textit{i.e.}~}
\newcommand{\eg}{\textit{e.g.}~}

\maketitle

\section{Introduction}

Modular decomposition is a technique at the crossroads of several domains of combinatorics which applies to many discrete structures such as graphs, 2-structures, hypergraphs, set systems and matroids among others. 
As a graph decomposition technique it has been introduced by Gallai~\cite{Gal67} to study the structure of comparability graphs (those graphs whose edge set can be transitively oriented). Roughly speaking a \emph{module} in graph is a subset $M$ of vertices which share the same neighbourhood outside $M$. Galai showed that the family of modules of an undirected graph can be represented by a tree, the \emph{modular decomposition tree}. The notion of module appeared in the litterature as \emph{closed sets}~\cite{Gal67}, \emph{clan}~\cite{EGMS94}, \emph{automonous sets}~\cite{Moh85}, \emph{clumps}~\cite{Bla78}\dots while the modular decomposition is also called \emph{substitution decomposition}~\cite{Moh85b} or \emph{$X$-join decomposition}~\cite{HM79}. See~\cite{MR84} for an early survey on this topic.

There is a large variety of combinatorial applications of modular decomposition. Modules can help proving structural results on graphs  as Galai did for comparability graphs. More generally modular decomposition appears in (but is not limited to) the context of perfect graph theory. Indeed Lov\'asz's proof of the perfect graph theorem~\cite{Lov72} involves cliques modules. Notice also that a number of perfect graph classes can be characterized by properties of their modular decomposition tree: cographs, $P_4$-sparse graphs, permutation graphs, interval graphs\dots~Refer to the books of Golumbic~\cite{Gol80}, Brandst\"adt \emph{et al.}~\cite{BLS99} for graph classes. We should also mention that the modular decomposition tree is useful to solve optimization problems on graphs or other discrete structures (see~\cite{Moh85}). An example of such use is given in the last section.

In the late 70's, the modular decomposition has been independently generalized to \emph{partitive set families}~\cite{CHM81} and to a combinatorial decomposition theory~\cite{CE80} which applies to graphs, matroids and hypergraphs. More recently, the theory of partitive families and its variants had been the foundation of decomposition schemes for various discrete structures among which $2$-structures~\cite{EHR99} and permutations~\cite{UY00,BCMR08}. Beside, based on efficiently representable set families, different graph decompositions had been proposed. The \emph{split decomposition} of~\cite{CE80} relies on a \emph{bipartitive family} on the vertex set. Refer to~\cite{Bui08} for a survey on the recent developments of these techniques.

A good feature of most of these decomposition schemes is that they can be computed in polynomial time.
Indeed, since the early 70's, there have been a number algorithms for computing the modular decomposition of a graph (or for some variants of this problem). The first polynomial algorithm is due to
Cowan, James and Stanton~\cite{CJS72} and runs in $O(n^4)$. Successive improvements are due to Habib and Maurer~\cite{HM79} who proposed a cubic time algorithm, and to M\"uller and Spinrad who designed a quadratic time algorithm. The first two linear time algorithms appeared independently in 1994~\cite{CH94,MS94}. Since then a series of simplified algorithms has been published, some running in linear time~\cite{MS99,TCHP08}, others in almost linear time~\cite{DGM01,MS00,HPV99}. The list is not exhaustive. This line of research yields a series of new interesting algorithmic techniques, which we believe, could be useful in other applications or topics of computer science. The aim of this paper is to survey the algorithmic theory of modular decomposition.

The paper is organized as follows.  The partitive family theory and its application to modular decomposition of graphs is presented in Section 2. As an algorithmic appetizer, Section 3 addresses the special case of totally decomposable graphs, namely the \emph{cographs}, for which a linear time algorithm is known since 1985~\cite{CPS85}. \emph{Partition refinement} is an algorithmic technique that reveals to be really powerful for the modular decomposition problem, but also for other graphs applications (see \eg~\cite{PT87,HPV99}). Section 4 is devoted to partition refinement. Section 5 describes the principle of a series of modular decomposition algorithms developped in the mid 90's. Section 6 explains how the modular decomposition can be efficiently computed via the recent concept of \emph{factoring permutation}~\cite{CHM02}. Let us mention that we do not discuss the recent linear time algorithm of Tedder et al.~\cite{TCHP08}, even though we believe that this last algorithm provides a positive answer to the problem of finding a simple linear time modular decomposition algorithm. Actually the key to Tedder et al.'s algorithm is to merge the ideas developed in Sections 5 and 6. The purpose of this paper is not to enter into the details of all the algorithm techniques but rather to present their main lines. Finally the last section presents three recent applications of the modular decomposition in three different domains of computer science, namely pattern matching, computational biology and parameterized complexity.

\section{Partitive families}

The \emph{modular decomposition theory} has to be understood as a special case of the theory of \emph{partitive family} whose study dates back to the early 80's~\cite{CE80,CHM81}. We briefly present the mains concepts and theorems of the partitive family theory. We then introduce the \emph{modular decomposition} of graphs and discuss its elementary algorithmic aspects. This section ends with a discussion on two important class of graphs: indecomposable graphs (the \emph{prime} graphs) and totally decomposable graphs (known as the \emph{cographs}) 

\subsection{Decomposition theorem of partitive families}
\label{sec:partitive}

The \emph{symmetric difference} between two sets $A$ and $B$ is denoted by $A\vartriangle B=(A\setminus B)\cup (B\setminus A)$. Two subsets $A$ and $B$ of a set $S$ \emph{overlap} if $A\cap B\neq\emptyset$, $A\setminus B\neq\emptyset$ and $B\setminus A\neq\emptyset$, we write $A\overlap B$\index{$\overlap$}. 

\begin{definition} \label{def:partitive}
A family $\mathcal{S}\subseteq 2^S$ of subsets of $S$ is \emph{partitive}\index{partitive family} if:
\vspace{-0.4cm}
\begin{enumerate}
\item $S\in\mathcal{S}$, $\emptyset\notin \mathcal{S}$ and for all $x\in S$, $\{x\}\in \mathcal{S}$;
\item For any pair of subsets $A,B\in\mathcal{S}$ such that  $A\overlap B$:\\
\vspace{-0.4cm}
\begin{enumerate}
\item $A\cap B\in\mathcal{S}$;
\item $A\setminus B\in\mathcal{S}$ and $B\setminus A\in\mathcal{S}$;
\item $A\cup B\in\mathcal{S}$;
\item $A\vartriangle B\in\mathcal{S}$.
\end{enumerate}
\end{enumerate}
\end{definition}

A family is \emph{weakly partitive} \index{weakly partitive family} whenever condition \textit{(2.d)} is not satisfied. 
Unless explicitly mentioned, we will only consider partitive families.

\begin{definition}
An element $F\in \mathcal{S}$ is \emph{strong} if it does not overlap any other element of $\mathcal{S}$. The set of strong elements of $\mathcal{S}$ is denoted $\FS$. 
\end{definition}

Obviously any trivial subset of $\mathcal{S}$, namely $S$ or $\{x\}$ (for $x\in S$), is a strong element. Let us remark that $\FS$ is nested, \ie  the transitive reduction of the inclusion order of  $\FS$ is a tree $T_{\S}$, which we call the \emph{strong element tree} (see Figure~\ref{fig:arbre-part}). It follows that  $|\FS|=O(|S|)$.

\begin{figure}[htp]
\centerline{\scalebox{1}{\includegraphics[width=11cm]{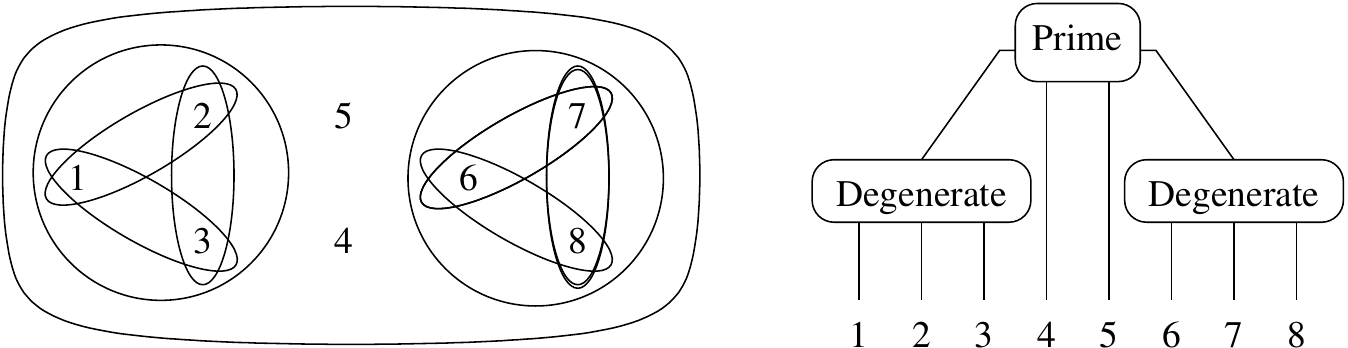}}}
\caption{The inclusion tree of the strong elements of the family $\S=\{\{1,2,3,4,5,6,7,8\},\{1,2,3\},\{6,7,8\},\{1,2\},\{2,3\},\{1,3\},\{6,7\},\{7,8\},\{6,8\},$ $\{1\},\{2\},\{3\},\{4\},\{5\},\{6\},\{7\},\{8\}\}$} \label{fig:arbre-part}
\end{figure}

\begin{definition}
Let $f_1^q,\dots, f_k^q$ be the chidren of a node $q$ of $T_{\S}$, the strong element tree of $\S$. The node $q$ is \emph{degenerate}  if for all non-empty subset $J\subset [1,k]$, $\cup_{j\in J} f_j^q\in\S$. A node is \emph{prime} if for every non-empty subset $J\subset [1,k]$, $\cup_{j\in J} f_j^q\notin\S$.
\end{definition}

It is not difficult to see that any strong element is either prime or degenerate. Moreover the following theorem tells us that the tree $T_{\S}$ is a representation of the family $\S$ and the subfamily of strong elements $\FS$ of $\S$ defines a "basis" of $\S$. 

\begin{theorem}\cite{CHM81} \label{th:partitive}
Let $\S$ be a partitive family on $S$. The subset $A\subseteq S$ belongs to $\S$ if and only if $A$ is strong or there exists a degenerate strong element $A'$ (or a node of $T_{\S}$) such that $A$ is the union of a strict subset of the children of $A'$ in $T_{\S}$.
\end{theorem}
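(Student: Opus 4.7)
Plan. I would prove the two directions of the equivalence separately. The backward direction is immediate from the definitions: a strong element lies in $\S$ by definition of $\FS$, and if $A = \bigcup_{j \in J} f_j^{A'}$ is a union of a non-empty strict subfamily of children of a degenerate node $A'$, then $A \in \S$ by the very definition of ``degenerate''.

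For the forward direction, assume $A \in \S$ is not strong. Let $A'$ be the smallest element of $\FS$ that contains $A$ (well defined since $S \in \FS$); note that $A \subsetneq A'$ because $A \notin \FS$. Let $f_1^{A'}, \ldots, f_k^{A'}$ denote the children of $A'$ in $T_\S$. The first key step is to show that $A$ is a union of some of the $f_i^{A'}$'s: since each $f_i^{A'}$ is strong it cannot overlap $A$, so one of $f_i^{A'} \cap A = \emptyset$, $f_i^{A'} \subseteq A$, or $A \subseteq f_i^{A'}$ must hold. The third alternative would produce a strong element $f_i^{A'} \subsetneq A'$ still containing $A$, contradicting the minimality of $A'$. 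Hence every child of $A'$ is either disjoint from $A$ or contained in $A$, and because the children partition $A'$ (they are the maximal strong proper subsets, and singletons are strong) while $A \subseteq A'$, this forces $A = \bigcup_{i \in J} f_i^{A'}$ with $J = \{i : f_i^{A'} \subseteq A\}$. The conditions $\emptyset \neq A \neq A'$ together with ``$A$ not strong'' then force $2 \leq |J| \leq k-1$.

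To conclude, it remains to check that $A'$ is degenerate. By construction, $A$ is a non-trivial union of children of $A'$ lying in $\S$, so $A'$ is certainly not prime; invoking the dichotomy ``every node of $T_\S$ is prime or degenerate'' stated just above the theorem, $A'$ must be degenerate, as required.

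The main obstacle is the dichotomy itself, which the paper leaves as ``not difficult to see''. To justify it cleanly I would consider the induced family $\F_{A'} = \{K \subseteq [1,k] : \bigcup_{i \in K} f_i^{A'} \in \S\}$, verify that it is partitive on $[1,k]$ (overlaps in $\F_{A'}$ correspond to overlaps in $\S$ between the associated unions of children), check that its only strong elements are the singletons and $[1,k]$ (any other would contradict the maximality of the $f_i^{A'}$ among strong proper subsets of $A'$), and finally argue that once $\F_{A'}$ contains even one non-trivial subset, iterated application of the five partitive-closure operations forces $\F_{A'}$ to contain every non-empty subset of $[1,k]$. That iterative generation is the only genuinely combinatorial piece of the argument; once it is in hand, the theorem falls out.
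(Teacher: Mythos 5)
The paper itself offers no proof of this theorem: it is quoted from \cite{CHM81}, and even the dichotomy ``every strong element is prime or degenerate'' is only asserted there (``it is not difficult to see''), so the comparison must be made against the classical argument rather than a proof in the text. Measured that way, your proposal is correct and follows the standard route. The backward direction is indeed definitional; for the forward direction, laminarity of $\FS$ (strong elements pairwise do not overlap) makes the strong supersets of $A$ a chain, so the minimal one $A'$ exists, its children partition $A'$ (maximality gives disjointness, singletons give coverage, as you note), no child can overlap the non-strong $A\in\S$ nor contain it without contradicting minimality of $A'$, hence $A$ is a union of at least two but not all children, and the dichotomy then forces $A'$ to be degenerate --- exactly the intended argument. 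Your plan for the dichotomy via the quotient family $\F_{A'}$ on $[1,k]$ is also the classical one, but note that the two steps you compress are where all the work of \cite{CHM81} actually lives: first, showing that a non-trivial strong element $K$ of $\F_{A'}$ yields a \emph{strong} element of $\S$ strictly between a child and $A'$ requires the small case analysis that any $B\in\S$ overlapping $\bigcup_{i\in K}f_i^{A'}$ must satisfy $B\subseteq A'$ (since $A'$ is strong), cannot overlap any child, and hence is either inside one child or itself a union of children, whence the overlap descends to $\F_{A'}$; second, the generation claim --- that one non-trivial member plus ``only trivial strong elements'' forces all non-empty subsets of $[1,k]$ into $\F_{A'}$ --- is true (take inclusion-minimal non-trivial members, show they are pairs via $K\cap K'$ and $K\setminus K'$, then chain pairs with unions and symmetric differences), but you have announced it rather than executed it, so as written your proof is complete only relative to the dichotomy the paper itself takes for granted.
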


As a consequence, even if a partitive family on a set $S$ can have exponentially many elements, it always admits a representation linear in the size of $S$. Such a representation property  is also known for other families of subsets of a set, such as laminar families, cross-free families~\cite{EG77}\dots as well as for some families of bipartitions of a set, such as splits~\cite{CE80}. Recently, a similar result has been shown for \emph{union-difference} families of subsets of a set, \textit{i.e.} families closed under the union and the difference of its overlapping elements \cite{BH08}. In this latter case, the size of the representation amounts to $O(|S|^2)$. For a detailed study of these aspects, the reader should refer to~\cite{Bui08}.

\subsection{Factoring Permutations}
\label{sec:permutation}

Although the idea of \emph{factoring permutation} implicitly appeared in some early papers (see \eg \cite{HM91,Hsu92,HHS95}), it has only been formalized in ~\cite{CH97, Cap97}. This concept turns out to be central to recent modular decomposition algorithms and other applications.

Let $\sigma$ be a permutation of a set $S$ of size $n$. By $\sigma(x)$, we mean the rank $i$ of $x$ in $\sigma$ and $\sigma^{-1}(i)$ stands for the $i$-th element of $\sigma$.
\index{interval}
A subset $I\subseteq S$ is a \emph{factor} or an \emph{interval} of a permutation $\sigma$ if there exist $i\in[1,n]$ and $j\in[1,n]$ such that $I=\{x \mid x=\sigma^{-1}(k), i\leqslant k\leqslant j\}$. In other words, the elements of $I$ occur consecutively in $\sigma$.

\begin{definition}\cite{Cap97} \index{factoring permutation} \label{def:perm-fact}
Let $\mathcal{S}$ be a (weakly) partitive family of a set $S$ and let $\mathcal{S}_F$ be the strong elements of $\mathcal{S}$. A permutation $\sigma$ of $S$ is \emph{factoring} for $\mathcal{S}$ if for any $F\in\mathcal{S}_F$, $F$ is a factor of $\sigma$.
\end{definition}

For example, $\pi=1~2~3~4~5~6~7~8$, $\pi_1=6~7~8~4~3~1~2~5$ and $\pi_2=8~7~6~1~3~2~4~5$ are three factoring permutations of the family $\mathcal{S}$ depicted in  Figure~\ref{fig:arbre-part}. One can check that, in each of these three permutations, the two non-trivial strong elements of $\mathcal{S}_F$, namely  $\{1,2,3\}\in\mathcal{S}_F$ and $\{6,7,8\}\in\mathcal{S}_F$, are factors.

Given a layout of the strong element tree of a partitive family, a left-to-right enumeration of the leaves results in a factoring permutation. In many cases it is easier to compute a factoring permutation than the strong element tree.
We  explain in Section~\ref{sec:perm-to-tree}  how to obtain the strong element tree from a factoring permutation.

To conclude this brief introduction on factorizing permutation, we state a Lemma which formalizes links between intervals of factoring permuations and partitive families. This Lemma somehow guided the development  of factoring permutation algorithms.

\begin{lemma} \label{lem:fp-perm-partitive}
Let $\sigma$ be a factoring permutation of a partitive family $\S$. Then the set $\mathcal{I}(\S,\sigma)$ of intervals of $\sigma$ which are elements of $\S$ is a weakly partitive family. Moreover the strong elements  of $\mathcal{I}(\S,\sigma)$ and of $\S$ are the same.
\end{lemma}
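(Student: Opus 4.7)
The plan is to establish the two assertions in turn, exploiting the fact that intervals of a linear order are closed under intersection, union, and set difference (but not symmetric difference).

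For the first assertion I would verify the axioms of a weakly partitive family. Condition (1) is immediate: $S$ itself, every singleton, and the non-membership of $\emptyset$ are inherited from $\S$, and each of these subsets is trivially an interval of $\sigma$. For condition (2), suppose $A, B \in \mathcal{I}(\S,\sigma)$ with $A \overlap B$. Because $\S$ is partitive, the four sets $A \cap B$, $A \setminus B$, $B \setminus A$, $A \cup B$ all lie in $\S$. The key point is to argue that they are intervals of $\sigma$: writing $A = [i,j]$ and $B = [k,\ell]$ with $i \leq k$ (the positions in $\sigma$), overlap forces $i < k \leq j < \ell$, hence $A \cap B = [k,j]$, $A \setminus B = [i,k-1]$, $B \setminus A = [j+1,\ell]$, and $A \cup B = [i,\ell]$ are all intervals. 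This is precisely where condition (2.d) fails, since $A \vartriangle B = [i,k-1] \cup [j+1,\ell]$ is in general not an interval; this explains why only \emph{weak} partitivity is claimed.

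For the second assertion I would prove the two inclusions on the sets of strong elements separately. If $F \in \S_F$, then $F$ is an interval of $\sigma$ by the factoring hypothesis, so $F \in \mathcal{I}(\S,\sigma)$; and since $F$ does not overlap any element of $\S \supseteq \mathcal{I}(\S,\sigma)$, it is strong in $\mathcal{I}(\S,\sigma)$ as well. Conversely, take $M$ strong in $\mathcal{I}(\S,\sigma)$ and suppose for contradiction that $M \notin \S_F$. By Theorem~\ref{th:partitive}, there is a degenerate node $A'$ of $T_{\S}$ with children $f_1,\dots,f_k$ and a strict non-empty $J \subsetneq [1,k]$ with $M = \bigcup_{j \in J} f_j$. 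Each $f_i$ is strong in $\S$ and therefore an interval of $\sigma$; since $M$ itself is an interval of $\sigma$, the indices in $J$ must form a contiguous block $[a,b]$ in the $\sigma$-order on the children. We necessarily have $|J| \geq 2$, otherwise $M = f_a$ would already be strong in $\S$.

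The crux is then to exhibit an element of $\mathcal{I}(\S,\sigma)$ that overlaps $M$, thereby contradicting its strongness in $\mathcal{I}(\S,\sigma)$. If $b < k$, take $M' = f_b \cup f_{b+1}$; if instead $b = k$, then $a > 1$ and take $M' = f_{a-1} \cup f_a$. In either case $M'$ is a union of two consecutive children of the degenerate node $A'$, hence lies in $\S$, and is an interval of $\sigma$ because it is the union of two adjacent factors, so $M' \in \mathcal{I}(\S,\sigma)$. A direct inspection shows $M \cap M'$, $M \setminus M'$, $M' \setminus M$ are all non-empty (using $|J| \geq 2$), so $M \overlap M'$, the desired contradiction. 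The main obstacle is precisely this last step: picking an auxiliary interval that genuinely overlaps $M$ rather than merely extending or being contained in it, which is what forces the case split on whether $J$ touches the right or left boundary of $[1,k]$.
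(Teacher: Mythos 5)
Your proof is correct; in fact the paper states Lemma~\ref{lem:fp-perm-partitive} without any proof, so there is no argument of the authors to compare against, and your write-up supplies exactly the natural missing one. Both halves check out: the interval arithmetic $i<k\leqslant j<\ell$ correctly yields closure under intersection, union and differences (and correctly excludes symmetric difference, hence only weak partitivity), and in the converse direction the appeal to Theorem~\ref{th:partitive}, the observation that the children of a degenerate node tile the parent interval so that $J$ must be a contiguous block $[a,b]$ with $|J|\geqslant 2$, and the overlap witness $f_b\cup f_{b+1}$ (or $f_{a-1}\cup f_a$ when $b=k$, where strictness of $J$ guarantees $a>1$) together give a complete and sound argument.
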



\subsection{Modules of a graph} 
\label{sec:module}

For the sake of the presentation we only consider undirected, simple and loopless graphs. We use the classical notations (\eg see~\cite{BLS99}). The neighbourhood of a vertex $x$ in a graph $G=(V,E)$ is denoted $N_G(x)$ and its non-neighbourhood $\overline{N}_G(x)$ (subscript $G$ will be omitted when the context is clear). The complementary graph of a graph $G$ is denoted by $\overline{G}$. Given a subset of vertices $X\subseteq V$, $G[X]$ is the subgraph induced by $X$ (any edge in $G$ between two vertices in $X$ belongs to $G[X]$).

Let $M$ be a set of vertices of a graph $G=(V,E)$ and $x$ be a vertex of $V\setminus M$. Vertex $x$ \emph{splits} $M$ (or is a \emph{splitter}\index{splitter} of $M$), if there exist $y\in M$ and $z\in M$ such that $xy\in E$ and $xz\notin E$. If $x$ is not a splitter of $M$, then $M$ is \emph{uniform} or \emph{homogeneous}\index{homogeneous} with respect to $x$.

\begin{definition} \index{module}
Let $G=(V,E)$ be a graph. A set  $M\subseteq V$ of vertices is a  \emph{module} if $M$ is homogeneous with respect to any $x\notin M$ (i.e. $M\subseteq N(x)$ or $M\cap N(x)=\emptyset$).
\end{definition}

\begin{observation} \label{obs:splitter}
Let $S$ be a subset of vertices of a graph $G=(V,E)$. If $S$ has a splitter $x$, then any module of $G$ containing $S$ also contains $x$.
\end{observation}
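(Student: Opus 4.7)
The plan is a short proof by contradiction that simply unfolds the definitions. Let $x$ be a splitter of $S$, so by the definition given just above the observation, $x\in V\setminus S$ and there exist $y,z\in S$ with $xy\in E$ and $xz\notin E$. Let $M$ be any module of $G$ with $S\subseteq M$. I want to show that $x\in M$, and the natural way is to suppose otherwise and derive a contradiction from the homogeneity condition in the definition of a module.

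So assume for contradiction that $x\notin M$. Then $x$ is a vertex of $V\setminus M$, and since $M$ is a module, $M$ is homogeneous with respect to $x$: either $M\subseteq N(x)$ or $M\cap N(x)=\emptyset$. Now I invoke the witnesses provided by the splitter: $y,z\in S\subseteq M$, with $y\in N(x)$ and $z\notin N(x)$. The presence of $z\in M\setminus N(x)$ rules out $M\subseteq N(x)$, and the presence of $y\in M\cap N(x)$ rules out $M\cap N(x)=\emptyset$. Both cases of homogeneity fail, which is the required contradiction. Hence $x\in M$.

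There is really no obstacle here; the observation is essentially a restatement of the module definition when specialised to a subset $S$ of the module. The only small point to be careful about is the implicit assumption that $x\notin S$ (which is guaranteed by the splitter definition, since splitters are chosen from $V\setminus S$), so the dichotomy $x\in M$ versus $x\in V\setminus M$ is meaningful and the module axiom can be applied directly to $x$.
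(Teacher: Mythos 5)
Your proof is correct and is exactly the definition-unfolding argument the paper intends: the observation is stated without proof precisely because the splitter's witnesses $y,z\in S\subseteq M$ immediately contradict homogeneity of $M$ with respect to any $x\notin M$. Your remark about $x\notin S$ being built into the splitter definition is a fine point of care, and nothing further is needed.
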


Aside the singletons and the whole vertex sets, any union of  connected components (or of co-connected components) of a graph are simple examples of modules. Let us also note that a graph may have exponentially many modules. Indeed any subset of a complete graph is a clique. Nevertheless, as we shall see with the following lemma, the family of modules has strong combinatorial properties.

\begin{lemma} \cite{CHM81}\label{lem:mod-partitive}
The family $\M$ of modules of a graph is partitive.
\end{lemma}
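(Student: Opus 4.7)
The plan is to verify the two conditions of Definition~\ref{def:partitive} directly. Condition (1) is immediate: the whole vertex set $V$ has no outside vertex, so it is vacuously a module; the empty set is excluded by convention; and any singleton $\{x\}$ is trivially homogeneous with respect to every outside vertex. All the work lies in condition (2). Throughout, I will use the key reformulation of the definition: if $M$ is a module and $z\notin M$, then the adjacency of $z$ to $M$ is determined by its adjacency to any single representative of $M$.

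Fix overlapping modules $A,B\in\mathcal{M}$ and pick witnesses $a\in A\setminus B$, $b\in B\setminus A$, $c\in A\cap B$. I would handle the four parts by a uniform case analysis on the position of an external vertex $z$. For (a), if $z\notin A\cap B$ then either $z\notin A$ (and $z$ does not split $A\supseteq A\cap B$) or $z\in A\setminus B$ (and $z\notin B$, so $z$ does not split $B\supseteq A\cap B$). For (c), any $z\notin A\cup B$ is uniform on both $A$ and $B$; picking a common vertex $c\in A\cap B$ shows that the adjacency value of $z$ to $A$ and to $B$ agree, so $z$ is uniform on $A\cup B$.

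The parts (b) and (d) are the delicate ones, because the potentially external vertices now include points of $A\cap B$ (for (b)) or of $(A\cap B)\cup(V\setminus(A\cup B))$ (for (d)), and a vertex in $A\cap B$ is not outside $A$ or $B$, so modularity of $A$ and $B$ does not restrict its behavior directly. The trick I would use in both cases is an \emph{intermediate witness}: to compare the adjacency of $z\in A\cap B$ to $u\in A\setminus B$ and $v\in B\setminus A$, note that $u\notin B$ forces $u\sim z\iff u\sim v$ (uniformity of $u$ on $B$), and $v\notin A$ forces $v\sim z\iff v\sim u$ (uniformity of $v$ on $A$); chaining the two equivalences gives $z\sim u\iff z\sim v$. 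A mild variant (with $u,v$ both in $A\setminus B$ and routing through any $b'\in B\setminus A$) handles (b); and the case $z\notin A\cup B$ in (d) reduces to (c). The main obstacle is precisely this chaining argument, but once stated as a single lemma it makes all the subcases go through uniformly, and non-emptiness of each resulting set is guaranteed by $A\overlap B$.
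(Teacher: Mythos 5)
Your proof is correct and complete: condition (1) is routine, parts (a) and (c) follow from your case analysis on the position of the external vertex $z$, and your intermediate-witness chaining lemma (routing $z\in A\cap B$ through $u\in A\setminus B$ and $v\in B\setminus A$, or through $b'\in B\setminus A$ for two vertices of $A\setminus B$) correctly handles the only delicate cases, namely $z\in A\cap B$ in parts (b) and (d), with non-emptiness of all five sets guaranteed by $A\overlap B$. The paper itself gives no proof of Lemma~\ref{lem:mod-partitive} --- it is stated with a citation to \cite{CHM81} --- and your direct verification is precisely the standard argument from that source, so there is nothing further to compare.
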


The notions of \emph{trivial} and  \emph{strong} module and \emph{degenerate}  are defined according to the terminology of  Section~\ref{sec:partitive}. By Lemma~\ref{lem:mod-partitive}, if $M$ and $M'$ are overlapping modules, then 
$M\setminus M'$, $M'\setminus M$, $M\cap M'$, $M\cup M'$ and $M\vartriangle M'$ are modules of $G$.

Let $M$ and $M'$ be disjoint sets. We say that $M$ and $M'$ are \emph{adjacent} if any vertex of $M$ is adjacent to all the vertices of $M'$ and \emph{non-adjacent} if the vertices of $M$ are non-adjacent to the vertices of $M'$.

\begin{observation} \label{obs:module-adj}
Two disjoint modules are either adjacent or non-adjacent.
\end{observation}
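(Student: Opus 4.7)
The plan is to apply the module definition twice, once viewing $M'$ as the module and picking splitters from $M$, and once viewing $M$ as the module and picking splitters from $M'$.

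First I would fix an arbitrary vertex $x\in M$. Since $M$ and $M'$ are disjoint, $x\notin M'$, so by the definition of a module applied to $M'$, either $M'\subseteq N(x)$ or $M'\cap N(x)=\emptyset$. Thus every individual vertex of $M$ is either ``fully adjacent'' to $M'$ or ``fully non-adjacent'' to $M'$.

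Next I would rule out the possibility that the two types coexist in $M$. Suppose, for contradiction, that $x,x'\in M$ have different types, say $M'\subseteq N(x)$ while $M'\cap N(x')=\emptyset$. Pick any $y\in M'$ (which is non-empty, otherwise the statement is vacuous). Then $xy\in E$ and $x'y\notin E$, so $y$ is a splitter of $M$. But $y\notin M$ since $M\cap M'=\emptyset$, contradicting the fact that $M$ is a module. Hence all vertices of $M$ share the same type with respect to $M'$, which is exactly the conclusion: $M$ and $M'$ are either adjacent or non-adjacent.

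I do not expect any serious obstacle here; the argument is essentially a double application of the module definition. The only point requiring a little care is the corner case $M'=\emptyset$ (or $M=\emptyset$), where both ``adjacent'' and ``non-adjacent'' hold vacuously, so the statement remains true. One could also phrase the proof symmetrically using Observation~\ref{obs:splitter}: if the adjacency between $M$ and $M'$ were mixed, a vertex of $M'$ would split $M$ and hence be forced to lie in $M$, contradicting disjointness.
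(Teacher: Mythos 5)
Your proof is correct: the double application of the module definition (each $x\in M$ is uniform with respect to $M'$, and any mixed adjacency would produce a vertex of $M'$ splitting $M$, contradicting disjointness via Observation~\ref{obs:splitter}) is exactly the argument the paper leaves implicit, since it states this as an observation without proof. Nothing is missing; the remark about the vacuous empty-set case is careful but inessential, as modules are non-empty by convention.
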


A module $M$ is \emph{maximal} with respect to a set $S$ of vertices, if $M\subset S$ and there is no module $M'$ such that $M\subset M'\subset S$. If the set $S$ is not specified, we shall assume $S=V$.

\begin{definition} \index{modular partition}
Let $\P=\{M_1,\dots, M_k\}$ be a partition of the vertex set of a graph $G=(V,E)$. If for all $i$, $1\leqslant i\leqslant k$, $M_i$ is a module of $G$, then $\P$ is a \emph{modular partition} (or \emph{congruence} partition) of $G$.
\end{definition}

\index{maximal modular partition}
A non-trivial modular partition  $\P=\{M_1,\dots, M_k\}$ which only contains maximal strong modules is a \emph{maximal modular partition}. Notice that each graph has a unique maximal modular partition. If $G$ (resp. $\overline{G}$) is not connected then its (resp. co-connected) connected components are the elements of the maximal modular partition. From Observation~\ref{obs:module-adj}, we can define a \emph{quotient graph} whose vertices are the parts (or modules) belonging to the modular partition $\P$.

\begin{definition} \index{quotient graph}
To a modular partition $\P=\{M_1,\dots, M_k\}$ of a graph $G=(V,E)$, we associate a \emph{quotient graph} $G_{/\P}$, whose vertices are in one-to-one correspondence with the parts of $\P$. Two vertices $v_i$ and $v_j$ of $G_{/\P}$ are adjacent if and only if the corresponding modules $M_i$ and $M_j$ are adjacent in $G$.
\end{definition}

\begin{figure}[htp]
\centerline{\scalebox{1}{\includegraphics[width=11cm]{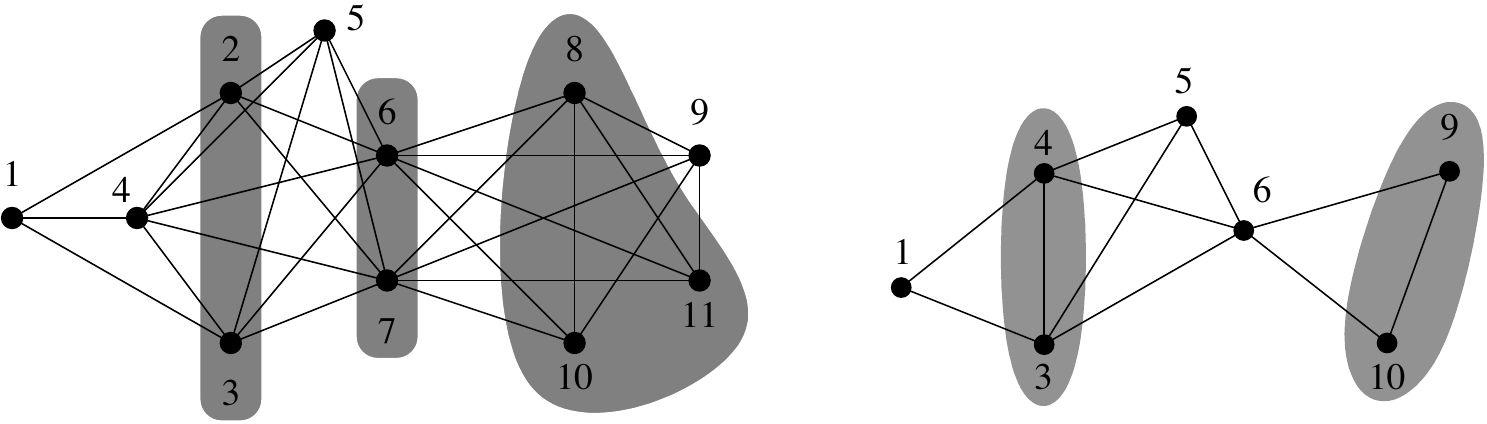}}}
\caption{On the left, the grey sets are modules of the graph $G$. $\mathcal{Q}=\{\{1\},\{2,3\},\{4\},\{5\},\{6,7\},\{9\},\{8,10,11\}\}$ is a modular partition of $G$. The quotient graph $G_{/\mathcal{Q}}$, depicted on the right with a representative vertex for each module of $\mathcal{Q}$, has two non-trivial modules (the sets $\{3,4\}$ and $\{9,10\}$). The maximal modular partition of $G$ is $\P=\{\{1\},\{2,3,4\},\{5\},\{6,7\},\{8,9,10,11\}\}$ and its quotient graph are represented in Figure~\ref{fig:md-tree} (aside the top node of the tree).}
\label{fig:partition-module}
\end{figure}

Let us remark that the quotient graph $G_{/\P}$ with $\P=\{M_1,\dots, M_k\}$ is isomorphic to any subgraph induced by a set $V'\subseteq V$ such that $\forall i\in[1,k]$, $|M_i\cap V'|=1$.
The \emph{representative graph} of a module $M$ is the quotient graph $G[M]_{/\P}$ where $\P$ is the maximal modular partition of $G[M]$: it is thereby the subgraph induced by a set containing a unique -- \emph{representative} -- vertex per maximal strong module of $G[M]$. See Figure~\ref{fig:partition-module}.
By extension, for a module $M$, we denote by $G_{/M}$ the graph quotiented by the modular partition $\{M\}\cup\{\{x\}\mid x\notin M\}$.
 
Before we state the \emph{modular decomposition theorem} (Theorem~\ref{th:dec-module}), let us present two more properties of modular partitions and quotient graphs which are central to efficient modular decomposition algorithms (see Section~\ref{sec:algo}).

\begin{lemma} \cite{Moh85} \label{lem:partition}
Let $\P$ be a modular partition of a graph $G=(V,E)$. Then $\X\subseteq\P$ is a module of $G_{/\P}$ iff $\bigcup_{M\in\X} M$ is a module of $G$.
\end{lemma}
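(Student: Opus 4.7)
The plan is to prove both directions by carefully exploiting Observation~\ref{obs:module-adj}: since the parts of $\P$ are pairwise disjoint modules, any two distinct $M, M' \in \P$ are either adjacent or non-adjacent in $G$, and this is exactly what defines edges of $G_{/\P}$. Writing $U = \bigcup_{M \in \X} M$, the goal on each side reduces to transferring the ``adjacent-or-not'' dichotomy between the quotient and the original graph.

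For the forward direction, I assume $\X$ is a module of $G_{/\P}$ and pick any vertex $y \notin U$. Let $M' \in \P$ be the part containing $y$; since $y \notin U$, we have $M' \notin \X$. By hypothesis, the quotient vertex $v_{M'}$ is either adjacent to every $v_M$ with $M \in \X$ or to none of them. In the former case, Observation~\ref{obs:module-adj} says $M'$ and $M$ are adjacent as modules in $G$, hence $y$ is adjacent to every vertex of $M$, and this holds for each $M \in \X$, so $y$ is adjacent to every vertex of $U$. In the latter case, the same reasoning gives that $y$ has no neighbour in $U$. Thus $U$ is homogeneous with respect to every outside vertex, so it is a module of $G$.

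For the reverse direction, I assume $U$ is a module of $G$ and take any $M' \in \P \setminus \X$; I must show $v_{M'}$ is either adjacent to all of $\{v_M : M \in \X\}$ or to none. Pick any $y \in M'$. Since $M' \cap U = \emptyset$ (the parts of $\P$ are disjoint) and $U$ is a module, $y$ is either adjacent to every vertex of $U$ or to none. Now for each $M \in \X$, choose any $z \in M$: because $M'$ is itself a module of $G$ (being a part of $\P$) and $z \notin M'$, the adjacency between $y$ and $z$ determines the adjacency between $M'$ and $M$, which by Observation~\ref{obs:module-adj} is the same as the adjacency between $v_{M'}$ and $v_M$ in $G_{/\P}$. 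Since $z$ lies in $U$, the dichotomy ``$y$ adjacent to all of $U$ vs.\ none'' translates uniformly across all $M \in \X$. Hence $v_{M'}$ is adjacent to every $v_M \in \X$ or to none, which is the module condition for $\X$ in $G_{/\P}$.

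The only subtlety, and what I would flag as the main thing to get right, is the double use of the module property: once for each $M \in \X$ to promote ``$y$ adjacent to one vertex of $M$'' into ``$y$ adjacent to all of $M$,'' and once for $M'$ to pass from a chosen representative $y$ to the whole part $M'$. Once this is handled cleanly via Observation~\ref{obs:module-adj}, the equivalence is essentially a translation between the partition-level and vertex-level definitions of adjacency, with no case analysis beyond the two dichotomies described.
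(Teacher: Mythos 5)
Your proof is correct and complete: both directions are the standard direct verification, using the definition of the quotient graph together with Observation~\ref{obs:module-adj} to transfer the adjacent/non-adjacent dichotomy between parts of $\P$ and their representative vertices, and you correctly handle the only delicate point, namely promoting a single vertex-level adjacency to a module-level one. The paper itself gives no proof of this lemma (it is cited to \cite{Moh85}), and your argument is exactly the natural one implicit in the paper's framework, so there is nothing further to compare.
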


Lemma~\ref{lem:partition} is illustrated on Figure~\ref{fig:partition-module}: for example, the set $\{2,3,4\}$ is a module of $G$, it is the union of modules $\{2,3\}$ and $\{4\}$ (which representative vertices are respectively $3$ and $4$ in $G_{/\mathcal{Q}}$) which belongs to partition $\mathcal{Q}$. It can be strengthened in order to observe the correspondance between the strong modules of $G$ and those of $G_{/\P}$.

\begin{lemma} \label{lem:partition-fort}
Let $\P$ be a modular partition of a graph $G=(V,E)$. Then $\X\subset\P$ is a non-trivial \emph{strong} module of $G_{/\P}$ iff $\bigcup_{M\in\X} M$ is a non trivial \emph{strong} module of $G$.
\end{lemma}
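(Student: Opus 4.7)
The plan is to leverage Lemma~\ref{lem:partition}, which already gives the bijection $\X\leftrightarrow U:=\bigcup_{M\in\X}M$ between subsets of $\P$ corresponding to modules of $G_{/\P}$ and modules of $G$ that are unions of parts of $\P$; only the preservation of the ``strong'' (no overlap) property needs to be established. For the $(\Leftarrow)$ direction I argue by contrapositive: if $\X$ overlaps some module $\Y$ of $G_{/\P}$, then since the parts of $\P$ are pairwise disjoint and non-empty, the overlap lifts to $U\overlap\bigcup_{M\in\Y}M$ in $V$, where the latter is a module of $G$ by Lemma~\ref{lem:partition}, contradicting $U$ strong.

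For the $(\Rightarrow)$ direction I would argue by contradiction: assume $U$ overlaps some module of $G$ and pick such an $M'$ of minimum cardinality. If $M'$ is a union of parts of $\P$, say $M'=\bigcup_{M\in\Y}M$, then $\Y$ is a module of $G_{/\P}$ by Lemma~\ref{lem:partition}, and picking witnesses in $U\cap M'$, $U\setminus M'$, $M'\setminus U$ and tracing them to their parts shows that $\X\cap\Y$, $\X\setminus\Y$, $\Y\setminus\X$ are all non-empty, hence $\X\overlap\Y$, contradicting $\X$ strong. Otherwise $M'$ cuts some part $M_i\in\P$, and one verifies $M'\not\subseteq M_i$ (else $M'\cap U\subseteq M_i\cap U$ would be either empty or contained in $U$, contradicting $M'\overlap U$); thus $M_i\overlap M'$, and Lemma~\ref{lem:mod-partitive} then yields $M'\setminus M_i$ (strictly smaller than $M'$) and $M'\cup M_i$ as modules of $G$.

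The main obstacle is the case analysis needed to close this reduction. When $M_i\in\X$, either some other part of $\X$ also meets $M'$ (and then $M'\setminus M_i$ still overlaps $U$, since its intersection with $U$ contains that other witness, violating minimality), or $M_i$ is the unique $\X$-part meeting $M'$; in the latter case the family $\Y':=\{M\in\P:M\cap M'\neq\emptyset\}$ corresponds to a module of $G$ (verified by a direct splitter argument using that each $M\in\P$ is itself a module and $M'$ is a module), hence to a module of $G_{/\P}$, and it overlaps $\X$ with $\{M_i\}\subseteq\X\cap\Y'$, any other part of $\X$ giving $\X\setminus\Y'\neq\emptyset$ (this is where $|\X|\geq 2$ is crucial), and the part containing any $w\in M'\setminus U$ giving $\Y'\setminus\X\neq\emptyset$. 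When $M_i\notin\X$, a parallel analysis either violates minimality via $M'\setminus M_i$, or else shows that $M'\cup M_i$ is a union of parts of $\P$ whose corresponding $\Y\subseteq\P$ overlaps $\X$ with $\{M_i\}\subseteq\Y\setminus\X$. In every configuration a contradiction arises, proving $U$ strong.
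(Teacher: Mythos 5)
The paper states this lemma without any proof (it is offered as a strengthening of Lemma~\ref{lem:partition}), so there is no author proof to match; your argument is a self-contained one and its architecture is sound. The $(\Leftarrow)$ direction is correct as written: disjointness of the parts lifts an overlap $\X\overlap\mathcal{Y}$ in the quotient to an overlap of the corresponding unions in $G$, and Lemma~\ref{lem:partition} makes the lifted set a module. For $(\Rightarrow)$, the minimum-cardinality choice of $M'$ combined with the partitivity of Lemma~\ref{lem:mod-partitive} is exactly the right engine: for any part $M$ cut by $M'$ one first checks no part can contain $M'$ (your verification of $M'\not\subseteq M_i$ via the overlap with $U$ is correct), so $M\overlap M'$ and $M'\setminus M$ is a strictly smaller module, and minimality then forces the strong uniqueness statements ($M'\cap U\subseteq M$ when $M\in\X$, $M'\setminus U\subseteq M$ when $M\notin\X$) on which both of your subcases rest. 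Your subcase $M_i\in\X$ closes correctly, including the construction of $\mathcal{Y}'=\{M\in\P: M\cap M'\neq\emptyset\}$ as a module of the quotient and the use of $|\X|\geq 2$ for $\X\setminus\mathcal{Y}'\neq\emptyset$.

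One point needs tightening in the subcase $M_i\notin\X$: the dichotomy as stated is not exhaustive. Failure of the minimality branch only yields $M'\setminus U\subseteq M_i$, which forces every \emph{other} part meeting $M'$ to belong to $\X$ -- but such an $\X$-part could itself be cut by $M'$, in which case $M'\cup M_i$ is \emph{not} a union of parts, contrary to your ``or else'' conclusion. The repair is cheap and entirely within your own toolkit: whenever some part of $\X$ is cut by $M'$, run your $M_i\in\X$ case on that part instead (equivalently, choose the cut part $M_i$ inside $\X$ whenever possible); once no cut part lies in $\X$, every part of $\X$ meeting $M'$ is contained in $M'$, and then $M'\cup M_i$ really is a union of parts and your overlap $\X\overlap\mathcal{Y}$ with $M_i\in\mathcal{Y}\setminus\X$ closes the case (correctly needing no $|\X|\geq 2$ there). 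A last cosmetic remark, about the statement rather than your proof: the ``non-trivial'' correspondence is asymmetric -- if $\X=\{M\}$ is a single part with $|M|\geq 2$, then $U=M$ can be a non-trivial strong module of $G$ while $\X$ is trivial in $G_{/\P}$ -- so the lemma implicitly assumes $|\X|\geq 2$, which is precisely the hypothesis you flag as crucial; making that assumption explicit at the outset would make your write-up airtight.
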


The inclusion tree of the strong modules of $G$, denoted $MD(G)$, entirely represents the graph if the representative graph of each strong module is attached to each of its nodes (see Figure~\ref{fig:md-tree}). Indeed any adjacency of $G$ can be retrieved from $MD(G)$. Let $x$ and $y$ be two vertices of $G$ and let $G_N$ be the representative graph of node $N$, their least common ancestor. Then $x$ and $y$ are adjacent in $G$ if and only if their representative vertices in $G_N$ are adjacent.

\begin{figure}[htp]
\centerline{\scalebox{1}{\includegraphics[width=11cm]{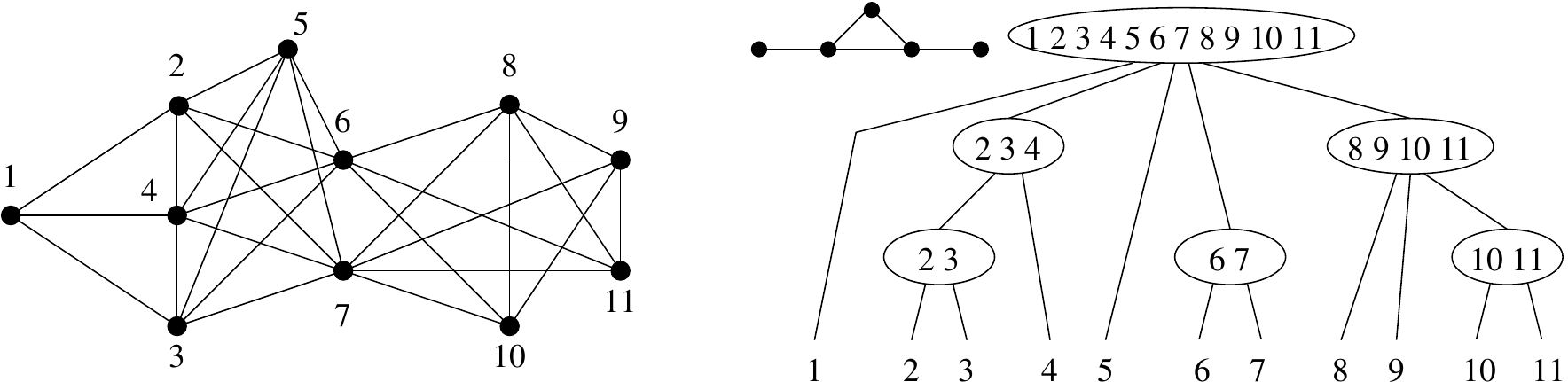}}}
\caption{The inclusion tree $MD(G)$ of the strong modules of $G$. The representative graph associated to the root is $G_{/\P}$ with $\P=\{\{1\},\{2,3,4\},\{5\},\{6,7\},\{8,9,10,11\}\}$, the parts of which correspond to the children of the root.}
\label{fig:md-tree}
\end{figure}

Let us recall that a graph is \emph{prime} if it only contains trivial modules.

\begin{theorem}[Modular decomposition theorem] \cite{Gal67,CHM81} \label{th:dec-module}\\
For any graph $G=(V,E)$, one of the following three conditions is satisfied:
\vspace{-0.4cm}
\begin{enumerate}
\item $G$ is not connected;
\item $\overline{G}$ is not connected;
\item $G$ and $\overline{G}$ are connected and the quotient graph $G_{/\P}$, with $\P$ the maximal modular partition of $G$, is a prime graph.
\end{enumerate}
\end{theorem}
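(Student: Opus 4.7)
The three conditions are clearly mutually exclusive in their statement on $G$ and $\overline{G}$, so the only thing to prove is that whenever both $G$ and $\overline{G}$ are connected, the quotient graph $G_{/\P}$ (by the maximal modular partition $\P$) is prime. I will argue by contradiction: assume $G$ and $\overline{G}$ are both connected and that $G_{/\P}$ has some non-trivial module $\X$, and derive a contradiction. Throughout I use that $\P$ is itself a partition of $V$ (since the strong modules are nested, their maximal proper elements are pairwise disjoint and cover $V$) and that $\M(G)$ is partitive by Lemma~\ref{lem:mod-partitive}.

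\smallskip
\noindent\textbf{Step 1: no non-trivial strong module in $G_{/\P}$.} First I would show that $G_{/\P}$ has no non-trivial strong module. Suppose $\X\subset\P$ with $2\leqslant|\X|<|\P|$ were a strong module of $G_{/\P}$. By Lemma~\ref{lem:partition-fort}, $N=\bigcup_{M\in\X}M$ would be a non-trivial strong module of $G$. But then $N$ is a strong module with $M_i\subsetneq N\subsetneq V$ for any $M_i\in\X$, contradicting the fact that $M_i\in\P$ is a \emph{maximal} strong (proper) module of $G$. Hence the only strong modules of $G_{/\P}$ are the trivial ones: the singletons and the whole vertex set.

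\smallskip
\noindent\textbf{Step 2: ruling out non-strong modules of $G_{/\P}$.} Now I apply Theorem~\ref{th:partitive} to the partitive family $\M(G_{/\P})$: any element of $\M(G_{/\P})$ is either strong, or is the union of a strict subset of children of some degenerate strong element of $G_{/\P}$. Since singletons have no children, by Step 1 the only candidate for a non-strong non-trivial module is the union of some strict subset of children of the \emph{root} of the strong module tree, and this requires the root to be degenerate. So if $G_{/\P}$ is not prime, the root of its strong module tree must be degenerate; that is, \emph{every} union of singletons of $G_{/\P}$ is a module.

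\smallskip
\noindent\textbf{Step 3: a degenerate root forces $G$ or $\overline{G}$ to be disconnected.} In particular every pair $\{v_i,v_j\}$ of vertices of $G_{/\P}$ is a module. A short case check on any triple shows $G_{/\P}$ must then be either edgeless or complete: if $v_iv_j$ is an edge, then for every other $v_k$, homogeneity of $\{v_i,v_j\}$ with respect to $v_k$ forces $v_kv_i$ and $v_kv_j$ to agree, and homogeneity of $\{v_i,v_k\}$ with respect to $v_j$ then propagates the edge everywhere; likewise a non-edge propagates to give an empty graph. If $G_{/\P}$ is edgeless, the parts of $\P$ are pairwise non-adjacent in $G$, so $G$ is disconnected, contradicting hypothesis; if $G_{/\P}$ is complete, the same argument applied to $\overline{G}_{/\P}$ shows $\overline{G}$ is disconnected. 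Either way we contradict our assumption, so $G_{/\P}$ has no non-trivial module and is prime.

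\smallskip
\noindent\textbf{Main obstacle.} The genuinely delicate point is Step 2: it is tempting to prove primeness by only looking at \emph{strong} modules of $G_{/\P}$ (which is easy via Lemma~\ref{lem:partition-fort}), but one must also exclude overlapping, non-strong modules of $G_{/\P}$. This is precisely where the structure theorem for partitive families (Theorem~\ref{th:partitive}) is indispensable: it guarantees that every non-strong module hides under a degenerate node, reducing the analysis to the behaviour of the root of $MD(G_{/\P})$ and thus to the (co-)connectivity dichotomy.
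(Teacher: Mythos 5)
The paper itself offers no proof of Theorem~\ref{th:dec-module}: it is stated with citations to Gallai and to Chein--Habib--Maurer only, so there is no in-text argument to compare against. Judged on its own, your proof is correct, and it is exactly the derivation that the survey's setup invites: reduce the theorem to the claim that $G_{/\P}$ is prime when $G$ and $\overline{G}$ are connected, kill non-trivial \emph{strong} modules of the quotient via Lemma~\ref{lem:partition-fort} and the maximality of the parts of $\P$, then invoke the structure theorem for partitive families (Theorem~\ref{th:partitive}, applicable to $G_{/\P}$ by Lemma~\ref{lem:mod-partitive}) to confine any remaining non-trivial module to a strict union of children of a degenerate root, and finally observe that a graph in which every pair of vertices is a module is complete or edgeless, contradicting the connectivity of $\overline{G}$ or of $G$. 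Your identification of Step 2 as the delicate point is apt: handling overlapping non-strong modules of the quotient is precisely what Theorem~\ref{th:partitive} is for, and a proof that checked only strong modules would be incomplete. Two minor points you should make explicit: the argument of Step 3 needs $|\P|\geqslant 2$, which holds because the maximal modular partition is by definition non-trivial (and indeed exists and is unique when $G$ and $\overline{G}$ are connected, as your parenthetical nesting argument shows); and conditions 1 and 2 are mutually exclusive not merely ``in their statement'' but because a disconnected graph always has a connected complement --- though since the theorem only asserts that at least one condition holds, nothing in your proof depends on this.
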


What does the modular decomposition theorem say is twofold. First, the quotient graphs associated with the nodes of the inclusion tree $MD(G)$ of the strong modules are of three types: an \emph{independent set} if $G$ is not connected (the node is labelled \emph{parallel}); a \emph{clique} (complete graph) if $\overline{G}$ is not connected (the node is labelled \emph{series});  a prime graph otherwise. It also follows that $MD(G)$ is unique and does not contain two consecutive series nodes nor two consecutive parallel nodes. Parallel and series nodes of $MD(G)$ are also called \emph{degenerate} nodes.

The tree $MD(G)$ is called the \emph{modular decomposition tree}. Theorem~\ref{th:dec-module} yields a natural polynomial time recursive algorithm to compute $MD(G)$: 1) compute the maximal modular partition $\P$ of $G$; 2) label the root node according to the parallel, series or prime type of $G$; 3) for each module $M$ of $\P$, compute $MD(G[M])$ and attach it to the root node. A subproblem central to the computation of $MD(G)$ is to compute the maximal modular partition, a task which can be avoided if a non-trivial module $M$ is identified. This yields another natural algorithm scheme: by Lemma~\ref{lem:partition} and Lemma~\ref{lem:partition-fort}, it suffices to recursively compute $MD(G[M])$ and $MD(G_{/M})$, and then to paste $MD(G[M])$ on the leaf of $MD(G_{/M})$ corresponding to the representative vertex of $M$. As suggested by Cowan et al.~\cite{CJS72}, a naive way to compute a non-trivial module is to follow the definition of module and Observation~\ref{obs:splitter}. Assume the graph $G$ contains a non-trivial module $M$. Then $M$ contains a pair of vertices $\{x,y\}$ and as a module is closed under adding splitters. Such an algorithm would find a non-trivial module, if any, in time $O(n^2(n+m))$. We should note that for some generalizations of the modular decomposition, no better algorithm than this "closure by splitter" approach is known (see e.g.~\cite{BHLM09}).

Before we present some structural properties of prime and totally decomposable graphs, let us introduce some notations and briefly discuss the composition view of the theory of modules in graphs.

\begin{notation}
For a node  $p$ of $MD(G)$, its corresponding strong module  is denoted by $M(p)$ (or $P$). In fact $M(p)$ is the union of all singletons which are leaves of the subtree of $M(p)$ rooted in $p$.\\
The minimal strong module containing two vertices $x$ and $y$ is  denoted by $m(x,y)$, while  the maximal strong module containing $x$ but not $y$, for any two different vertices  $x, y$ of $G$, is denoted by $M(x,\overline{y})$.
\end{notation}

The \emph{substitution}\index{substitution} operation is the reverse of the quotient operation. It consists of replacing a vertex $x$ of $G$ by a graph $H=(V',E')$ while preserving the neighourhood. The resulting graph is:
$$G_{x\rightarrow H}=((V\setminus\{x\})\cup V', (E\setminus\{xy\in E\})\cup E'\cup \{yz:xy\in E\mbox{ et } z\in V'\})$$ 

The \emph{parallel composition} \index{parallel composition} or \emph{disjoint union} of $k$ connected graphs $G_1,\dots G_k$ defines a graph whose connected components are the graphs $G_1,\dots, G_k$. This composition operation is usually denoted $G_1\oplus\dots \oplus G_k$.

The \emph{series composition} \index{series composition} of $k$ co-connected graphs $G_1,\dots, G_k$ defines a graph whose co-connected components are the graphs $G_1,\dots, G_k$ (for any pair $x,y$ of vertices belonging to different graphs $G_i$ and $G_j$, the edge $xy$ has been added). The series composition is generally  denoted  $G_1\otimes\dots \otimes G_k$.

These three operations are classical graph operations that have been widely used in various contexts among which the clique-width theory~\cite{CER93}.

\subsection{Prime graphs}

The structure of prime graphs has been extensively studied (\eg see~\cite{ER90,ST93b,CI98}). For example, it is easy to check that the smallest prime graph is the $P_4$\index{$P_4$},
the path on $4$ vertices (see Figure~\ref{fig:p4}). As witnessed by the following result, $P_4$'s play an important role in the structure of prime graphs.

\begin{lemma}\cite{CI98} \label{lem:p4}
Let $G$, with $|G| \geq 4$, be a prime graph. Then any vertex, but at most one, is contained in an induced $P_4$. A vertex not contained in any $P_4$ is called the "\emph{nose of the bull}" (see Figure~\ref{fig:p4}).
\end{lemma}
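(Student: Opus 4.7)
The plan is to argue the contrapositive: I will show that if a graph $G$ on at least $4$ vertices contains two distinct vertices $x, y$ neither of which lies on an induced $P_4$, then $G$ has a non-trivial module, hence is not prime. The key reduction is the following: I want to prove that $\{x,y\}$ itself is a module. Since $|V|\geqslant 4$, this immediately contradicts primality, because $\{x,y\}$ is neither a singleton nor all of $V$.

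To show $\{x,y\}$ is a module, I pick an arbitrary $z\in V\setminus\{x,y\}$ and argue $z$ is not a splitter of $\{x,y\}$. Suppose for contradiction that $z\sim x$ and $z\not\sim y$ (the other case is symmetric). I then split on whether $xy\in E$ or not. By Theorem~\ref{th:dec-module} (and the observation that a disconnected or co-disconnected graph cannot be prime since its components are non-trivial modules), both $G$ and $\overline{G}$ are connected. Moreover, the complement of a $P_4$ is again a $P_4$, so $x$ and $y$ remain $P_4$-isolated in $\overline{G}$ and the complementation swaps the edge/non-edge of $xy$. This symmetry lets me reduce to the single case $xy\in E$, in which $y$-$x$-$z$ forms an induced $P_3$.

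Now the heart of the argument is to show this induced $P_3$ cannot be completed consistently. I will look at every vertex $w\in V\setminus\{x,y,z\}$ and examine its adjacency pattern to $\{x,y,z\}$. Two patterns are immediately lethal: if $w\sim y$ and $w\not\sim x$ and $w\not\sim z$, then $w$-$y$-$x$-$z$ is an induced $P_4$ containing both $x$ and $y$, contradicting their $P_4$-isolation; symmetrically, $w\sim z$ with $w\not\sim x,y$ yields the $P_4$ $y$-$x$-$z$-$w$. These two exclusions force $N(y)\subseteq N(x)\cup N(z)\cup\{z\}$ and $N(z)\subseteq N(x)\cup N(y)\cup\{y\}$ outside the triple $\{x,y,z\}$. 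I then use the primality once more: since the only module containing $\{x,y\}$ in a prime graph is $V$ itself, the splitter-closure process applied to $\{x,y\}$ must eventually reach all of $V$. Iterating the splitter step and using the two exclusions above, together with the $P_4$-isolation of $x$ (applied to $P_3$s of the form $c$-$y$-$x$ for $c\in N(y)\setminus N(x)$, which produces induced $P_4$s on five vertices), I reach the desired contradiction.

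The main obstacle is the last step: the local analysis of $w\in\{x,y,z\}^c$ gives quick contradictions for only two out of the eight adjacency patterns, and the remaining patterns (typically those producing an induced $C_4$ on $\{x,y,z,w\}$) must be ruled out by invoking a second splitter and tracking how the induced structure propagates. Making this propagation clean --- i.e., exhibiting in every surviving configuration an explicit induced $P_4$ containing $x$ or $y$ without an unbounded case split --- is where the real work lies. The cleanest route I foresee is to argue by induction on the distance in $G$ between $y$ and the current ``bad'' vertex: a shortest path from $y$ to a witness $z$ in $G$ (which exists since $G$ is connected) must, at its last step before reaching $z$, supply the vertex $w$ needed to close a $P_4$ through $y$.
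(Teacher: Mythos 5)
Your overall strategy (assume $G$ prime with two vertices $x,y$ lying on no induced $P_4$, reduce by complementation to $xy\in E$ with a splitter $z\sim x$, $z\not\sim y$, and contradict primality via the splitter closure of $\{x,y\}$, which by Observation~\ref{obs:splitter} must reach all of $V$) is legitimate, and your two lethal patterns check out: $w\sim y$, $w\not\sim x,z$ gives the induced path $w$-$y$-$x$-$z$, and $w\sim z$, $w\not\sim x,y$ gives $y$-$x$-$z$-$w$. Note that the paper itself gives no proof of this lemma --- it is cited from Cournier and Ille --- so your attempt must stand on its own, and as written it does not: the decisive step is announced but never carried out, as you yourself concede (``where the real work lies''). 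Six of the eight adjacency patterns of an outside vertex $w$ survive your local analysis, and the mechanisms you propose to kill them fail as stated. In particular, the induction ``on the distance from $y$'' stalls at the very first nontrivial case: since $y$ lies on no $P_4$ and shortest paths are induced, every vertex is within distance $2$ of $y$, so the ``last step before $z$'' vertex $w$ satisfies $w\sim y$ and $w\sim z$; either $w=x$ (and nothing new is learned) or, if $w\not\sim x$, the set $\{x,y,w,z\}$ induces the $C_4$ $x$-$y$-$w$-$z$-$x$, which contains no induced $P_4$ at all --- exactly the configuration you flag as problematic and then leave unresolved. Breaking that $C_4$ requires a genuinely new move (for instance: primality says the diagonal pair $\{x,w\}$ is not a module, so it has a splitter $u$, and one must run a second full case analysis on $u$'s adjacencies to $x,y,z,w$ to extract a $P_4$ through $x$ or $y$); nothing in the proposal performs, or even sketches, this propagation.

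A concrete missing idea that would make the plan workable is to use splitters of $\{x,y\}$ \emph{on both sides} simultaneously rather than analysing one splitter vertex-by-vertex. Writing $S_x$ for the splitters adjacent to $x$ only and $S_y$ for those adjacent to $y$ only: if $z\in S_x$, $w\in S_y$ and $z\not\sim w$, then $z$-$x$-$y$-$w$ is an induced $P_4$ through both $x$ and $y$, so $S_x$ must be complete to $S_y$; similarly a vertex $n$ adjacent to neither $x$ nor $y$ cannot see any $z\in S_x$ (else $n$-$z$-$x$-$y$ is induced), so the nonneighbours of $\{x,y\}$ are anticomplete to $S_x\cup S_y$. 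This global partition of $V$ by adjacency type to the edge $xy$, combined with the connectivity of $G$ and $\overline{G}$ guaranteed by Theorem~\ref{th:dec-module}, is what actually yields a nontrivial module and the contradiction --- e.g., in the subcase where the common-neighbour and nonneighbour classes are empty, $\{x\}\cup S_y$ (or $\{y\}\cup S_x$ if $S_y=\emptyset$) is already a nontrivial module. Your single-splitter, single-vertex analysis cannot reach such a conclusion, precisely because, as your own pattern list demonstrates, one splitter never suffices.
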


\begin{figure}[htp]
\centerline{\scalebox{1}{\includegraphics[width=8cm]{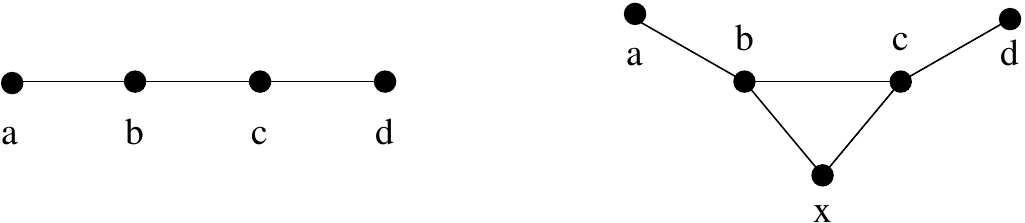}}}
\caption{
The vertices $a,b,c,d$ form a $P_4$ whose \emph{extremities} are $a$ and $d$, and midpoints $b$ and $c$. The graph on the right is the \emph{bull} whose\emph{"nose"}  is vertex $x$.}
\label{fig:p4}
\end{figure}

The next property shows that one can always remove one or two vertices from a large enough prime graph to obtain a new prime graph.

\begin{lemma} \cite{ER90,ST93b}
Let $G=(V,E)$ be a prime graph with at least $5$ vertices. Then there exists a subset of vertices $X$ such that $|V|-2\leqslant |X|\leqslant |V|-1$ and $G[X]$ is prime.
\end{lemma}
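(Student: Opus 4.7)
The plan is to split on whether the desired $X$ has size $|V|-1$ or $|V|-2$, tackling the easy case first. First, I would check whether some $v\in V$ satisfies that $G-v$ is prime: if so, the set $X=V\setminus\{v\}$ already meets the conclusion. So from here on, assume the stronger hypothesis that for every $v\in V$, the subgraph $G-v$ has a non-trivial module. The goal is then to exhibit two vertices $v,w$ such that $G-\{v,w\}$ is prime.

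For each vertex $v$, fix a non-trivial strong module $M_v$ of $G-v$ (for definiteness I would take $M_v$ to be minimal in size among non-trivial strong modules, breaking ties arbitrarily). Primality of $G$ forces $M_v$ itself not to be a module of $G$, so by Observation~\ref{obs:splitter} the vertex $v$ must split $M_v$; write $A_v=M_v\cap N(v)$ and $B_v=M_v\setminus N(v)$, both non-empty. I would then choose a vertex $v$ for which $|M_v|$ is minimum, and choose $w\in M_v$. Candidates for $w$ are the vertices that play a ``separating'' role inside $M_v$: for instance $w\in A_v$ when $|A_v|=1$, or more generally a vertex whose removal prevents $G[M_v\setminus\{w\}]$ from inheriting any non-trivial module from $G[M_v]$ (which one can arrange using the minimality of $|M_v|$ combined with the partitive structure given by Lemma~\ref{lem:mod-partitive}). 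The hypothesis $|V|\geqslant 5$ is used here to guarantee enough room: it ensures that $|M_v|\geqslant 2$ and $|V\setminus(M_v\cup\{v\})|\geqslant 1$, so neither side of the split is degenerate.

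The main obstacle is to rule out every non-trivial module $M'$ of $G-\{v,w\}$. I would classify such a hypothetical $M'$ into three types: (i) $M'\subseteq M_v\setminus\{w\}$, (ii) $M'\subseteq V\setminus(M_v\cup\{v,w\})$, and (iii) $M'$ crosses the boundary of $M_v$, i.e.\ meets both $M_v\setminus\{w\}$ and $V\setminus(M_v\cup\{v,w\})$. Type (i) is killed by the choice of $w$ as a splitter inside $M_v$: any such $M'$ would yield a non-trivial strong module of $G-v$ strictly smaller than $M_v$, contradicting minimality. Type (ii) is handled similarly by viewing $M'$ together with $M_v$ inside $G-v$: since $M_v$ is a module of $G-v$ and $M'$ is disjoint from $M_v$, the set $M'$ would also be a module of $G-v$, and then $v$ splits $M'$ (else it is a module of $G$), so a careful re-application of the minimality of $|M_v|$ (swapping $M_v$ for $M'$ if necessary) produces a contradiction.

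The genuinely hard case is type (iii). Here I would exploit that every vertex of $V\setminus(M_v\cup\{v\})$ is uniform with respect to $M_v$ (because $M_v$ is a module of $G-v$). Intersecting $M'$ with $M_v\setminus\{w\}$ and using the partitive closure properties of $\M$ (Lemma~\ref{lem:mod-partitive}) in the graph $G-\{v,w\}$, one extracts from $M'$ a non-trivial module of $G-v$ contained in $M_v$, again contradicting the minimality of $|M_v|$; the remaining edge cases, where the intersection or the complement is a singleton, are resolved by examining the adjacency of $v$ to the vertices of $M'$ and deriving that some proper subset of $V$ would be a module of $G$ itself, contradicting primality. The hypothesis $|V|\geqslant 5$ is what prevents the small-case exceptions (for $|V|=4$ the $P_4$ is prime but removing any two vertices leaves only an edge or non-edge, which is not a counterexample but explains why the bound is tight at $5$). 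Altogether, the absence of modules of all three types shows $G-\{v,w\}$ is prime, completing the proof.
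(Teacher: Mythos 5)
The survey states this lemma without proof, citing \cite{ER90,ST93b}, so your attempt can only be measured against the classical arguments; those proceed bottom-up (every prime graph on at least $4$ vertices contains an induced $P_4$, and a prime induced subgraph $G[X]$ with $|V\setminus X|\geqslant 2$ can always be extended by \emph{two} vertices to a larger prime induced subgraph), which sidesteps exactly the difficulties your top-down surgery runs into. Your proposal has a genuine gap already in case (i). If $|M_v|\geqslant 3$, then for \emph{every} choice of $w\in M_v$ the set $M_v\setminus\{w\}$ is a non-trivial module of $G-\{v,w\}$: each vertex outside $M_v\cup\{v\}$ is uniform on $M_v$ (since $M_v$ is a module of $G-v$), hence on $M_v\setminus\{w\}$, and this set is proper because $M_v\neq V\setminus\{v\}$. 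No ``separating'' choice of $w$ can prevent this, so your strategy can only possibly work if one first proves that $v$ can be chosen with $|M_v|=2$, i.e.\ that in a graph all of whose one-vertex-deleted subgraphs are decomposable, some deletion creates a pair of twins. That statement is true, but it is essentially the heart of the Schmerl--Trotter structure theory of critically indecomposable graphs, and you neither state nor prove it. Note also that the per-vertex minimum can exceed $2$: in the half-graph on $\{a_1,a_2,a_3,b_1,b_2,b_3\}$ with $a_ib_j\in E$ iff $i\leqslant j$ (which is prime and critically indecomposable), the unique non-trivial module of $G-a_1$ is $\{a_2,a_3,b_2,b_3\}$, of size $4$.

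There is a second gap: even granting $M_v=\{a,b\}$ and $w=a$, your cases (ii) and (iii) do not end in a contradiction. In case (ii) one correctly deduces that $M'$ is a module of $G-v$ (twinness of $a$ and $b$ transfers $b$'s uniformity on $M'$ to $a$), and in case (iii) the set $M'\cup\{a\}$ is likewise a non-trivial module of $G-v$; but $G-v$ is decomposable by your standing hypothesis, so the existence of such modules contradicts nothing. Minimality of $|M_v|$ yields only $|M'|\geqslant 2$, which is vacuous, and ``swapping $M_v$ for $M'$'' is not a descent, since no quantity decreases. Concretely, if $G-v$ has a non-trivial module disjoint from $\{a,b\}$, or one strictly containing $\{a,b\}$, then that module (respectively, that module minus $a$) survives as a non-trivial module of $G-\{v,a\}$, which is then not prime. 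So the correct choice of $v$ must guarantee that the twin pair is essentially the only non-trivial module of $G-v$ --- again a structural fact about critically indecomposable graphs that is absent from your argument. To repair the proof you would either have to develop that structure theory, or switch to the standard two-vertex extension lemma of \cite{ER90} and grow a prime subgraph from a $P_4$ until its complement in $V$ has size at most two.
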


Jamison and Olariu proposed an extension of Theorem~\ref{th:dec-module} by considering the structure of prime graphs~\cite{JO95}.  A subset $C$ of vertices of a graph $G=(V,E)$ is \emph{$P$-connected} if for any bipartition $\{A,B\}$ of $C$, there is an induced  $P_4$ intersecting both $A$ and $B$. For example the bull is not $P$-connected (consider the vertex partition $\{\{x\},\{a,b,c,d\}\}$). A $P$-connected component is a maximal $P$-connected set of vertices. The set of $P$-connected components defines a partition of the vertices. A $P$-connected component $H$ is \emph{separable} if there is a bipartition $(H_1,H_2)$ of $H$ such that for any $P_4$ intersecting $H_1$ and $H_2$, the extremities are in $H_1$ and the mid-vertices in $H_2$.

\begin{theorem} \cite{JO95} \label{th:p-connected}
Let  $G=(V,E)$ be a connected graph such that $\overline{G}$ is connected. then $G$ is either $P$-connected or there exists a unique $P$-connected component $H$ which is separable in $(H_1,H_2)$ such that for any vertex $x\notin H$, $H_1\subseteq N(x)$ and $H_2\cap N(x)=\emptyset$.
\end{theorem}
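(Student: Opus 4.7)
The plan is to leverage the modular decomposition theorem (Theorem~\ref{th:dec-module}) together with the ``nose of the bull'' structure (Lemma~\ref{lem:p4}), pulling back the analysis from the prime quotient $G_{/\P}$ to $G$ itself. Since $G$ and $\overline{G}$ are connected, Theorem~\ref{th:dec-module} guarantees that the quotient $G_{/\P}$ by the maximal modular partition $\P = \{M_1, \dots, M_k\}$ is a prime graph on $k \geq 4$ vertices. By Lemma~\ref{lem:p4}, every representative vertex of $G_{/\P}$ lies on an induced $P_4$ of the quotient, with at most one exception, which corresponds to a unique maximal strong module $M^*$ (the ``nose'' module of $G_{/\P}$).

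First I would set up a correspondence between induced $P_4$'s of $G$ whose four vertices lie in four distinct maximal strong modules and induced $P_4$'s of $G_{/\P}$. Module homogeneity (Observation~\ref{obs:module-adj}) makes both the lift (choose arbitrary representatives) and the projection work. Using the primality of $G_{/\P}$, I would then argue that the non-nose representatives are ``$P_4$-chain-connected'' in the sense that any two of them can be joined by a sequence of induced $P_4$'s of $G_{/\P}$ whose consecutive members share a vertex. Lifting this chain into $G$ shows that all vertices of $G$ outside $M^*$ lie in a common $P$-connected component $H$. If no nose module exists, an analogous argument yields that $V$ itself is $P$-connected, and we are in the first alternative of the theorem. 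Otherwise, the fact that $\overline{m^*}$ lies on no $P_4$ of $G_{/\P}$ forbids any $P_4$ of $G$ from involving a vertex of $M^*$ together with three vertices drawn from three distinct outside modules; from this I would deduce that vertices of $M^*$ contribute only trivial $P$-connected components of $G$, establishing the uniqueness of $H = V \setminus M^*$.

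Next I would obtain the separability of $H$ and the external adjacency simultaneously from the modular character of $M^*$. Since $M^*$ is a module, every $y \in H$ is either adjacent to all of $M^*$ or to none of $M^*$; partition $H$ into the two sides $H_1, H_2$ accordingly, so that the external adjacency condition $H_1 \subseteq N(x)$, $H_2 \cap N(x) = \emptyset$ for $x \in M^*$ holds by construction. It remains to verify that every induced $P_4$ intersecting both $H_1$ and $H_2$ distributes its extremities and midpoints consistently with the stated pattern. I would argue by contradiction: given an offending $P_4$ in which the extremities and midpoints are ``wrongly'' split between $H_1$ and $H_2$, I would extend or modify it by appending a vertex of $M^*$ (exploiting its uniform adjacency to one side and non-adjacency to the other) to produce a new induced $P_4$ of $G$ whose projection involves $\overline{m^*}$, contradicting the nose property.

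The main obstacle is the $P_4$-chain-connectivity claim for the non-nose representatives of a prime graph: in any prime graph on at least four vertices, any two non-nose vertices must be linked by a chain of overlapping induced $P_4$'s. This is plausible from Lemma~\ref{lem:p4} and primality, but a rigorous proof likely proceeds by induction on $|G_{/\P}|$, using the removal result that a prime graph on at least five vertices admits a prime induced subgraph obtained by deleting one or two vertices. A secondary difficulty is the case analysis in the separability step, where one must carefully choose the vertex of $M^*$ appended to the offending $P_4$ so as to produce a contradiction in every placement of the extremities and midpoints across $H_1$ and $H_2$.
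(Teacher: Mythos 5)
The survey itself gives no proof of this theorem --- it is quoted from Jamison and Olariu [JO95], whose own development builds a theory of $P$-connected graphs directly (via $p$-chains and the structure of separable $p$-connected graphs) rather than going through the modular quotient. Your route through Theorem~\ref{th:dec-module} and the nose of Lemma~\ref{lem:p4} is therefore genuinely different, and much of it is sound: an induced $P_4$ meets any module in at most one vertex or lies wholly inside it, so the $P_4$'s of $G$ transversal to the maximal modular partition correspond exactly to $P_4$'s of the prime quotient, and no $P_4$ of $G$ crosses the boundary of the nose module $M^*$. Your separability step also works and is a finite check: for $u\in M^*$ and an induced $P_4$ $w$--$x$--$y$--$z$ inside $H=V\setminus M^*$, running through all $14$ possible traces $N(u)\cap\{w,x,y,z\}$ shows that every trace except exactly the midpoint pair $\{x,y\}$ creates an induced $P_4$ through $u$, which would project to a quotient $P_4$ through the nose. (Note this forces the \emph{midpoints} onto the side seen by $M^*$, consistent with the bull, where the nose sees $b,c$; the survey's definition of separable, as literally stated, has the two sides transposed.)

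There are, however, two genuine gaps. First, the load-bearing claim --- that in a prime graph any two non-nose vertices are joined by a chain of overlapping induced $P_4$'s, so that $H$ (or all of $V$ when no nose exists) is $P$-connected --- is essentially the whole content of the theorem, and you leave it at ``plausible by induction using the removal lemma''. That induction is delicate: deleting one or two vertices can change which vertex is the nose, the prime induced subgraph need not contain both target vertices, and the chains obtained inductively must be re-anchored through the reinstated vertices; without carrying this out, the core of the proof is missing. Second, the assertion that ``vertices of $M^*$ contribute only trivial $P$-connected components'' is false: $G[M^*]$ is arbitrary (substitute a $P_4$ into the nose of the bull), so $M^*$ can contain nontrivial $P$-components, and your uniqueness argument, which rests on that assertion, collapses. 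Uniqueness must instead be extracted from the separation property itself: if $H'$ is any $P$-component with the stated property, then $H'_1$ and $H'_2$ are both nonempty (else $G$ or $\overline{G}$ is disconnected), so $V\setminus H'$ is a nonempty proper module of $G$, hence contained in a single part of the maximal modular partition by primality of the quotient; if that part were not the nose, lifting a quotient $P_4$ through it with its representative chosen in $V\setminus H'$ would give a $P_4$ meeting both $H'$ and its complement, contradicting the maximality of $H'$ as a $P$-component; thus $V\setminus H'\subseteq M^*$, so $H\subseteq H'$ and maximality of $H$ gives $H'=H$. With the chain lemma proved and the uniqueness step repaired along these lines, your quotient-based argument would constitute a correct alternative to [JO95].
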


A hierarchy of graph families have been proposed based on the above Theorem \ref{th:p-connected} by restricting the number of induced $P_4$'s in small subgraphs (or equivalently by restricting  the structure of prime graphs). For example, $P_4$-sparse graphs are defined as the graphs for which there is at most one $P_4$ in any induced subgraph on $5$ vertices~\cite{JO92,JO92b}. Let us also mention the the $P_4$-reducible graphs~\cite{JO95}. See~\cite{BLS99} for a complete presentation of these graph families.

\subsection{Totally decomposable graphs}
\label{sub:cographs}

\index{cograph} 
A graph is \emph{totally decomposable} if any induced subgraph of size at least $4$ has a non-trivial module. As any prime graph contains a $P_4$, it follows from Theorem~\ref{th:partitive} that any node of the modular decomposition tree $MD(G)$  of a totally decomposable graph $G$ is degenerate.

The family $\mathcal{F}$ of totally decomposable graphs is  natural and arose in many different contexts (see  \cite{Sum73,CLS81,CPS85} for references) even recently (see \cite{BBCP05,BRV07}) as any graph of $\F$ can be obtained by a sequence of disjoint and series compositions starting from single vertex graph. Let us remark that if $G$ is totally decomposable then also is its complement. The family of totally decomposable graphs is also known as the \emph{cographs}  for  \emph{complement reducible graphs}~\cite{CLS81,Sum73}. From definition, the cograph family is \emph{hereditary} (any induced subgraph of a cograph is a cograph). It also has a very simple forbidden subgraph characterization.

\begin{theorem} \cite{Sum73} \label{theo:p4free}
The cographs are exactly the $P_4$-free graphs.
\end{theorem}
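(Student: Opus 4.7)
The plan is to prove both implications using Theorem~\ref{th:dec-module} (the three-way dichotomy for $G$, $\overline G$, and the quotient) together with Lemma~\ref{lem:p4}.

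For the easy direction, if $G$ is a cograph then, by definition, every induced subgraph $H$ with $|H|\geqslant 4$ has a non-trivial module. A routine inspection of $P_4$ (with vertices $a,b,c,d$ and edges $ab$, $bc$, $cd$) shows that every pair of vertices admits a splitter, so the only modules of $P_4$ are trivial. Hence $P_4$ cannot appear as an induced subgraph of $G$.

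For the converse, let $G$ be $P_4$-free. Since the class of $P_4$-free graphs is clearly hereditary, it suffices to prove: \emph{every $P_4$-free graph $H$ with $|H|\geqslant 4$ possesses a non-trivial module.} I would apply Theorem~\ref{th:dec-module} to such an $H$. In the first two cases ($H$ or $\overline H$ disconnected), any connected (or co-connected) component is a proper non-empty subset of $V(H)$ and is automatically a module, which is non-trivial because there are at least two such components and $|H|\geqslant 4$. In the third case, the quotient $H_{/\P}$ by the maximal modular partition $\P$ is a prime graph on $k$ vertices.

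The pivot of the argument is to force $k\geqslant 4$. If $k=2$, then $H_{/\P}$ is either $K_2$ or $\overline{K_2}$, so $H$ is either the series or the parallel composition of two non-empty modules; but that would contradict the connectivity of $H$ or of $\overline H$ in the third case of Theorem~\ref{th:dec-module}. If $k=3$, a direct check shows that none of the four graphs on three vertices is prime (any two-vertex subset is always homogeneous with respect to the third vertex). Hence $k\geqslant 4$, so by Lemma~\ref{lem:p4} the prime graph $H_{/\P}$ contains an induced $P_4$. Picking one representative vertex per module (as in Section~\ref{sec:module}) lifts this induced $P_4$ to an induced $P_4$ of $H$, contradicting $P_4$-freeness. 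So the third case cannot occur, and one of the first two yields the desired non-trivial module.

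The only real subtlety is ruling out $k\leqslant 3$ so that Lemma~\ref{lem:p4} applies; everything else — the hereditary property, the lifting of induced subgraphs through the quotient, and the identification of components with non-trivial modules — is immediate from the machinery already set up.
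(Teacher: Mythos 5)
Your proposal is correct in outline, and it is worth noting that the paper itself gives no proof of Theorem~\ref{theo:p4free}: the statement carries only the citation to Sumner, and the closest the survey comes to an argument is the remark opening Section~\ref{sub:cographs} that ``any prime graph contains a $P_4$.'' Your reconstruction makes that remark rigorous using exactly the survey's own machinery: $P_4$ is prime, so no totally decomposable graph contains one; conversely, for a $P_4$-free induced subgraph $H$ with $|H|\geqslant 4$, the trichotomy of Theorem~\ref{th:dec-module}, the elimination of $k\leqslant 3$ for the quotient, Lemma~\ref{lem:p4}, and the observation of Section~\ref{sec:module} that $H_{/\P}$ is isomorphic to the subgraph induced by one representative per part (so an induced $P_4$ in the quotient lifts to one in $H$) together rule out the prime case. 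There is no circularity, since Lemma~\ref{lem:p4} comes from a source independent of Sumner's theorem, and your identification of $k\geqslant 4$ as the one point needing care is accurate.

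Two local claims are misstated, though both are repaired in a line and neither damages the architecture. First, in the disconnected case you assert that \emph{any} connected component is a non-trivial module; a component can be a singleton (take $H=K_1\oplus K_3$), so you should instead take a component of size at least $2$ if one exists, and otherwise ($H$ edgeless) any pair of vertices, which is a module; symmetrically for $\overline{H}$. Second, your justification that no $3$-vertex graph is prime --- ``any two-vertex subset is always homogeneous with respect to the third vertex'' --- is false for $P_3$ with edges $ab,bc$: the vertex $c$ splits $\{a,b\}$. The correct statement is that \emph{some} pair is always a non-trivial module (the two endpoints $\{a,c\}$ in $P_3$, the edge in $K_2\oplus K_1$, any pair in $K_3$ or $\overline{K_3}$), which still forces $k\geqslant 4$ as you need.
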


\begin{figure}[htp]
\centerline{\includegraphics[width=11cm]{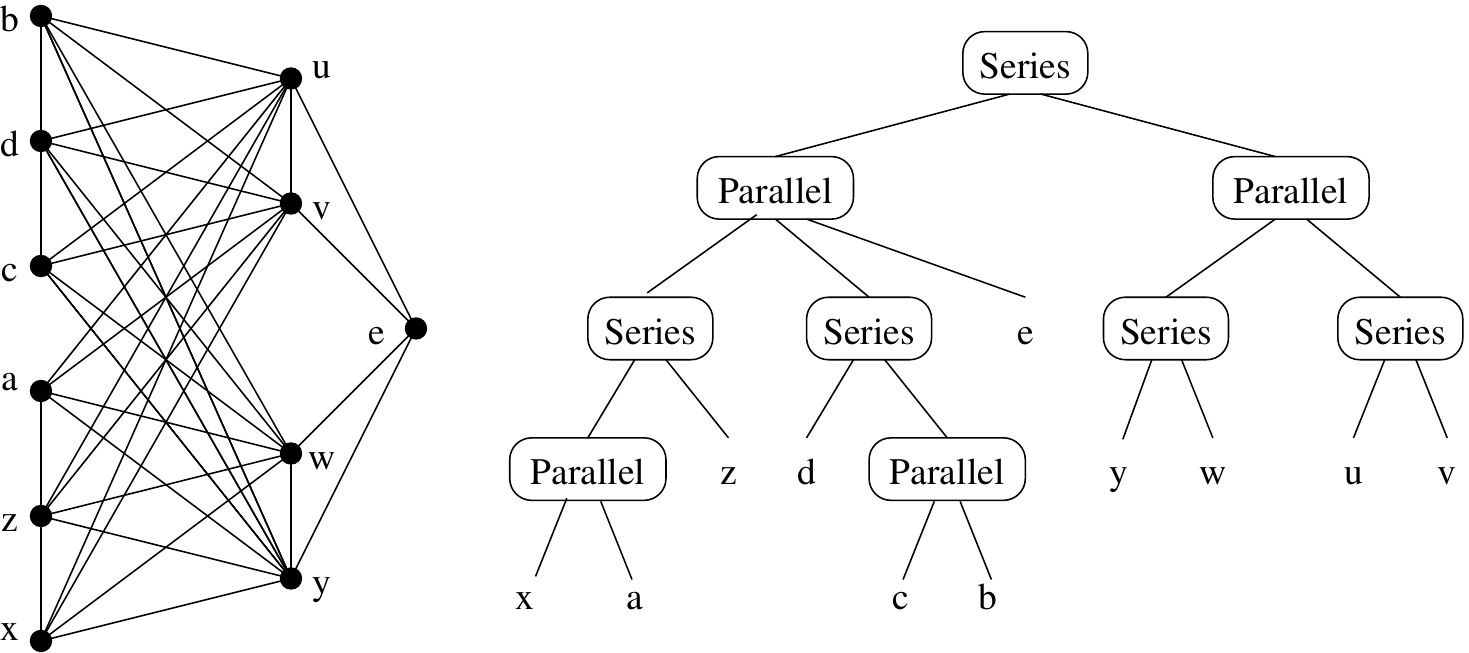}}
\caption{A cograph and its modular decomposition tree (also called cotree).\label{fig:cographe}}
\end{figure}

The following lemma states classical properties of cographs whose proofs (left to the reader) are good exercises to understand the structure of cographs.

\begin{lemma} \label{lem:basic-cograph}
Let $x,y$ and $v$ be vertices of a cograph $G=(V,E)$.
\vspace{-0.4cm}
\begin{enumerate}
\item If $xv\in E$, $yv\notin E$ and $xy\in E$, then $m(v,y)\subseteq M(v,\overline{x})$
\item If $xv\in E$, $yv\in E$ and $xy\notin E$, then $M(v,\overline{x})=M(v,\overline{y})$ and $m(v,x)=m(v,y)$
\end{enumerate}
\end{lemma}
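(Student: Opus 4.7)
The plan is to work directly in the modular decomposition tree $T = MD(G)$. Since $G$ is a cograph, every internal node of $T$ is degenerate (labelled series or parallel), and two leaves $u, w$ are adjacent in $G$ iff $\lca(u,w)$ is a series node. Moreover the strong modules of $G$ are precisely the nodes of $T$, and the strong modules containing a given vertex $u$ are exactly the ancestors of the leaf $u$ in $T$. This yields the dictionary I would use throughout: for any distinct $u, w$, $m(u,w) = \lca(u,w)$, and $M(u, \overline{w})$ is the unique child of $\lca(u,w)$ whose subtree contains $u$.

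With this dictionary, both parts reduce to a short case analysis of the three pairwise LCAs of $\{v, x, y\}$, using the standard tree fact that among three leaves two of the pairwise LCAs coincide while the third is a (weak) descendant of them. For part (1) the hypotheses say that $\lca(x,v)$ and $\lca(x,y)$ are series while $\lca(y,v)$ is parallel; since the two coinciding LCAs must carry the same label, the only surviving possibility is $\lca(x,v) = \lca(x,y) =: L$ (a series node) with $\lca(v,y)$ strictly below $L$. This places $x$ in one child of $L$ and both $v, y$ in another child $C$. Then $M(v,\overline{x}) = C$ and $m(v,y) = \lca(v,y) \subseteq C$, giving $m(v,y) \subseteq M(v,\overline{x})$.

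For part (2), the same label-incompatibility argument forces $\lca(x,v) = \lca(y,v) =: L$ to be a series node with $\lca(x,y)$ strictly below $L$; thus $v$ lies in one child of $L$ while both $x, y$ lie in another child of $L$. Consequently the child of $L$ containing $v$ is simultaneously $M(v,\overline{x})$ and $M(v,\overline{y})$, and $m(v,x) = L = m(v,y)$. The main obstacle I anticipate is carefully eliminating, in part (2), the subcase in which $x$ and $y$ land in two \emph{different} children of $L$: there $\lca(x,y)$ would equal $L$, forcing $L$ to be both series and parallel, which is impossible. Once that subcase is excluded, the identities follow directly from the cotree dictionary.
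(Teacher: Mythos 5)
Your proof is correct: the cotree dictionary you set up (strong modules are exactly the nodes of $MD(G)$, two vertices are adjacent iff their LCA is a series node, $m(u,w)=\lca(u,w)$, and $M(u,\overline{w})$ is the child of $\lca(u,w)$ whose subtree contains $u$) is accurate for cographs, and the three-leaf LCA case analysis with the label-incompatibility argument correctly settles both parts, including the exclusion of the degenerate subcase in which all three pairwise LCAs coincide (a node cannot be both series and parallel). The paper supplies no proof to compare against --- it explicitly leaves this lemma as an exercise for the reader --- but your argument is exactly the intended one, since the paper's own mechanism for reading adjacency off $MD(G)$ is the representative graph at the least common ancestor, which in a cograph is a clique (series node) or an independent set (parallel node).
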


Using Theorem~\ref{theo:p4free}, one can propose a naive cograph recognition algorithm by searching for an induced  $P_{4}$. But so far, most of the linear time cograph recognition algorithms construct the modular decomposition tree and exhibit a $P_{4}$ in case of failure.

The first linear time cograph recognition algorithm was proposed in 1985 by Corneil, Perl and Stewart~\cite{CPS85}. It incrementally constructs the modular decomposition tree, also called \emph{cotree}
\index{cotree} when restricted to cographs, as long as the graph induced by the processed vertices is a cograph. Even if alternative recognition algorithms have recently been proposed~\cite{Dah95,HP05,BCHP03}, the seminal algorithm of~\cite{CPS85} is a corner stone in the algorithmic of the modular decomposition and turns out to have a large impact even for other decomposition technics (e.g. for the split decomposition~\cite{GP07}). We present Corneil et al's algorithm in Section~\ref{sec:cograph}.

\subsection{Bibliographic notes}

The seminal paper on modular decomposition of graphs is probably Gallai's one~\cite{Gal67} on transitive orientation. Up to our knowledge, the only survey paper is due to M\"ohring and Radermacher~\cite{MR84}. More recently, Ehrenfeucht, Harju and Rozenberg~\cite{EHR99} published a book on the decomposition of $2$-structures (a generalization of graphs) which presents  the modular decomposition in a more general framework. In its PhD thesis~\cite{Bui08}, Bui Xuan proposes a survey as well as original results on the representation of set families. Many graph families are well-structured with respect to the modular decomposition, \textit{e.g.} comparability graphs, permutation graphs, cographs\dots For these aspects, the reader should refer to the books of Golumbic~\cite{Gol80} and more recently~\cite{BLS99,Spi03}. The algorithmic aspects are particularly developed in \cite{Gol80,Spi03}.

We saw that the family of modules in a graph is partitive. If we move to directed graphs, then we obtain
a weakly partitive family. The related decomposition of bipartite graph into bi-modules also yields a weakly parititive family \cite{FHMV04}. In order to formalize split decomposition \cite{CE80}, bipartitive families have been introduced \cite{CE80,Cun82}. For a recent survey on all kind of variations on the modular decomposition, the reader should refer to~\cite{Bui08}.

\section{Cographs recognition algorithms as an appetizer}
\label{sec:cograph}

We first study in detail the  Corneil, Pearl and Stewart's algorithm ~\cite{CPS85}.   If the input graph  is a cograph, this vertex-incremental algorithm builds the cotree by adding the vertices one by one in an arbitrary order. Then, we sketch how the cotree of a cograph can be updated under edge modification, a result is due to Shamir and Sharan~\cite{SS04}.

\subsection{Adding a vertex to a cograph}

Consider the following subproblem: given a cograph $G=(V,E)$ together with its cotree $MD(G)$, a vertex $x$ and a subset of vertices $S\subseteq V$, test whether the graph $G+(x,S)=(V\cup\{x\},E\cup\{xy\mid y\in S\})$ is a cograph and if so ouput the cotree $MD(G+x)$. Corneil \emph{et al}'s~\cite{CPS85} showed that whether $G+x$ is a cograph or not can be characterized by a labelling of the nodes of the cotree $MD(G)$. A node $p$ receives the label: \emph{empty}, if the corresponding module $M(p)$ does not intersect $S$; \emph{adjacent} if $M(p)\subseteq S$; and \emph{mixed} otherwise. Remark that by definition any child of a node labelled adjacent (resp. empty) is also labelled adjacent (resp. empty).

\begin{lemma}\cite{CPS85}\label{lem:sommet-cographe}
Let $G$ be a cograph, $x$ a vertex of $V$ and $S\subseteq V$. The graph $G+(x,S)$ is a cograph iff
\begin{enumerate}
\item either none of the nodes of the cotree $MD(G)$ is mixed;
\item or the set of mixed nodes induces a path $\pi$ from the root of $MD(G)$ to some node $p$ and 
\begin{enumerate}
\item the children of the series nodes of $\pi$ different than $p$ are all adjacent; 
\item the children of the parallel nodes of $\pi$ different than $p$ are all empty.
\end{enumerate}
\end{enumerate}
\end{lemma}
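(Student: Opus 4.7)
The plan is to treat the two implications separately, using Theorem~\ref{theo:p4free} ($P_4$-free characterisation of cographs) throughout. Before diving in, I would record one useful observation: the set of mixed nodes is upward-closed in $MD(G)$, since any ancestor of a mixed node contains both $S$-vertices and non-$S$-vertices and is therefore itself mixed. Consequently, the mixed nodes always form a subtree rooted at the root of $MD(G)$, whose leaves I will call the \emph{bottom} mixed nodes. Condition (2) asserts precisely that there is a unique bottom mixed node $p$.

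For the ``if'' direction, case (1) is immediate: if no node is mixed, then $S = \emptyset$ or $S = V$, and $G+(x,S)$ is obtained by adding $x$ as an isolated or universal vertex, which is clearly still $P_4$-free. In case (2), I would construct the cotree of $G+(x,S)$ explicitly by a local surgery near the bottom mixed node $p$ (the exact surgery depending on whether the type of $p$ agrees with the desired adjacency of $x$, and on whether the adjacent/empty children of $p$ are single leaves or groups), and then verify that no $P_4$ arises. The verification uses that, by (2a) and (2b), at every non-bottom mixed node $r_i$ of $\pi$ every non-$\pi$ child is uniformly labelled in the way compatible with $r_i$'s type, so that $x$ ``sees'' those subtrees exactly as a vertex glued at $r_i$ would; hence no $P_4$ involving $x$ and a non-$\pi$ subtree can be formed.

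For the ``only if'' direction I would argue the contrapositive: assuming neither (1) nor (2) holds, exhibit an induced $P_4$ in $G+(x,S)$. There are two sub-cases. In sub-case (i), there are at least two bottom mixed nodes $q_1,q_2$; let $r=\lca(q_1,q_2)$. Using that each bottom mixed $q_i$ has both an adjacent child and an empty child, I would pick $a_i\in M(q_i)\cap S$ and $b_i\in M(q_i)\setminus S$ in suitable children of $q_i$. By the canonical alternation of series and parallel nodes implied by Theorem~\ref{th:dec-module}, at least one of $q_1,q_2$ may be taken parallel when $r$ is series (and dually), which yields the crucial non-edge inside $M(q_i)$ needed to close off a $P_4$ such as $x-a_1-b_2-b_1$. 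In sub-case (ii) the mixed nodes form a path $\pi=r_0,\ldots,r_k=p$ but some $r_i$ with $i<k$ violates (2a) or (2b); say $r_i$ is series with a non-$\pi$ child $c$ that is empty. Canonicity forces $r_{i+1}$ to be parallel, so one can pick $a\in M(r_{i+1})\cap S$ and $b\in M(r_{i+1})\setminus S$ in different children of $r_{i+1}$, giving $ab\notin E$. Any $c'\in M(c)$ is non-adjacent to $x$ (because $c$ is empty) and adjacent to both $a$ and $b$ (because $r_i$ is series), so the four vertices $x,a,c',b$ induce the desired $P_4$. The parallel dual follows by complementation.

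The main obstacle I expect is the sub-case analysis in the ``only if'' direction: one has to repeatedly exploit the forced alternation of series and parallel along every root-to-leaf chain in order to ensure that the key pair $(a,b)$ inside a mixed module can be chosen non-adjacent. When the bottom mixed node lies strictly below $r_1$, this requires descending to it while using at each step that ``bottom'' means the children are non-mixed, so they split cleanly into $S$-pieces and $(V\setminus S)$-pieces. Once that non-edge is secured, the $P_4$ is short to write down; conversely, the ``if'' direction is largely a careful but mechanical verification that the surgery at $p$ does not introduce a $P_4$.
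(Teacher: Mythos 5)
The paper never proves Lemma~\ref{lem:sommet-cographe}: it cites Corneil, Perl and Stewart, defines the empty/adjacent/mixed labelling, and only remarks that the insertion is localized at the node $p$. So your attempt must be judged on its own merits, and its architecture is the right one: the upward-closure of mixed nodes (hence ``the mixed nodes form a path'' is equivalent to ``there is a unique bottom mixed node''), the isolated-or-universal observation in case (1), cotree surgery at $p$ for sufficiency, and exhibiting a $P_4$ via Theorem~\ref{theo:p4free} for necessity. Your sub-case (ii) is correct as written: the violating non-$\pi$ child $c$ of a series node $r_i$ is necessarily \emph{empty} (it is unmixed), alternation forces $r_{i+1}$ to be parallel, and $x,a,c',b$ is indeed an induced $P_4$.

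Sub-case (i), however, contains a false step. When $r=\lca(q_1,q_2)$ is series, it is \emph{not} true that one of the bottom mixed nodes may be taken parallel: the $q_i$ need not be children of $r$. Take a series root with two parallel children $P_1,P_2$, where each $P_i$ has one empty leaf child $z_i$ and one mixed series child $q_i$ (one leaf of $q_i$ in $S$, one not). Both bottom mixed nodes are series under a series $\lca$, so the non-edge $a_1b_1$ that your $P_4$ $x\,a_1\,b_2\,b_1$ requires is an edge; worse, $\{a_1,b_1,a_2,b_2\}$ is a clique, and a clique plus one vertex is $P_4$-free, so no induced $P_4$ exists among those five vertices at all — the step fails, not just its justification. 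The repair stays inside your toolkit: replace $q_1$ by the child $c_1$ of $r$ on the path towards $q_1$; it is mixed by upward-closure and of type opposite to $r$ by alternation. You then need, explicitly, the small claim you also use silently in sub-case (ii): every mixed node has two \emph{distinct} children, one meeting $S$ and one meeting $V\setminus S$ (otherwise, since some child meets $S$, all other children lie inside $S$, and since one of those then meets $S$, the first child lies inside $S$ too, so the whole module is in $S$, contradicting mixedness). Applied to the parallel mixed node $c_1$ this yields $a\in S$ and $b\notin S$ with $ab\notin E$, and then $x\,a\,b_2\,b$ with $b_2\in M(q_2)\setminus S$ is induced (in the example above, $b=z_1$); the case $r$ parallel is dual with $b\,a\,x\,a_2$. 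Finally, in the sufficiency direction, make the verification inside $M(p)$ explicit rather than only ruling out $P_4$'s meeting non-$\pi$ subtrees: conditions (2a)--(2b) say exactly that $M(p)\cup\{x\}$ is a module of $G+(x,S)$, and since every child of $p$ is purely adjacent or purely empty, the quotient of $G[M(p)]+x$ by $p$'s children is a clique or an independent set plus one vertex, which is $P_4$-free; composing cographs along this module then gives a cograph without any case analysis on the surgery.
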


The main idea expressed by the conditions of Lemma~\ref{lem:sommet-cographe} is that the modifications of the cotree implied by the insertion of vertex $x$ are localized in the subtree of $MD(G)$ rooted at node $p$. Indeed any module disjoint from $M(p)$ is not affected by $x$'s insertion (the corresponding nodes are labelled empty or adjacent). In a sense, node $p$ should be considered as the insertion node. The cotree updates only depend on node $p$ (\eg whether it is mixed or adjacent). An example is depicted in Figure~\ref{fig:sommet-cographe}.

\begin{figure}[htp]
\centerline{\scalebox{1}{\includegraphics[width=10cm]{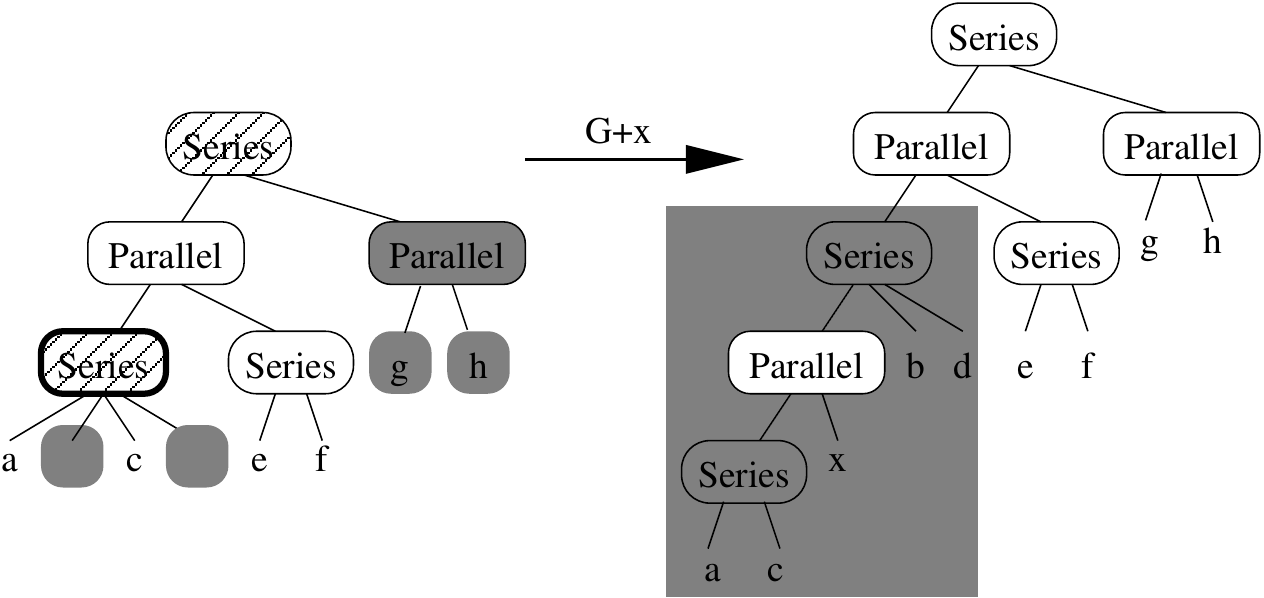}}}
\caption{Insertion of the vertex $x$ adjacent to $S=\{b,d,g,h\}$. Grey nodes are the adjacent labelled nodes and dashed nodes are the mixed nodes. The insertion node $p$ is the bold series node (father of $a$, $b$, $c$, $d$).}
\label{fig:sommet-cographe}
\end{figure}

The algorithm first labels the cotree in a bottom-up manner. The leaves corresponding to vertices of $S$ are labelled adjacent. A node labelled adjacent forwards a partial mark to its father. When a node have received a mark from each of its children, it is labelled adjacent. At the end of this process the empty node have never been searched, while the partially marked nodes corresponds, if $G+x$ is a cograph, to the parallel nodes of the path $\pi$ from the insertion node to the root of $MD(G)$. It is not difficult to see that the number of the marked nodes is linear in the size of $S$ meaning that the labelling process runs in time $O(|S|)$. Testing the condition of the above lemma can be done within the same complexity as well.

\begin{theorem}~\cite{CPS85}
The family of cographs can be recognized in linear time.
\end{theorem}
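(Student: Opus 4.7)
The plan is to use the vertex--incremental strategy already suggested by Lemma~\ref{lem:sommet-cographe}. Fix an arbitrary ordering $x_1,\dots,x_n$ of the vertices of the input graph $G$. Maintaining the cotree $MD(G_i)$ of $G_i=G[\{x_1,\dots,x_i\}]$, I will attempt at step $i+1$ to extend it into $MD(G_{i+1})$ using the set $S_i=N_G(x_{i+1})\cap V(G_i)$, rejecting as soon as Lemma~\ref{lem:sommet-cographe} detects an obstruction. Correctness then follows from Theorem~\ref{theo:p4free} and Lemma~\ref{lem:sommet-cographe}, so the whole content is the complexity analysis: I want each incremental step to run in time $O(|S_i|+1)$, so that the total running time telescopes to $O(n+\sum_i|S_i|)=O(n+m)$.

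First, I would describe the labelling phase. Assuming each node of the cotree carries its arity and a pointer to its parent, I mark the $|S_i|$ leaves corresponding to $S_i$ as \emph{adjacent} and propagate marks upward: each node holds a counter recording how many of its children have already been fully marked \emph{adjacent}; when this counter reaches the node's arity, the node itself becomes \emph{adjacent} and forwards a mark to its parent, otherwise it is \emph{mixed}. Only ancestors of $S_i$-leaves are ever touched, and each is touched $O(1)$ times, so this phase runs in $O(|S_i|)$. (Counters must be reset at the end of the step so that the bound holds amortised over the whole run.)

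Next, the verification phase traverses the mixed nodes top--down, checking that they form a single root-to-$p$ path $\pi$ and that conditions (2a) and (2b) of Lemma~\ref{lem:sommet-cographe} hold at each internal node of $\pi$. The crucial point is that we must not enumerate the non-mixed children of a node on $\pi$, as there could be many. Instead, for a series node on $\pi\setminus\{p\}$ with $k$ children, checking (2a) reduces to verifying that exactly $k-1$ of its children were fully marked and the remaining one is the mixed child on $\pi$; this costs $O(1)$ using the stored arity and counter. Condition (2b) is symmetric. Hence verification also costs $O(|S_i|)$.

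Finally, the local cotree update at the insertion node $p$ — creating or merging series/parallel nodes and, if needed, attaching $x_{i+1}$ as a new child — affects only $p$, its parent and the mixed child of $p$, so it is $O(1)$, or $O(|S_i|)$ in the case where $p$ itself is mixed and one must split its children according to the partial marking (which again only touches marked children). Summing over $i$ gives the claimed $O(n+m)$ bound. The main obstacle in this plan is precisely the verification step: a naive implementation would scan all children of every node on $\pi$ and destroy the linear bound, so the amortised argument depends on storing arities and marked-child counters to replace explicit enumeration of adjacent or empty siblings by constant-time arithmetic.
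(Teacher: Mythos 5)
Your proposal is essentially the paper's own argument: it is the Corneil--Perl--Stewart vertex-incremental algorithm, using Lemma~\ref{lem:sommet-cographe} for correctness and the same bottom-up marking scheme (partial marks with arity counters, never touching empty nodes, moving only marked children during the update) to bound each insertion by $O(|S_i|+1)$ and telescope to $O(n+m)$. Your treatment of the verification phase via stored arities and marked-child counters matches the paper's observation that the number of marked nodes is linear in $|S|$, so the two proofs coincide in both strategy and the key complexity device.
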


\subsection{Edge modification algorithms for cographs}

Let us now turn to the edge modification problem which consists in updating the cotree of a cograph $G$ under an edge insertion or deletion. Since the cotree of a cograph can be obtained from the cotree of its complement by flipping the parallel and the series nodes, deleting or inserting an edge in a cograph are equivalent problems.

\begin{lemma} \cite{SS04} \label{lem:arete-cographe}
Let $x$ and $y$ be two non-adjacent vertices of a cograph $G=(V,E)$. Then $G+xy=(V,E\cup\{xy\})$ is a cograph iff
$x$ is a child of $m(x,y)$ and $M(y,\overline{x})\subseteq N(x)$.
\end{lemma}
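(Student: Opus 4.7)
The plan is to use the $P_4$-free characterisation of cographs (Theorem~\ref{theo:p4free}): because $G$ itself is a cograph, every induced $P_4$ of $G+xy$ must contain the new edge $xy$, so I only need to rule out such $P_4$'s. Writing $p=m(x,y)$, the hypothesis $xy\notin E$ forces $p$ to be a parallel node of $MD(G)$; by the parallel/series alternation each child of $p$ is either a leaf or a series node, and distinct children of $p$ are pairwise non-adjacent. In particular $M(x,\overline{y})$ and $M(y,\overline{x})$ are exactly the children of $p$ containing $x$ and $y$ respectively, and any vertex outside $M(p)$ has the same adjacency to $x$ as to $y$.

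For the necessity direction I will argue contrapositively. If $x$ is not a child of $p$ then $M(x,\overline{y})$ is a series node, so I can choose $a\in M(x,\overline{y})\setminus\{x\}$ in a sibling of the sub-module of $M(x,\overline{y})$ containing $x$ and get $ax\in E$. If $y$ too is not a child of $p$, a symmetric choice gives $d\in M(y,\overline{x})\setminus\{y\}$ with $yd\in E$, and then $a-x-y-d$ is an induced $P_4$ in $G+xy$: the non-edges $ay$ and $xd$ follow from modularity of $M(x,\overline{y})$ and $M(y,\overline{x})$, and $ad\notin E$ because $a,d$ sit in different children of the parallel $p$; after relabelling $x\leftrightarrow y$ if necessary I may therefore assume $x$ is a child of $p$. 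Granting this first condition, if the second fails then some $d\in M(y,\overline{x})\setminus\{y\}$ is a non-neighbour of $y$, so $M(y,\overline{x})$ must be a series node with $y$ and $d$ sharing a common sub-module (otherwise the series join would make them adjacent), and picking $c$ in any sibling sub-module produces the induced $P_4$ $x-y-c-d$ in $G+xy$ ($yc,cd\in E$ by the series join, $xc,xd\notin E$ by the parallel structure at $p$). Both conditions of the lemma are therefore necessary.

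For sufficiency I will enumerate the possible positions of $xy$ in a putative induced $P_4$ of $G+xy$ and rule each out. If $xy$ is the middle edge of $a-x-y-d$, then $a\in M(x,\overline{y})\setminus\{x\}$ would need to be adjacent to $x$; but $M(x,\overline{y})=\{x\}$ by the first hypothesis, so this is impossible, and the mirror case $y-x-c-d$ with $xy$ leftmost is ruled out identically. The remaining case is $x-y-c-d$ together with its mirror. There the constraints $yc\in E$, $xc\notin E$ force $c\in M(y,\overline{x})$: modularity of $M(p)$ forbids $c$ lying outside $M(p)$ (that would equalise its adjacencies to $x$ and $y$), and the parallel structure at $p$ forbids $c$ lying in a sibling child of $p$. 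The same reasoning places $d\in M(y,\overline{x})$, after which the second hypothesis $M(y,\overline{x})\subseteq N(x)$ -- which, since $M(y,\overline{x})\cap N_G(x)=\emptyset$ by modularity, is only informative once one absorbs the newly-created neighbour $y$ of $x$ and concludes that every vertex of $M(y,\overline{x})$ lies in $N_G(y)\cup\{y\}$ -- forces $yd\in E$, contradicting the required non-edge $yd\notin E$ of the path. With all positions of $xy$ ruled out, $G+xy$ is $P_4$-free, hence a cograph.

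The hard part will be the $x-y-c-d$ subcase of the sufficiency direction: carefully reading the second hypothesis as the statement that no vertex of $M(y,\overline{x})\setminus\{y\}$ is a non-neighbour of $y$ in $G$ is what lets one kill the internal $P_3$'s inside $M(y,\overline{x})$ that would otherwise combine with $xy$ to form a $P_4$. Everything else reduces to bookkeeping over the alternating cotree levels around $p$, using repeatedly that distinct children of $p$ see no edges and that $M(p)$ is itself a module of $G$.
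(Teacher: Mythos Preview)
Your strategy --- using Theorem~\ref{theo:p4free} and locating every putative $P_4$ through $xy$ by reading off the cotree levels around the parallel node $p=m(x,y)$ --- is exactly the one the paper sketches; the paper in fact only treats the necessity direction and only exhibits the middle-edge $P_4$ $u_x\!-\!x\!-\!y\!-\!u_y$ that arises when neither $x$ nor $y$ is a child of $p$, so your handling of the end-edge cases and of sufficiency goes beyond the sketch.

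The one genuine gap is your reading of the second hypothesis. You are right that $M(y,\overline{x})\cap N_G(x)=\emptyset$, but ``absorbing the newly created neighbour $y$ of $x$'' does not deliver $M(y,\overline{x})\subseteq N_G(y)\cup\{y\}$: in $G+xy$ the only new neighbour of $x$ is $y$, so the literal inclusion $M(y,\overline{x})\subseteq N_{G+xy}(x)$ would force $M(y,\overline{x})=\{y\}$, and that is too strong for necessity (take $x$ isolated and $y$ a vertex of a triangle: $G+xy$ is a cograph while $|M(y,\overline{x})|=3$). What you actually do --- here, and already in the necessity paragraph where ``the second fails'' becomes ``some $d\in M(y,\overline{x})\setminus\{y\}$ is a non-neighbour of $y$'' --- is silently replace the printed $N(x)$ by the closed neighbourhood $N[y]$. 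With that reading your argument is correct, but it is a correction of the statement, not a consequence of it, and you should say so plainly rather than dress it up as a derivation; the same remark applies to your relabelling $x\leftrightarrow y$, which is needed precisely because the biconditional as printed only holds up to that exchange.
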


Let us sketch the argument proof. As $xy\notin E$, the module $m(x,y)$ is represented by a parallel node. Assume the conditions of Lemma~\ref{lem:arete-cographe} do not hold. Then the path in the cotree from $m(x,y)$ to $x$ (resp. $y$) contains a series nodes $p_x$ (resp. $p_y$) which is the least common ancestor of $x$ (resp. $y$) and some leaf $u_x$ (resp. $u_y$). Then the vertices $\{u_x,x,y,u_y\}$ induces a $P_4$ in the graph $G+xy$.

\begin{figure}[htp]
\centerline{\scalebox{1}{\includegraphics[width=9cm]{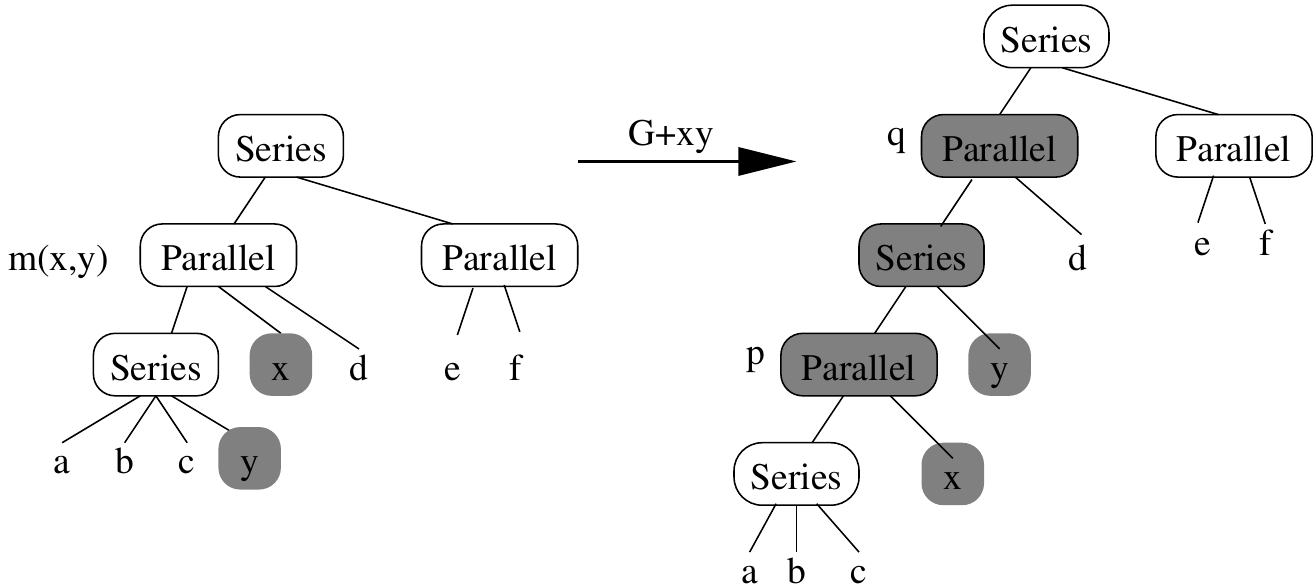}}}
\caption{Update of the cotree to insert the edge $xy$ in a cograph. The node $m(x,y)$ is split into two parallel nodes, say $p$ and $q$, one being the father of $x$, the another the father of the other children of $m(x,y)$. Then leaf $y$ is extracted from the cotree and attached to a new series node inserted between nodes $p$ and $q$.}
\label{fig:arete-cographe}
\end{figure}

It follows from Lemma~\ref{lem:arete-cographe} that as long as the modified graph remains a cograph, the modifications in the cotree are local and can be done in constant time. From results presented in this section, we otbain that:

\begin{theorem} \cite{SS04,CPS85}
There exists an algorithm maintaining the modular decomposition tree of a cograph which runs in time
$O(d)$ per modification (edge or vertex insertion and deletion), where $d$ is the number involved in the modification.
\end{theorem}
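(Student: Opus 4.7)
The plan is to handle the four update operations separately, exploiting in each case the locality of the cotree modifications already exhibited by Lemmas~\ref{lem:sommet-cographe} and~\ref{lem:arete-cographe}, together with the self-complementarity of the cograph class to cut the case analysis in half. Throughout, I would maintain the cotree as a rooted tree with parent and sibling pointers together with an auxiliary structure that returns $m(x,y)=\lca(x,y)$ in $O(1)$ time and supports the local updates that follow (only $O(1)$ nodes change per edge modification, so any standard dynamic LCA structure, or even a rebuild-on-threshold scheme, stays within the claimed budget).

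For \emph{vertex insertion} $G\mapsto G+(x,S)$, I would run the procedure described just after Lemma~\ref{lem:sommet-cographe}: mark the $|S|$ leaves as adjacent, then propagate partial marks upward along parents, declaring a node fully adjacent only once all its children have contributed. The set of visited nodes is exactly $S$ together with ancestors on a path $\pi$ to the insertion node $p$, so the traversal costs $O(|S|)$; the conditions of Lemma~\ref{lem:sommet-cographe} can be tested during the same walk, and either a witnessing induced $P_4$ is reported or the constant-size local surgery at $p$ (inserting $x$ under $p$, or splitting $p$ and wrapping it with a new node of the opposite type) is performed. For \emph{edge insertion} of a non-edge $xy$, I would fetch $m(x,y)$ in $O(1)$, test the two conditions of Lemma~\ref{lem:arete-cographe} (that $x$ is a child of $m(x,y)$ and $M(y,\overline{x})\subseteq N(x)$; the latter reduces to checking that the path from $y$ up to $m(x,y)$ contains no series node other than possibly $m(x,y)$ itself, information we carry on each leaf as a cached ancestor), and, if both hold, apply the $O(1)$ restructuring of Figure~\ref{fig:arete-cographe}: split the parallel node $m(x,y)$ into two parallel nodes $p,q$, insert a new series node between them, and reattach $y$ under the series node. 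After every such operation I would also check the two neighbouring nodes to avoid creating two consecutive parallel or two consecutive series nodes, merging them locally if needed.

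\emph{Edge deletion} reduces to edge insertion by duality: since swapping the series and parallel labels of $MD(G)$ yields $MD(\overline{G})$, deleting $xy$ in $G$ is the same as inserting $xy$ in $\overline{G}$, and Lemma~\ref{lem:arete-cographe} applies with the roles of the two degenerate labels exchanged. \emph{Vertex deletion} is the simplest: detach the leaf $x$ from its parent $f$, and if $f$ is left with a single child contract $f$; if this places two degenerate nodes of the same type in parent-child relation, merge them. All work is local, and bookkeeping for the $O(1)$ LCA and ancestor-cache structures affects only the $O(d)$ leaves touched by the modification. The main obstacle I anticipate is the LCA/ancestor-type query under updates: the algorithmic content of Lemmas~\ref{lem:sommet-cographe}--\ref{lem:arete-cographe} gives the combinatorial justification, but realizing $O(1)$ queries alongside dynamic updates is what makes the per-modification bound tight, and it is the only step that is not immediate from the two lemmas.
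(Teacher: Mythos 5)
Your proposal follows essentially the same route as the paper, whose proof of this theorem is nothing more than the assembly of the results of the section: the $O(|S|)$ marking procedure of Corneil, Perl and Stewart for vertex insertion (Lemma~\ref{lem:sommet-cographe}), the $O(1)$ local surgery of Shamir and Sharan for edge insertion (Lemma~\ref{lem:arete-cographe} and Figure~\ref{fig:arete-cographe}), complementation duality (flipping series and parallel labels) to reduce edge deletion to edge insertion, and trivial local cleanup for vertex deletion. Your case analysis, the $O(|S|)$ accounting for the upward marking walk, and the post-surgery merging of consecutive degenerate nodes of the same type all match the intended argument.

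The one place where you depart from the paper is the auxiliary dynamic LCA structure, which you correctly identify as the only non-immediate step --- but it is in fact unnecessary, and your fallback for it is flawed. Observe that the first condition of Lemma~\ref{lem:arete-cographe} already pins down $m(x,y)$: it requires the leaf $x$ to be a \emph{child} of $m(x,y)$, so the only candidate is the parent of $x$, reachable in $O(1)$ by a parent pointer; the second condition then constrains $y$ to lie at bounded depth below that same node, and since $M(y,\overline{x})$ is a module of $G$ (hence uniform with respect to $x$), membership tests against $N(x)$ reduce to a constant number of pointer and child-counter checks. This is how Shamir and Sharan obtain a genuine worst-case $O(1)$ bound per edge modification with nothing beyond parent pointers and per-node counters --- no dynamic LCA, no cached ancestors. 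By contrast, your ``rebuild-on-threshold'' alternative only yields an \emph{amortized} bound (a rebuild costs $\Omega(n)$ in the worst case), which does not match the per-modification statement of the theorem; a worst-case $O(1)$ dynamic LCA structure does exist in the literature, but invoking it imports heavy machinery to solve a problem that the structure of the lemma dissolves for free. Similarly, your cached-ancestor reformulation of the condition on $y$'s side (``no series node on the path from $y$ up to $m(x,y)$'') is asserted without justification; it should instead be derived directly from the local configuration of $\mathrm{parent}(y)$ relative to $\mathrm{parent}(x)$, which is where the paper's $P_4$-witness argument (the vertices $\{u_x,x,y,u_y\}$) lives.
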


Such an algorithm is known in the litterature as a \emph{fully-dynamic} algorithm.

\subsection{Bibliographic notes}
\label{sub:incremental-notes}

In the late 80's, M{\"u}ller and Spinrad generalized Corneil et al's algorithm to the first quadratic modular decomposition algorithm of graphs~\cite{MS89}. Their algorithm is also incremental, but unlike in Corneil et al's algorithm, the whole graph has to be known at the beginning of the algorithm. This restriction is required for the sake of adjacency tests. 

Concerning the cograph recognition problem, new algorithms also appeared recently. Habib and Paul~\cite{HP05} proposed a partition refinement based algorithm (see Section~\ref{sec:partition}) and Bretscher \emph{et al}~\cite{BCHP08} discovered a simple Lexicographic Breadth First Search~\cite{RTL76} based algorithm.

Aside the two cograph algorithmic results presented above, fully-dynamic algorithms have recently been proposed to maintain a representation based on the modular decomposition tree under vertex and edge modifications for various graph classes: permutation graphs~\cite{CP06}, interval graphs~\cite{Cre09,Iba09}\dots The fully-dynamic representation problem has also been solved for other families of graphs, \eg proper interval graphs~\cite{HSS01}, using other decomposition schemes.

Beside, Corneil \emph{et al}'s algorithm has been generalized to the split decomposition~\cite{CE80} to obtain an optimal fully dynamic algorithm for the distance hereditary graphs recognition problem~\cite{GP07}. More recently by the same technique, Gioan \emph{et al.} derived an almost linear time split decomposition algorithm~\cite{GPTC09b} and the first subquadratic circle graph recognition algorithm~\cite{GPTC09a}.

\section{Partition refinement}
\label{sec:partition}

Partition refinement, as an algorithmic technique, has been used in a number of problems, the first of which is probably the deterministic automata minimization~\cite{Hop71}. Paigue and Tarjan~\cite{PT87} wrote a synthesis paper on this technique. Since then, the number of problems solved by partition refinement keeps increasing: interval graph recognition~\cite{HPV99} and completion~\cite{RST08}, transitive orientation, consecutive ones property for boolean matrices ~\cite{HMPV00} are example among others. As we will see, this technique turns out to be a powerful and simple algorithmic paradigm that plays an important role in the context of modular decomposition.

We first present the data-structure and the elementary operation, namely the \emph{refine} operation, of the partition refinement technique. Then, we illustrate this technique with an algorithm that computes a modular partition of a graph. Let us mention that this algorithm really follows the lines of Hopcroft's deterministic automaton minimization algorithm~\cite{Hop71}.

\subsection{Data-structures and algorithmic scheme}

Let $\mathcal{P}$ and $\mathcal{P}'$ be two partitions of the same set $V$. The partition $\mathcal{P}$ is \emph{smaller} than $\mathcal{P}'$, denoted $\mathcal{P}\triangleleft \mathcal{P}'$, if $\mathcal{P}\neq\mathcal{P}'$ and any part of $\mathcal{P}$ is a subset of some part of $\mathcal{P}'$. The partition $\mathcal{P}$ is \emph{stable} with respect to a set $S$ if none of the parts of $\mathcal{P}$ overlaps $S$.

Partition refinement consists of repeating, as long as needed, the operation described in Algorithm~\ref{alg:affinage}. The initial partition and the sequence of \emph{pivot sets} used in the successive refinement steps have a large impact on the whole complexity of the algorithm. Partitioning the vertex set of a graph with respect to the neighbourhood of some vertex is a common operation in graph algorithms. Indeed in our examples, all pivot sets considered correspond to the neighbourhood of some vertex.

\medskip
\begin{algorithm2e}[h]
\caption{\emph{Refine($\mathcal{P}$, $S$)} \label{alg:affinage}}
\KwIn{A partition $\mathcal{P}$ of a set $V$ and a subset $S\subseteq V$, called \emph{pivot set}}
\KwOut{The coarsest partition refining $\mathcal{P}$ and stable for $S$}
\Begin{
\ForEach{part $\mathcal{X}\in\mathcal{P}$}{
	\lIf{$\mathcal{X}\cap S\neq\emptyset$ and $\mathcal{X}\cap S\neq\mathcal{X}$}{
		replace $\mathcal{X}$ by $\mathcal{X}\cap S$ and $\mathcal{X}\setminus S$\;
		}
	}
}
\end{algorithm2e}

\medskip
Let us briefly describe a very  useful data-structure, namely the \emph{standard partition data structure} (see Figure~\ref{fig:affiner}). The elements of the set $V$ to be partitioned are stored in a doubly linked list. Each element of $V$ is assigned a pointer towards the part it belongs to. The elements of a part $\mathcal{X}$ remains consecutive in the doubly linked list (they form an interval). So that each part maintains a pointer towards its first and its last element in the list.

\begin{figure}[h]
\centerline{\scalebox{1}{\includegraphics[width=10cm]{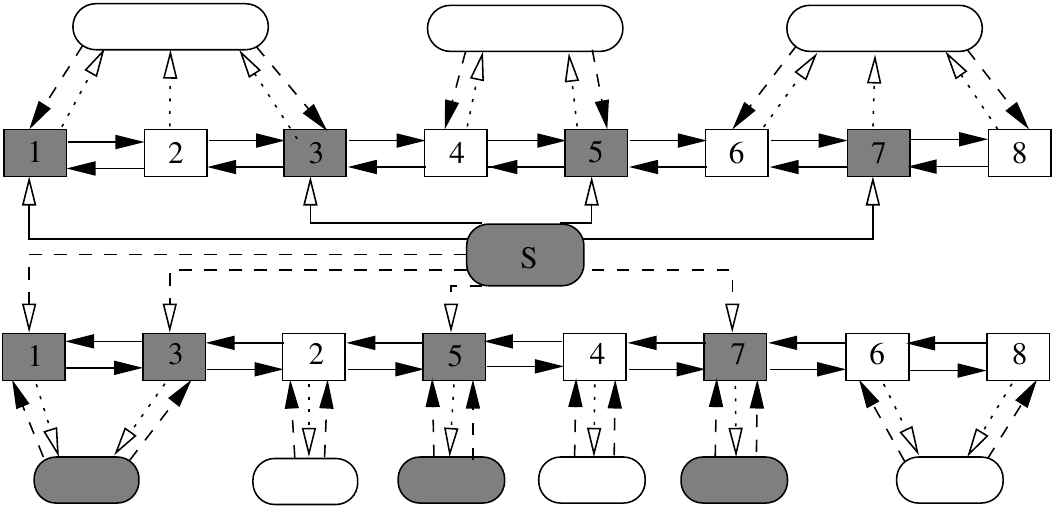}}}
\caption{\emph{$\mathcal{P}'=$Refine$(\mathcal{P},S)$}.\label{fig:affiner}}
\end{figure}

\begin{notation}
The data-structure implicitly represents an \emph{ordered partition}\index{odered partition}: the parts are totally ordered. Depending of the application, this aspect may or may not be important. In order to distinguish the two different cases, an ordered partition will be denoted by $\mathcal{P}=[\mathcal{X}_1,\dots ,\mathcal{X}_k]$ while a non-ordered partition will be denoted by $\mathcal{P}=\{\mathcal{X}_1,\dots ,\mathcal{X}_k\}$. 
\end{notation}

Given a subset $S\subseteq V$, using this standard partition data structure, one can build a list $L$ containing the parts of $\mathcal{P}$ intersecting $S$, such that in each of these parts the elements of $S$ occur first.
Then using $L$, one can split every part into $\mathcal{X}\cap S$  and $\mathcal{X} \setminus S$. A careful complexity analysis shows the following result:

\begin{lemma} \label{lem:affiner}
The time complexity of the operation \emph{Refine($\mathcal{P}$, $S$)} is $O(|S|)$.
\end{lemma}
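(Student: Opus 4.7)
The plan is to give an implementation of \emph{Refine}$(\P,S)$ whose running time is charged entirely to the elements of the pivot set $S$, so that the cost of examining each part of $\P$ (which could be much larger than $S$) is avoided.

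First I would augment the standard partition data-structure with one extra field per part: a pointer (initially \textsc{null}) to a ``twin'' part, plus a global list $L$ of parts whose twin pointer is non-\textsc{null} during the current call. Then I would process the elements of $S$ one by one. For each $s\in S$, follow the pointer from $s$ to its current part $\mathcal{X}$; this is $O(1)$. If $\mathcal{X}$ has no twin yet, create an empty new part $\mathcal{X}'$ immediately to the right of $\mathcal{X}$ in the ordered partition, set the twin pointer of $\mathcal{X}$ to $\mathcal{X}'$, and append $\mathcal{X}$ to $L$; both operations are $O(1)$ on doubly linked lists. Finally, detach $s$ from its current position in the underlying doubly linked list, reinsert it at the boundary belonging to $\mathcal{X}'$, update $s$'s part-pointer to $\mathcal{X}'$, and decrement (resp.\ increment) the size counter of $\mathcal{X}$ (resp.\ $\mathcal{X}'$). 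All of this is $O(1)$ per element of $S$.

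After scanning $S$, a cleanup pass traverses $L$: for each touched part $\mathcal{X}$, if $\mathcal{X}$ is now empty (meaning $\mathcal{X}\subseteq S$ so no actual split was required), delete $\mathcal{X}$ and keep only $\mathcal{X}'$ in its place; otherwise both $\mathcal{X}\setminus S$ and $\mathcal{X}\cap S$ are non-empty and already stored contiguously as $\mathcal{X}$ followed by $\mathcal{X}'$, which is precisely what is required by Algorithm~\ref{alg:affinage}. Reset the twin pointers to \textsc{null} so the structure is clean for the next call. This pass costs $O(|L|)$, and since $|L|\leqslant |S|$, we get $O(|S|)$ overall.

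The only real obstacle is making sure that the creation and insertion of a new part $\mathcal{X}'$ into the ordered sequence truly runs in $O(1)$; this is what forces us to keep the partition as a doubly linked list of parts, with first/last pointers inside each part, so that splicing $\mathcal{X}'$ next to $\mathcal{X}$ and moving an element between the two requires only a constant number of pointer updates. Once this is in place, summing $O(1)$ per element of $S$ with the $O(|S|)$ cleanup yields the claimed $O(|S|)$ bound, and the output is the coarsest partition refining $\P$ and stable for $S$, as required.
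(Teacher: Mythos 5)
Your proposal is correct and matches the paper's own (sketched) argument: the paper likewise charges all work to the elements of $S$ by building a list $L$ of the parts meeting $S$, grouping the $S$-elements contiguously within each touched part via $O(1)$ pointer moves in the doubly linked list, and then splitting each part of $L$ in a final $O(|L|)\leqslant O(|S|)$ pass. Your ``twin part'' variant (creating $\mathcal{X}'$ eagerly rather than gathering $S$-elements first and splitting afterwards) is only a cosmetic difference, and you correctly handle the $\mathcal{X}\subseteq S$ case where no split should occur.
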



We conclude this brief introduction by a few remarks. Refining a partition by a subset $S$ or its complement $\overline{S}=V\setminus S$ are equivalent operations: \emph{Refine($\mathcal{X}$, $S$)}= \emph{Refine($\mathcal{X}\cap S$, $V\setminus S$)}. It is thereby possible to deal with the complement of the input graph without explicitly storing its edge set. 
Partition refinement is usually used either to compute a total ordering of the vertices (\eg LexBFS) or the equivalence classes of some equivalence relations (\eg maximal set of twin vertices). McConnell and Spinrad~\cite{MS00} showed how to augment the data-structure in order to extract within the same complexity, at each refinement step, the edges incident to vertices belonging to different parts. This operation is useful to efficiently compute the quotient graph associated to a modular partition. For a more detailed presentation of partition refinement refer to ~\cite{PT87,HPV98,HPV99,HMPV00}.

Of course many variations of the standard partition data structure have been introduced, as for example changing the doubly linked list into an array of size $|V|$.
A further requirement can be that the elements of every part $\mathcal{X}$ of $\mathcal{P}$ are maintained sorted according to a given an initial ordering $\tau$ of $V$. This can be done within the same complexity and is very useful for example when dealing with LexBFS multi-sweep algorithms. The ordering given by some previous LexBFS can be used as a tie-break rule for another LexBFS~\cite{Cor04, Cor04b, BCHP08}.

\subsection{Hopcroft's rule and computation of a modular partition}

Partition refinement is the right tool to compute a modular partition, an important subproblem towards efficient modular decomposition algorithms. In this section, we focus on the problem of computing the \emph{coarsest modular partition} (see Definition~\ref{def:pmm}) of a given vertex partition. The algorithm we present runs in time $O(n+m\log n)$ and is based on the \emph{Hopcroft's rule} which is used in various simple quasi-linear time modular decomposition algorithms.

\begin{definition} \label{def:pmm}
Let $\mathcal{P}$ be a partition of the vertices of a graph $G=(V,E)$. The \emph{coarsest modular partition of $G$ with respect to $\mathcal{P}$} is the largest modular partition $\mathcal{Q}$ such that $\mathcal{Q}\triangleleft \mathcal{P}$.
\end{definition}

The main idea of the algorithm is the following: as long as there is a part $\mathcal{X}$ which is not uniform for some vertex $x\notin\mathcal{X}$, the current partition $\mathcal{P}$ is refined with the neighbourhood $N(x)$. When the algorithm ends, all the parts are modules. Finding, at each step, a vertex $x$ whose neighbourhood strictly refines the partition $\mathcal{P}$, is the usual barrier to linear time complexity. However, using the so-called \emph{Hopcroft's rule}, one get a fairly simple solution that uses the neighbourhood of each vertex at most $\log n$ times.

\begin{lemma}
Let $\mathcal{P}$ be a partition of the vertices of a graph $G=(V,E)$ and $x$ be a vertex of some part $\mathcal{X}$. If  $\mathcal{P}$ is stable with respect to $N(y)$, $\forall y\notin\mathcal{X}$, then $\mathcal{X}$ is a module of $G$ and the partition $\mathcal{Q}=Refine(\mathcal{P},N(x))$ is stable with respect to $N(x')$, $\forall x'\in\mathcal{X}$.
\end{lemma}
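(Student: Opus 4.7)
The lemma has two assertions, which I would address in sequence.

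\emph{First, $\mathcal{X}$ is a module.} Fix any $y\notin\mathcal{X}$. The hypothesis that $\mathcal{P}$ is stable with respect to $N(y)$ forbids the part $\mathcal{X}$ from overlapping $N(y)$, so $\mathcal{X}\cap N(y)=\emptyset$ or $\mathcal{X}\subseteq N(y)$; in either case $y$ is not a splitter of $\mathcal{X}$. As $y$ was arbitrary outside $\mathcal{X}$, the part $\mathcal{X}$ is homogeneous with respect to every external vertex, so by definition it is a module.

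\emph{Second, $\mathcal{Q}$ is stable with respect to $N(x')$ for every $x'\in\mathcal{X}$.} The key identity I would extract from the first conclusion is the following: since $\mathcal{X}$ is a module, every $v\notin\mathcal{X}$ is adjacent to all of $\mathcal{X}$ or to none, hence $v\in N(x)\iff v\in N(x')$ and in particular $N(x)\setminus\mathcal{X}=N(x')\setminus\mathcal{X}$. Now let $\mathcal{Y}$ be a part of $\mathcal{Q}$. Since $\mathcal{Q}$ refines $\mathcal{P}$, $\mathcal{Y}$ is contained in some $\mathcal{X}_0\in\mathcal{P}$. When $\mathcal{X}_0\neq\mathcal{X}$, $\mathcal{Y}\cap\mathcal{X}=\emptyset$, and the identity above yields $\mathcal{Y}\cap N(x')=\mathcal{Y}\cap N(x)$. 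By construction $\mathcal{Q}=\mathrm{Refine}(\mathcal{P},N(x))$ is stable with respect to $N(x)$, so $\mathcal{Y}$ is disjoint from or contained in $N(x)$, and the same dichotomy then holds for $N(x')$; thus $\mathcal{Y}$ does not overlap $N(x')$.

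\emph{Main obstacle.} The delicate case is a part $\mathcal{Y}\subseteq\mathcal{X}$: internal adjacencies may differ between $x$ and $x'$, and the module identity can no longer transfer stability from $N(x)$ to $N(x')$. My intended resolution is to read the conclusion in the Hopcroft-style sense used in the discussion that follows the lemma---once an internal pivot $x\in\mathcal{X}$ has been used to refine external parts, no further refinement of those external parts via any $N(x')$ ($x'\in\mathcal{X}$) is possible, which is exactly what is needed to attribute $O(\log n)$ usages per vertex. If the conclusion is meant to cover every part of $\mathcal{Q}$, I would rely on the paper's overlap-based definition of stability, under which an internal part $\mathcal{Y}$ that contains $N(x')$ already avoids overlap; a short case analysis on $\mathcal{Y}\cap N(x')$, $\mathcal{Y}\setminus N(x')$ and $N(x')\setminus\mathcal{Y}$ then dispatches the remaining configurations. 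This is the point in the proof that deserves the most care.
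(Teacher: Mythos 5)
Your main line is sound and is, in substance, exactly what the paper intends: the paper in fact prints no proof at all for this lemma, dismissing it as ``a direct consequence of the definition of module,'' and the intended consequence is precisely your two steps --- stability under every $N(y)$, $y\notin\mathcal{X}$, makes $\mathcal{X}$ homogeneous for every outside vertex, hence a module; and then the identity $N(x)\setminus\mathcal{X}=N(x')\setminus\mathcal{X}$ for all $x'\in\mathcal{X}$ transfers stability with respect to $N(x)$ to stability with respect to $N(x')$ on every part disjoint from $\mathcal{X}$. You were also right to isolate the parts inside $\mathcal{X}$ as the delicate point, and your Hopcroft-style reading is the correct resolution: in Algorithm~\ref{alg:part-modulaire} the pivot's own part is explicitly excluded from refinement (the inner loop only touches parts $\mathcal{Y}\neq\mathcal{X}$), the refinement of $G[\mathcal{X}]$ being delegated to later pivots and recursion, and the lemma is invoked only to conclude that once one pivot of $\mathcal{X}$ has been used, no vertex of $\mathcal{X}$ can further split a part outside $\mathcal{X}$.

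Your fallback for the literal reading, however --- that the overlap-based definition of stability plus ``a short case analysis'' would dispatch the internal parts --- is not salvageable, because the conclusion restricted to parts inside $\mathcal{X}$ is simply false, in the overlap sense as well as in the ``contained in or disjoint from the pivot set'' sense that \emph{Refine} actually delivers. Take $V=\{a,b,c,e,d\}$ with $\mathcal{X}=\{a,b,c,e\}$ inducing the path $a$--$b$--$c$--$e$, and $d$ adjacent to all four vertices of $\mathcal{X}$. Then $\mathcal{P}=\{\mathcal{X},\{d\}\}$ satisfies the hypothesis (the only outside vertex is $d$ and $N(d)=\mathcal{X}$), and refining by $N(a)=\{b,d\}$ gives $\mathcal{Q}=\{\{b\},\{a,c,e\},\{d\}\}$. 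For $x'=c$ we have $N(c)=\{b,e,d\}$, which overlaps the internal part $\{a,c,e\}$: it meets it in $e$, misses $a$ and $c$, and contains $b$ and $d$ outside it. So no case analysis can establish stability of the internal parts, and the only correct way to finish is the restricted reading you proposed first. A further caution about leaning on the paper's overlap-based phrasing of stability: it is too weak even for the first conclusion, since it tolerates the configuration $N(y)\subsetneq\mathcal{X}$ in which $y$ still splits $\mathcal{X}$ (a star whose center lies in $\mathcal{X}$ and whose sole outside vertex $y$ sees only the center gives non-overlapping $N(y)$ and $\mathcal{X}$, yet $\mathcal{X}$ is no module); your step ``no overlap, hence $\mathcal{X}\cap N(y)=\emptyset$ or $\mathcal{X}\subseteq N(y)$'' is only valid under the operational, split-free meaning of stability produced by Algorithm~\ref{alg:affinage}. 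Committing explicitly to that meaning and to the restricted reading of the second assertion would make your proof complete.
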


The above lemma (which is a direct consequence of the definition of module) shows that using as pivots the vertices of all the parts of $\mathcal{P}$ but one, say $\mathcal{Z}$, plus one vertex $z$ of $\mathcal{Z}$ is enough. For complexity issues, the avoided part $\mathcal{Z}$ has to be chosen as the largest part of $\mathcal{P}$. Similarly, once a part $\mathcal{X}$ has been split, the process continues recursively on the subgraph induced by $\mathcal{X}$ and the resulting largest subpart can be avoided (meaning that only one of its vertices has to be used as pivot). This  \textit{"avoid the largest part"} technique is known as the \emph{Hopcroft's rule} and has been first proposed in the deterministic automata minimization algorithm~\cite{Hop71}.

\begin{algorithm2e}[h]
\caption{Modular Partition \label{alg:part-modulaire}}
\small{
\KwIn{A partition $\mathcal{P}$ of the vertex set  $V$ of a graph $G$}
\KwOut{The coarsest modular partition $\mathcal{Q}$ smaller than $\mathcal{P}$}
\Begin{
Let $\mathcal{Z}$ be the largest part of $\mathcal{P}$\;
$\mathcal{Q}\leftarrow\mathcal{P}$;
$K\leftarrow \{\mathcal{Z}\}$;
$L\leftarrow \{\mathcal{X} \mid \mathcal{X}\neq\mathcal{Z},\mathcal{X}\in\mathcal{P}\}$\;
\lnl{ligne:mainwhile}	\While{$L\cup K\neq\emptyset$}{
	\lIf{there exists $\mathcal{X}\in L$}{
		$S\leftarrow\mathcal{X}$ and
		$L\leftarrow L\setminus \{\mathcal{X}\}$\;		
		}
	\Else{
\lnl{ligne:sommet}		Let $\mathcal{X}$ be the first part $K$ and $x$ arbitrarily selected in $\mathcal{X}$\;
		$S\leftarrow \{x\}$ and $K\leftarrow K\setminus \{\mathcal{X}\}$\;
		}	
	\ForEach{vertex $x\in S$}{
		\ForEach{part $\mathcal{Y}\neq\mathcal{X}$ such that $N(x)\overlap \mathcal{Y}$}{
			\lnl{ligne:insertion} Replace in $\mathcal{Q}$, $\mathcal{Y}$ by $\mathcal{Y}_1=\mathcal{Y}\cap 
			N(x)$ and $\mathcal{Y}_2=\mathcal{Y}\setminus N(x)$\;
			Let $\mathcal{Y}_{min}$ (resp. $\mathcal{Y}_{max}$) be the smallest part 
			  (resp. largest) among $\mathcal{Y}_1$ and $\mathcal{Y}_2$\;
			\lIf{$\mathcal{Y}\in L$}{
				$L\leftarrow
				L\cup\{\mathcal{Y}_{min},\mathcal{Y}_{max}\}\setminus\{\mathcal{Y}\}$\;
				}
			\Else{
				$L\leftarrow L\cup\{\mathcal{Y}_{min}\}$\;
				\lIf{$\mathcal{Y}\in K$}{
					Replace $\mathcal{Y}$ by $\mathcal{Y}_{max}$ in $K$\;
					}
				\lElse{Add $\mathcal{Y}_{max}$ at the end of $K$\;}
					
				}
			}
		}
	}
}
}
\end{algorithm2e}

To implement this rule, the parts are stored in two disjoint lists $K$ and $L$. The neighbourhoods of all the vertices of parts belonging to $L$ will be used to refine the partition. For the parts belonging to $K$, only the neighbourhood of one arbitrarily selected vertex is used. Since  $K$ is managed with a FIFO priority rule, this guarantees that the first part of the list, when extracted, is a module.

\begin{theorem} \label{th:partition-modulaire}
Let $\mathcal{P}$ be a partition of the vertices of a graph $G=(V,E)$. Algorithm~\ref{alg:part-modulaire} computes the coarsest modular partition for $G$ and $\mathcal{P}$ in time $O(n+m\log n)$.
\end{theorem}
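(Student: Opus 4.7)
The plan is to split the argument into a correctness part and a complexity part. For correctness I would rely on two monotone invariants maintained throughout the main while-loop of Algorithm~\ref{alg:part-modulaire}. First, $\mathcal{Q}$ is always a refinement of the initial partition $\mathcal{P}$, since \emph{Refine} only splits existing parts. Second, every module $M$ of $G$ contained in some part of $\mathcal{P}$ remains entirely contained in some part of $\mathcal{Q}$; this holds because $M$ is homogeneous with respect to any $y \notin M$, so $N(y)$ never overlaps $M$ and no refinement step can split $M$. These two invariants together imply that if the final $\mathcal{Q}$ is itself a modular partition, then it must be the coarsest modular partition refining $\mathcal{P}$.

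To establish that at termination every part of $\mathcal{Q}$ is indeed a module, I would combine the lemma recalled just before the algorithm (stability of $\mathcal{Q}$ with respect to $N(y)$ for every $y$ outside a part $\mathcal{X}$ forces $\mathcal{X}$ to be a module) with an invariant tailored to the lists $L$ and $K$: whenever a part $\mathcal{X}$ is extracted from the front of $K$ while $L$ is empty, $\mathcal{Q}$ is already stable with respect to $N(y)$ for every $y \notin \mathcal{X}$. The proof of this invariant is by induction on the order of extractions from $K$, using Hopcroft's ``avoid the largest'' principle: when a part $\mathcal{Y}$ is split into $\mathcal{Y}_{min}$ and $\mathcal{Y}_{max}$, refining once by $\mathcal{Y}_{min}$ together with the already-processed pivot coming from $\mathcal{Y}$ yields the same effect as refining by pivots drawn from both sub-parts, so no stability information is lost when only $\mathcal{Y}_{max}$ keeps its ``single-representative'' status in $K$. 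A case analysis on whether $y$ lies in an $L$-part (all vertices used, hence $N(y)$ already applied) or a $K$-part already processed before $\mathcal{X}$ closes the induction.

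For the complexity I would first invoke Lemma~\ref{lem:affiner}, which gives that a single call \emph{Refine($\mathcal{Q}, N(x)$)} costs $O(|N(x)|) = O(\deg x)$. Initialization, the partition data-structure bookkeeping, and the management of $L$ and $K$ clearly cost $O(n)$ overall. It then suffices to bound the total number of times each vertex $x$ appears in a pivot set $S$ throughout the execution. I would prove by a standard Hopcroft potential argument that each vertex $x$ enters a pivot set at most $O(\log n)$ times: between two successive uses of $x$ as a pivot, the part of $\mathcal{Q}$ currently containing $x$ must have been inserted into $L$ as the smaller piece $\mathcal{Y}_{min}$ of a split (a part of $K$ contributes only one chosen representative, which is accounted for separately), so its size has at least halved, which can happen only $\lfloor \log_2 n \rfloor$ times. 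Summing yields $\sum_x \deg(x)\cdot O(\log n) = O(m\log n)$, and adding the $O(n)$ initialization gives the announced $O(n + m\log n)$ bound.

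The step I expect to be the hardest is the correctness argument justifying the FIFO discipline on $K$: one must explain precisely why maintaining a single representative per $K$-part is sufficient and why the queue order guarantees that when a part reaches the front of $K$, no further splitting can occur. This is exactly the place where the Hopcroft ``avoid the largest'' trick is delicate, and I would attack it by exhibiting, for every pair $(\mathcal{X}, y)$ with $y \notin \mathcal{X}$, an explicit pivot use during the execution that witnesses the stability of $\mathcal{X}$ with respect to $N(y)$, distinguishing whether $y$ belonged to an $L$-part or to a $K$-part at that moment.
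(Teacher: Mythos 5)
Your proposal is correct and follows essentially the same route as the paper: your preservation invariant, your ``front of $K$ when $L$ is empty is stable, hence a module'' invariant, and your witness-exhibiting case analysis for pairs $(\mathcal{X},y)$ correspond exactly to the paper's three stated invariants, combined with the same unnamed stability lemma preceding the algorithm. The complexity analysis is likewise the paper's: $O(|S|)$ per \emph{Refine} via Lemma~\ref{lem:affiner}, the Hopcroft halving argument bounding each vertex's appearances in pivot sets by $O(\log n)$ (with parts entering $L$ only as the smaller half $\mathcal{Y}_{min}$), and the single-representative accounting for $K$-parts, giving $O(n+m\log n)$.
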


The correctness of the algorithm follows from the next three invariant properties. The first invariant shows that a module contains in some part of the given partition cannot be split, while the third one guarantees that the algorithm outputs a modular partition.

\begin{enumerate}
\item \emph{If $M$ is a module of $G$ contained in a part $\mathcal{X}\in\mathcal{P}$, then there exists a part $\mathcal{Y}$ of the current partition containing $M$.} 
\item \emph{If $L=\emptyset$, then the first part $\mathcal{Y}$ of $K$ is a module.} 
\item \emph{If the current partition contains a part $\mathcal{X}$ that is not a module, then there exists $\mathcal{Y}\in L\cup K$ different from $\mathcal{X}$ and containing a splitter $y$ for $\mathcal{X}$.} %
\end{enumerate}
\textit{Complexity issues:} The main while loop (line~\ref{ligne:mainwhile}), manages a set $S$ of vertices whose neighbourhoods have to be used to refine the current partition. The set $S$ is computed from the lists $L$ and $K$. Since the current part containing a given vertex can be added to $L$, only if its size is smaller than half of the size of the former part containing $x$, the neighbourhood of each vertex $x$ is guaranteed to be visited at most  $\log(|V|)$ times by the algorithm. Furthermore, when a vertex $x$ of a part $\mathcal{X}$ extracted from $K$ is used, neither $x$ nor none of the vertices of $\mathcal{X}$ is used again. This yields to a $O(\sum_{x \in V}\log(|V|).|N(x)|)$ complexity, as claimed.

\subsection{Bibliographic notes}

As already mentioned, the use of partition refinement technique dates to 1971 for the deterministic automata minimization problem~\cite{Hop71}. In 1987, Paigue and Tarjan used again this technic to solve three different problems: functional partition, coarsest relational partition problems and doubly lexicographic ordering of a boolean matrix. In the late 90's, it has been used more systematically in the context of modular decomposition and transitive orientation 
yielding  $O(n+m\log n)$ practical and simple algorithms (see e.g. \cite{MS00,HMPV00}).

\section{Recursive computation of the modular decomposition tree}
\label{sec:algo}

In 1994, Ehrenfeucht, Gabow, McConnell and Sullivan~\cite{EGMS94} proposed a quadratic algorithm for the modular decomposition\footnote{This algorithm is designed for $2$-structures, a classical generalization of graphs.}. The principle of this algorithm, which we will call the \emph{skeleton algorithm}, is the basis of a large number of the known subquadratic algorithms proposed in the late 90's (see \eg~\cite{MS00,DGM01}), which could abusively be considered as a series of different implementations of the skeleton algorithm. The complexity of these implementations are respectively $O(n+m.\alpha(n,m))$ or $O(n+m)$~\cite{DGM01}, and finally $O(n+m\log n)$~\cite{MS00}. We describe the principle of the skeleton algorithm without considering the complexity issues. We then discuss the differences in the time complexity of the known algorithms.

\subsection{The skeleton algorithm}

Let us first mention that the skeleton algorithm computes a \emph{non-reduced} form of the modular decomposition tree $MD(G)$: the resulting tree may contain some series (or parallel) node child of a series (or parallel) node. All the algorithms we describe in this section will do so. It does not impact the complexity issues as a single search of the tree is enough reduce it in time $O(n)$. In the following, we will abusively denote $MD(G)$ the (non-reduced) decomposition tree returned by these algorithms.

The main idea developed by Ehrenfeucht et al.~\cite{EGMS94} is to first compute a \emph{"spine"} of the modular decomposition tree $MD(G)$, then to recursively compute the modular decomposition trees of some induced subgraphs which are eventually padded to the spine. More formally:

\begin{definition}
Let $v$ be an arbitrary vertex of a graph $G=(V,E)$. The \emph{$v$-modular partition} is the following modular partition:\\
\centerline{$\M(G,v)=\{v\}\cup\{M\mid M\mbox{ \textit{is a maximal module not containing} }v\}$}

We define $spine(G,v)$ as the modular decomposition tree $MD(G_{/\M(G,v)})$.
\end{definition}

First we notice that $\M(G,v)$ is easy to  compute.

\begin{lemma}\label{fallaitmettreunlemme}
The partition $\M(G,v)$ is the coarsest modular partition for $G$ and $\mathcal{P}=\{N(v), v, \overline{N}(v)\}$  and can be computed in time $O(n+m\log n)$.
\end{lemma}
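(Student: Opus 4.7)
The plan is to verify two statements: first, that $\M(G,v)$ coincides with the coarsest modular partition refining $\mathcal{P}=\{N(v),\{v\},\overline{N}(v)\}$; second, that Algorithm~\ref{alg:part-modulaire} applied to this $\mathcal{P}$ yields the desired running time.

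First I would argue that $\M(G,v)$ is a well-defined partition of $V$. By Lemma~\ref{lem:mod-partitive}, the family $\M$ of modules of $G$ is partitive, so if two maximal modules $M_1,M_2$ not containing $v$ overlapped, then $M_1\cup M_2$ would also be a module not containing $v$, strictly larger than either, contradicting maximality; hence any two such maximal modules are either disjoint or nested, and by maximality they must be disjoint. Moreover every $x\neq v$ lies in the singleton module $\{x\}$, which is itself a module not containing $v$ and hence contained in some maximal one, so the maximal modules avoiding $v$ together with $\{v\}$ cover $V$. Each part of $\M(G,v)$ is a module, so $\M(G,v)$ is a modular partition.

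Next I would check that $\M(G,v)\triangleleft \mathcal{P}$. By Observation~\ref{obs:splitter} (or directly from the module definition), any module not containing $v$ is uniform with respect to $v$, hence lies entirely in $N(v)$ or entirely in $\overline{N}(v)$. Together with $\{v\}\in\M(G,v)$, this shows $\M(G,v)$ refines $\mathcal{P}$. For the coarseness, let $\mathcal{Q}$ be any modular partition with $\mathcal{Q}\triangleleft\mathcal{P}$. Then $\{v\}\in\mathcal{Q}$, and any other part $X\in\mathcal{Q}$ is a module of $G$ not containing $v$, hence contained in some maximal such module, i.e.\ in some part of $\M(G,v)$. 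Therefore $\mathcal{Q}\triangleleft\M(G,v)$, establishing that $\M(G,v)$ is the coarsest modular partition refining $\mathcal{P}$, in the sense of Definition~\ref{def:pmm}.

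Finally, the complexity is immediate from Theorem~\ref{th:partition-modulaire}: applying Algorithm~\ref{alg:part-modulaire} to the input partition $\mathcal{P}=\{N(v),\{v\},\overline{N}(v)\}$, which can be built in time $O(n+|N(v)|)=O(n+m)$ by a single scan of $v$'s adjacency list, produces the coarsest modular partition refining $\mathcal{P}$ in time $O(n+m\log n)$. By the uniqueness established above, this output is exactly $\M(G,v)$. No step is a serious obstacle; the only subtlety worth stating carefully is the partitive argument showing that the maximal modules avoiding $v$ are pairwise disjoint and exhaust $V\setminus\{v\}$, which is what makes the definition of $\M(G,v)$ unambiguous.
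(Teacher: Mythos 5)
Your proof is correct and takes the route the paper intends: the survey states this lemma without proof, treating it as an immediate application of Theorem~\ref{th:partition-modulaire} (Algorithm~\ref{alg:part-modulaire}) to the initial partition $\{N(v),\{v\},\overline{N}(v)\}$. Your additional verifications --- that the maximal modules avoiding $v$ are pairwise disjoint and cover $V\setminus\{v\}$ by partitivity (Lemma~\ref{lem:mod-partitive}), and that the resulting partition is exactly the coarsest modular partition refining $\mathcal{P}$ --- are precisely the details the paper leaves implicit, so there is nothing to object to.
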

\begin{figure}[htp]
\centerline{\scalebox{1}{\includegraphics[width=12cm]{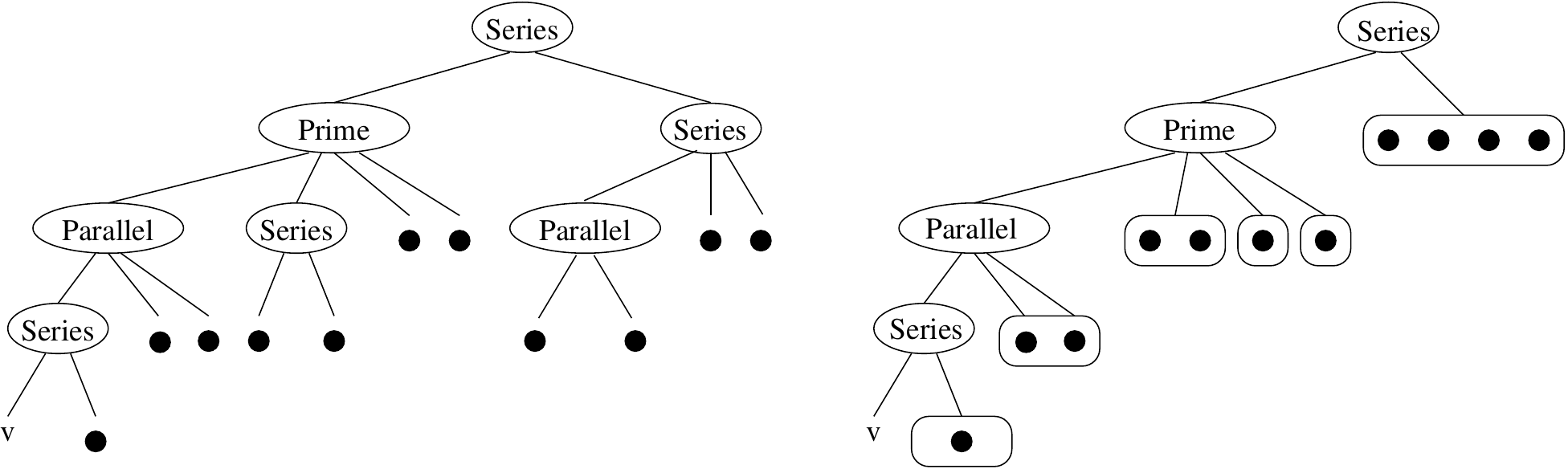}}}
\caption{
On the left, a modular decomposition tree $MD(G)$ and on the right, the modular partition $\M(G,v)$ with the corresponding spine  between $v$ and the root of $MD(G)$.
\label{fig:ehrenfreucht}}
\end{figure}

\begin{algorithm2e}[htb]
\small{
\caption{Ehrenfeucht et al.~\cite{EGMS94} \label{alg:ehrenfreucht}}
\KwIn{An arbitrary vertex $v$ of $G=(V,E)$, $T=spine(G,v)$ and $\{T_{X}=MD(G[X])\mid X\in \M(G,v)\}$}
\KwOut{The modular decomposition tree  $MD(G)$}
\Begin{
\ForEach{leaf $X$ of $T$}{
	Let $T_X=MD(G[X])$ and $p(X)$ be $X$'s father in $T$\;
	Replace $X$ by $T_X$ in $T$\;
	\lnl{ligne:clean} \If{the root $r(T_X)$ and $p(X)$ are both parallel or series}{
		Remove $r(T_X)$ and connect the children of  $r(T_X)$ to $p(X)$
		}
	}
}
}
\end{algorithm2e}

Let us notice that any degenerate strong module (series or parallel) containing $v$ will be represented in $spine(G,v)$ by a binary node. 
The purpose of test of Line~\ref{ligne:clean} in Algorithm~\ref{alg:ehrenfreucht} is to correctly fixed those binary nodes. The correctness of Algorithm~\ref{alg:ehrenfreucht} is a consequence of the following properties:

\begin{lemma} \label{lem:ehrenfreucht}\cite{EGMS94}
Let $v$ be a vertex of a graph  $G=(V,E)$ and $\M(G,v)$ be the associated modular partition. Then:
\begin{enumerate}
\item  Any non-trivial module of $G_{/\M(G,v)}$ contains $v$;
\item A set $\X\subset\M(G,v)$ is a non-trivial strong module of  $G_{/\M(G,v)}$ iff $\bigcup_{M\in\X} M$ is an ancestor of $v$ in $MD(G)$;
\item Any module not containing $v$ is a subset of a part  $M\in\M(G,v)$.
\end{enumerate}
\end{lemma}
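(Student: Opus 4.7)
I will prove the three parts in the order \emph{(3), (1), (2)}, since (1) relies on (3), and (2) uses (1) together with Lemma~\ref{lem:partition-fort}.

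The plan for (3) is to show that the family of modules of $G$ not containing $v$ has a unique maximal element above any given such module. Given a module $M$ of $G$ with $v\notin M$, I would consider the collection $\mathcal{F}(M)$ of all modules of $G$ that contain $M$ and avoid $v$. Any two elements of $\mathcal{F}(M)$ intersect (they both contain $M$), so they either are nested or overlap; in the overlapping case, the partitive property (Lemma~\ref{lem:mod-partitive}) guarantees their union is again a module, and it still avoids $v$. Iterating, $M^*=\bigcup\mathcal{F}(M)$ is a module containing $M$ and not containing $v$, and by construction it is maximal with this property, hence $M^*\in\M(G,v)$. As a by-product (applied to $M=\{u\}$ for every $u\neq v$), this also establishes that $\M(G,v)$ is indeed a partition of $V$: the parts cover $V$, and any two distinct maximal modules avoiding $v$ must be disjoint (otherwise their union would be a larger module avoiding $v$, contradicting maximality).

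For (1), suppose toward a contradiction that $\X\subsetneq\M(G,v)$ with $|\X|\geqslant 2$ is a module of $G_{/\M(G,v)}$ with $\{v\}\notin\X$. By Lemma~\ref{lem:partition}, $U=\bigcup_{M\in\X}M$ is a module of $G$ that does not contain $v$. By (3), $U$ is contained in some $M^*\in\M(G,v)$. Since the parts of $\M(G,v)$ are pairwise disjoint, the only part of $\M(G,v)$ contained in $M^*$ is $M^*$ itself, so $\X\subseteq\{M^*\}$, contradicting $|\X|\geqslant 2$.

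For (2), I combine (1) with Lemma~\ref{lem:partition-fort}. In the forward direction, if $\X$ is a non-trivial strong module of $G_{/\M(G,v)}$, then by (1) we have $\{v\}\in\X$, and by Lemma~\ref{lem:partition-fort}, $\bigcup_{M\in\X}M$ is a non-trivial strong module of $G$ containing $v$, hence an (internal) ancestor of $v$ in $MD(G)$. For the converse, let $N$ be a non-trivial strong module of $G$ with $v\in N$; the key step is to show that $N$ is a union of parts of $\M(G,v)$. For any $M\in\M(G,v)\setminus\{\{v\}\}$, strongness of $N$ forbids $N\overlap M$; since $v\in N\setminus M$, it cannot be that $N\subsetneq M$, so either $M\subseteq N$ or $M\cap N=\emptyset$. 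Therefore $\X=\{M\in\M(G,v):M\subseteq N\}$ satisfies $N=\bigcup_{M\in\X}M$, and $|\X|\geqslant 2$ with $\X\neq\M(G,v)$ since $N$ is non-trivial. Applying Lemma~\ref{lem:partition-fort} in the reverse direction concludes.

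The main conceptual obstacle is really just (3), because it is what makes the setup coherent (it ensures $\M(G,v)$ is a partition and lets one "anchor" any $v$-avoiding module into a single part); once that is in hand, parts (1) and (2) follow by bookkeeping combined with the already-established partitive correspondence of Lemmas~\ref{lem:partition} and~\ref{lem:partition-fort}.
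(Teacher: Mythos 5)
Your proof is correct, but note that the survey itself offers no proof of this lemma: it is stated with a citation to \cite{EGMS94} as the basis for the correctness of Algorithm~\ref{alg:ehrenfreucht}, so there is no in-paper argument to compare against. Your route is the natural one within the paper's framework, and it works: part (3) follows from closure of the $v$-avoiding modules under union of overlapping members (Lemma~\ref{lem:mod-partitive}), and your by-product observation that this simultaneously shows $\M(G,v)$ is a genuine partition is a point the survey leaves implicit but that parts (1) and (2) silently rely on; part (1) is then immediate from the quotient correspondence of Lemma~\ref{lem:partition}; and part (2) correctly combines (1) with Lemma~\ref{lem:partition-fort}, the key observation being that a strong module $N$ containing $v$ overlaps no part, and since $v\in N\setminus M$ rules out $N\subseteq M$, each part is either inside $N$ or disjoint from it. One small piece of bookkeeping you elide in the converse of (2): the statement quantifies over a given $\X\subset\M(G,v)$ whose union is an ancestor $N$, whereas you reconstruct $\X$ from $N$; this is harmless because the parts are pairwise disjoint and nonempty, so any $\X$ with $\bigcup_{M\in\X}M=N$ necessarily equals $\{M\in\M(G,v) : M\subseteq N\}$. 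Likewise the root $V$ causes no boundary trouble, since the strict inclusion $\X\subset\M(G,v)$ excludes it on both sides of the equivalence, consistent with your restriction to non-trivial $N$.
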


Computing $spine(G,v)$ is a the difficult and technical task of the skeleton algorithm, indeed it is its main complexity bottleneck. The solution we present hereafter has been proposed in~\cite{EGMS94} and yields quadratic running time. Later on, Dahlhaus et al.~\cite{DGM01} improved this step and obtained a subquadratic running time (see discussion of Section~\ref{sub:quasilinear}).

\subsection{Computation of $spine(G,v)$.}

\begin{definition}
A graph $G=(V,E)$ is \emph{nested} if there exists a vertex $v\in V$ which is contained in all the non-trivial modules of $G$. Such a vertex is called an \emph{inner} vertex of $G$. 
\end{definition}

As a direct consequence of Lemma~\ref{lem:ehrenfreucht}, the quotient graph $G_{/\M(G,v)}$ is a nested graph with inner vertex $v$.

In order to compute the modules of $G_{/\M(G,v)}$ and $spine(G,v)$, Ehrenfeucht et al.~\cite{EGMS94} introduced an auxiliary \emph{forcing digraph} the arc set of which guarantees the existence of a directed path from any vertex $u$ to any vertex $w\in m(u,v)$, the smallest module containing $u$ and $v$. As $v$ belongs to all the modules of $G_{/\M(G,v)}$, a simple search on the forcing graph will suffice to compute $spine(G,v)$.

\begin{definition} \label{def:forcage}  \footnote{~The definition  proposed here slightly differs from the original one of~\cite{EGMS94}. This modification simplifies the relationships with the results of \cite{DGM01}.}
Let $v$ be an arbitrary vertex of a graph $G=(V,E)$. The \emph{forcing graph} $\F(G,v)$ is a directed graph whose vertex set is $V\setminus\{v\}$. The arc $\arc{xy}$ exists if $y$ is a splitter for  $\{x,v\}$.
\end{definition}

In other words, if $\arc{xy}$ exists then $y$ belongs to any module containing $v$ and $x$.

\begin{figure}[htp]
\centerline{\scalebox{1}{\includegraphics[width=10cm]{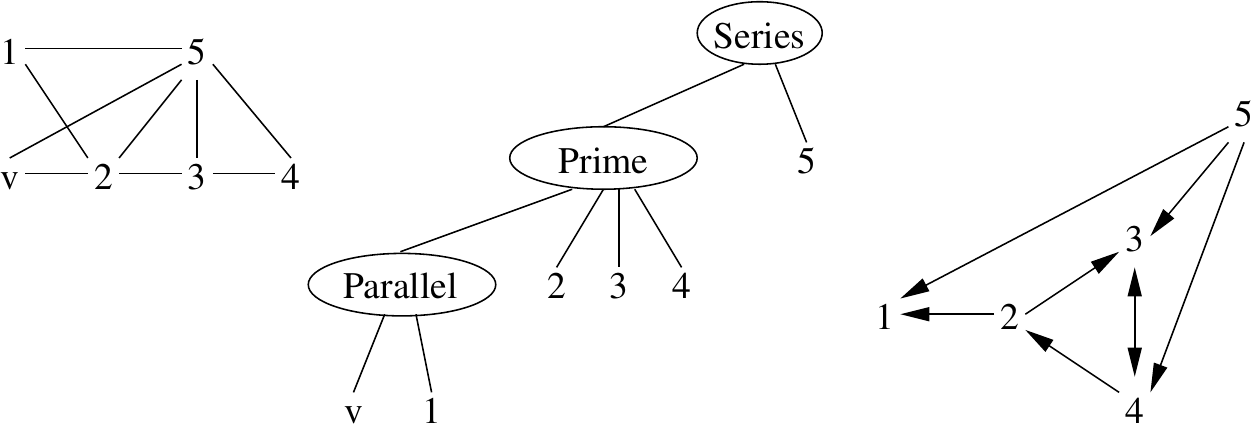}}}
\caption{A nested graph $G=(V,E)$ together with its modular decomposition tree $MD(G)$ and on its right the forcing graph $\F(G,v)$. The strongly connected components of $\F(G,v)$ are $\{1\},\{2,3,4\},\{5\}$. Any module of $G$ containing $3$ and $v$ also contains $\{1,2,4\}$, the vertices that can be reached from vertex $3$ in $\F(G,v)$.}\label{fig:forcage} 
\end{figure}

\begin{lemma} \cite{EGMS94} \label{lem:cfc}
If $X$ is the set of vertices that can be reached from vertex $x$ in the forcing graph $\F(G,v)$, then $\{v\}\cup X=m(v,x)$.
\end{lemma}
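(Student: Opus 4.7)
The plan is to prove the two inclusions $\{v\}\cup X \subseteq m(v,x)$ and $m(v,x)\subseteq \{v\}\cup X$ separately, exploiting the dual roles played by the arcs of $\F(G,v)$: on the one hand they force splitters into every module containing $\{v,x\}$, and on the other hand their \emph{absence} out of $\{v\}\cup X$ will guarantee that this set is itself a module.

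For the forward inclusion $\{v\}\cup X\subseteq m(v,x)$, I will proceed by induction on the length of a directed path from $x$ to a vertex $y\in X$ in $\F(G,v)$. The base case $y=x$ is immediate since $x\in m(v,x)$. For the inductive step, let $y$ be reached via an arc $\arc{zy}$ with $z\in m(v,x)$ already known. By Definition~\ref{def:forcage}, $y$ is a splitter for $\{z,v\}$, and since $\{z,v\}\subseteq m(v,x)$, Observation~\ref{obs:splitter} forces $y\in m(v,x)$. This gives $X\subseteq m(v,x)$, and of course $v\in m(v,x)$, so the inclusion follows.

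For the reverse inclusion, it suffices to show that $M:=\{v\}\cup X$ is itself a module of $G$ containing both $v$ and $x$; since $x$ is trivially reachable from itself, $x\in X\subseteq M$, and then $m(v,x)\subseteq M$ by the minimality of $m(v,x)$ among modules containing $\{v,x\}$. To show $M$ is a module, I will assume for contradiction that some $z\notin M$ splits $M$, and split into cases according to which vertices of $M$ witness the split. If $z$ is adjacent to $v$ but not to some $b\in X$, then $z$ is a splitter for $\{v,b\}$, which means the arc $\arc{bz}$ exists in $\F(G,v)$; since $b$ is reachable from $x$, so is $z$, giving $z\in X\subseteq M$, a contradiction. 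The symmetric case where $z$ is non-adjacent to $v$ but adjacent to some $a\in X$ is analogous. Finally, if the splitting witnesses $a,b$ both lie in $X$ (one adjacent, one not), I inspect the edge $vz$: whichever way it goes, $z$ splits $\{v,a\}$ or $\{v,b\}$, so the corresponding arc of $\F(G,v)$ again pulls $z$ into $X$, contradiction.

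The main conceptual step, and the only place requiring any real thought, is the case analysis in the second inclusion: one must notice that the definition of arcs via splitters of $\{x,v\}$ is strong enough to ensure that a potential external splitter $z$ of $M$ is always forced to be \emph{itself} reachable from $x$, regardless of whether $z$ is adjacent to $v$ or not. Once this closure property is observed, the rest is bookkeeping with Observation~\ref{obs:splitter} and the minimality definition of $m(v,x)$.
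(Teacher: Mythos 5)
Your proof is correct. Note that the paper itself gives no proof of this lemma --- it is a survey and cites \cite{EGMS94} for the statement --- so there is no internal argument to compare against; your double-inclusion proof is the natural one and matches the reasoning in the original source. The forward inclusion by induction along directed paths, using Observation~\ref{obs:splitter} with $S=\{z,v\}$, is exactly right, and your key observation for the reverse inclusion is the correct one: any external splitter $z$ of $\{v\}\cup X$ must, in each of the three cases (witness pair containing $v$ with $z$ adjacent to $v$, containing $v$ with $z$ non-adjacent to $v$, or both witnesses in $X$, resolved by inspecting the edge $vz$), split some pair $\{w,v\}$ with $w\in X$, so the arc $\arc{wz}$ of $\F(G,v)$ pulls $z$ into $X$ --- hence $\{v\}\cup X$ is a module and minimality finishes the argument. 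One small point worth flagging: you correctly read $m(v,x)$ as the \emph{smallest module} containing $v$ and $x$ (as glossed in the paragraph preceding Definition~\ref{def:forcage}), rather than the ``minimal strong module'' of the paper's Notation; this is the right reading, since with the strong-module interpretation the statement fails already on a triangle, where $\F(G,v)$ has no arcs but the smallest strong module containing $v$ and $x$ is the whole vertex set.
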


In the following we will only consider  the graph $G_{/\M(G,v)}$ and its forcing graph $\F(G_{/\M(G,v)}, v)$.  Applying Lemma~\ref{lem:cfc} to $\F(G_{/\M(G,v)}, v)$, we obtain the following property.

\begin{corollary} \cite{EGMS94} \label{cor:cfc}
Let $M_x$ be the module of $\M(G,v)$ containing the vertex $x$. If $\X$ is the set of modules that can be reached from $M_x$ in $\F(G_{/\M(G,v)}, v)$, then $\bigcup_{M\in\X} M=m(v,x)$.
\end{corollary}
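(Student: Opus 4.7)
The plan is to reduce the statement to Lemma~\ref{lem:cfc} applied to the quotient graph $G' := G_{/\M(G,v)}$ and then translate back using Lemma~\ref{lem:partition-fort}. Throughout, I identify the vertex $v$ of $G'$ with the singleton part $\{v\} \in \M(G,v)$, so that ``a module of $G$ containing $v$ that is a union of parts of $\M(G,v)$'' and ``a module of $G'$ containing the vertex $v$'' mean the same thing via Lemma~\ref{lem:partition}.

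The first step is to show that $m(v,x)$ itself is a union of parts of $\M(G,v)$. Since $m(v,x)$ is strong in $G$, it cannot overlap any module of $G$, and in particular it cannot overlap any part $P \in \M(G,v)$. Therefore $P$ and $m(v,x)$ are related by one of $P\cap m(v,x)=\emptyset$, $P\subseteq m(v,x)$, or $m(v,x)\subseteq P$. The last case would force $m(v,x)=\{v\}$ (since $v\in m(v,x)$ and $v$ is contained only in the part $\{v\}$), contradicting $x\in m(v,x)$. Hence every part of $\M(G,v)$ is either contained in or disjoint from $m(v,x)$, and $m(v,x)$ is the union of the parts $\mathcal{Y}\subseteq\M(G,v)$ that it contains. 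By Lemma~\ref{lem:partition-fort}, $\mathcal{Y}$ is a strong module of $G'$, and it clearly contains the vertex $v$ as well as the vertex $M_x$ (because $x\in M_x\cap m(v,x)$ forces $M_x\subseteq m(v,x)$).

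Next I would argue that $\mathcal{Y}$ is the smallest strong module $m_{G'}(v,M_x)$ of $G'$ containing both $v$ and $M_x$. Indeed, any strong module $\mathcal{Y}'\subseteq\mathcal{Y}$ of $G'$ that contains $v$ and $M_x$ corresponds via Lemma~\ref{lem:partition-fort} to a strong module $\bigcup_{M\in\mathcal{Y}'}M$ of $G$ that contains both $v$ and $x$; minimality of $m(v,x)$ then forces this union to equal $m(v,x)$, which in turn forces $\mathcal{Y}'=\mathcal{Y}$.

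Finally I would apply Lemma~\ref{lem:cfc} to the forcing graph $\F(G',v)$ starting at the vertex $M_x$: the set $\X$ of vertices of $G'$ reachable from $M_x$ satisfies $\{v\}\cup\X = m_{G'}(v,M_x)=\mathcal{Y}$. Unfolding parts of $\M(G,v)$ back into vertices of $G$ yields $m(v,x)=\bigcup_{M\in\mathcal{Y}}M=\{v\}\cup\bigcup_{M\in\X}M$, which is the desired equality (with the singleton $\{v\}$ tacitly absorbed on the right-hand side, exactly as in Lemma~\ref{lem:cfc}). The main obstacle is the first step: one must use that $m(v,x)$ is \emph{strong} in $G$ (not merely a module) to guarantee it does not overlap any part of $\M(G,v)$, for otherwise $m(v,x)$ need not be a union of parts and the lift via Lemma~\ref{lem:partition-fort} would fail.
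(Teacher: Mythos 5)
Your proof is correct and follows exactly the route the paper takes: the paper obtains the corollary by applying Lemma~\ref{lem:cfc} to the forcing graph of the quotient $G_{/\M(G,v)}$, which is precisely your reduction. You merely make explicit the details the paper leaves implicit --- that the strong module $m(v,x)$ cannot overlap any part of $\M(G,v)$ and is therefore a union of parts, and that Lemma~\ref{lem:partition-fort} transports minimality between $G$ and the quotient --- so nothing further is needed.
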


We now consider the \emph{block graph} $\B(G,v)$ of $\F(G_{/\M(G,v)}, v)$ (see~\cite{CLR90}) whose vertices are the strongly connected components of
 $\F(G_{/\M(G,v)}, v)$, also called the \emph{blocks} of $(G,v)$. An arc of $\B(G,v)$ between the block $B$ and $B'$ exists if the vertices of $B'$ can be reached in $\F(G_{/\M(G,v)}, v)$  from the vertices of $B$.

\begin{lemma} \cite{EGMS94} \label{lem:chemin}
The transitive reduction of the block graph $\B(G,v)$ is a chain.
\end{lemma}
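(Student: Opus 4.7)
The plan is to show that the reachability preorder on the blocks (i.e.\ on the vertices of $\B(G,v)$) is in fact a total order. The transitive reduction of a DAG whose underlying reachability is a total order is precisely a chain, so this will suffice.

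First I would exploit the nestedness of $G_{/\M(G,v)}$. By Lemma~\ref{lem:ehrenfreucht}(1), every non-trivial module of $G_{/\M(G,v)}$ contains $v$, so $G_{/\M(G,v)}$ is a nested graph with inner vertex $v$. In the modular decomposition tree of a nested graph, the strong modules containing $v$ are exactly the ancestors of the leaf $\{v\}$, and hence they form a chain under inclusion. Consequently the family $\{\,m(v,x) \mid x \neq v\,\}$ of minimal strong modules containing both $v$ and some $x$ is totally ordered by inclusion.

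Next I would invoke Corollary~\ref{cor:cfc} to translate this chain into one on the reachable sets of the block graph. Writing $R(B)$ for the set of blocks reachable from $B$ in $\B(G,v)$ (including $B$ itself), the Corollary identifies $\{v\}\cup\bigcup_{B'\in R(B)}\bigcup_{M'\in B'} M'$ with the strong module $m(v,x)$, where $x$ lies in any module of any block in $B$. In particular, the map $B\mapsto R(B)$ is injective: if $R(B_1)=R(B_2)$ then $B_1\in R(B_2)$ and $B_2\in R(B_1)$, so $B_1$ and $B_2$ are mutually reachable and therefore coincide as strongly connected components. Moreover this map sends the reachability preorder on blocks to the inclusion order on the corresponding strong modules.

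Finally, for any two blocks $B_1,B_2$, the strong modules $m(v,x_1)$ and $m(v,x_2)$ they encode are comparable by the first step, say $m(v,x_1)\subseteq m(v,x_2)$. This forces $R(B_1)\subseteq R(B_2)$, and since $B_1\in R(B_1)$ we get $B_1\in R(B_2)$, i.e.\ $B_1$ is reachable from $B_2$. Hence any two blocks are comparable in reachability: the reachability order is total, so the Hasse diagram, i.e.\ the transitive reduction of $\B(G,v)$, is a chain. The main subtlety I anticipate is to pin down precisely the correspondence between $R(B)$ and $m(v,x)$ in Corollary~\ref{cor:cfc} (in particular the implicit role of $\{v\}$, which is omitted from the forcing graph); once that identification is clean, the rest is essentially a translation between two chains.
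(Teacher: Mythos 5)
Your proof is correct and follows what is essentially the intended route: the paper states this lemma without proof (deferring to \cite{EGMS94}), but the surrounding results are exactly the ingredients you use, namely Corollary~\ref{cor:cfc} to identify each reachable set $R(B)$ with a strong module $m(v,x)$, and the fact that strong modules containing the common vertex $v$ cannot overlap and hence form an inclusion chain, which transfers to totality of the reachability order on blocks and makes the transitive reduction of $\B(G,v)$ a chain. Your closing remark about the implicit role of $\{v\}$ correctly pinpoints the one notational slip in the paper's statement of Corollary~\ref{cor:cfc} (whose left-hand side should read $\{v\}\cup\bigcup_{M\in\X} M$), and your observation that $B\mapsto R(B)$ is injective and order-reflecting is exactly the routine verification needed, using that the parts of $\M(G,v)$ are pairwise disjoint so containment of unions forces containment of block sets.
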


A set of vertices of a digraph is a \emph{sink} if it has no out-neighbour. By Lemma~\ref{lem:chemin}, any sink set of $\F(\M(G,v))$  is the union of consecutive blocks containing the last one in the transitive reduction of $\B(G,v)$. Each sink set corresponds to a module of $G_{/\M(G,v)}$.

\begin{corollary} \cite{EGMS94}
Let $v$ be a vertex of a graph $G=(V,E)$. A set $M$ of vertices containing $v$ is a module of $G_{/\M(G,v)}$ iff $M$ is the union of $\{v\}$ and the modules of $\M(G,v)$ belonging to a sink set $X$ of $\B(G,v)$.
\end{corollary}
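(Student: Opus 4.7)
The plan is to unwind the definitions and translate the module/splitter condition directly into the arc condition of the forcing graph, bypassing Corollary~\ref{cor:cfc} and Lemma~\ref{lem:chemin}. The pivotal observation is that, by Definition~\ref{def:forcage}, an arc $\arc{uw}$ exists in $\F(G_{/\M(G,v)},v)$ precisely when $w$ splits the pair $\{u,v\}$.

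For the ``only if'' direction, let $M\ni v$ be a module of $G_{/\M(G,v)}$. For every $u\in M\setminus\{v\}$ and every $w\notin M$, the module property forces $w$ to be uniform with respect to $\{v,u\}\subseteq M$, so $w$ is not a splitter of $\{u,v\}$ and $\arc{uw}\notin\F$. Hence no arc of $\F$ leaves $M\setminus\{v\}$. A quick check then shows that $M\setminus\{v\}$ must contain every strongly connected component of $\F$ it meets (any vertex of the same SCC is reachable from $u$, and reachability cannot exit $M\setminus\{v\}$), so $M\setminus\{v\}$ is a union of blocks with no outgoing arc in $\B(G,v)$, i.e.\ a sink set.

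For the ``if'' direction, let $X$ be a sink set of $\B(G,v)$ and set $M=\{v\}\cup X$. I will show that no vertex $w\notin M$ splits $M$. Suppose, for contradiction, that such a splitter $w$ exists and pick $w_1,w_2\in M$ with $w$ adjacent to $w_1$ and not to $w_2$. A short case analysis on whether one of $w_1,w_2$ equals $v$ or, if neither does, on whether $w$ is adjacent to $v$, shows that in every configuration $w$ splits some pair $\{v,w'\}$ with $w'\in X$. This yields an arc $\arc{w'w}\in\F$ from $X$ to its complement, contradicting the sink-set assumption.

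The main obstacle is this case analysis in the converse; the two sub-cases (according to the adjacency of $w$ to $v$) are exactly what is needed to funnel every possible splitting of $M$ into a forbidden outgoing arc of $X$. Everything else is a direct unfolding of the definitions and requires no further tooling.
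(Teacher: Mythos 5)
Your proof is correct, and it takes a genuinely different route from the paper's. The paper, following \cite{EGMS94}, derives this corollary from two pieces of machinery: Lemma~\ref{lem:cfc} (via Corollary~\ref{cor:cfc}), which identifies the \emph{minimal} module $m(v,x)$ with $\{v\}$ plus the set of vertices reachable from $x$ in the forcing graph, and Lemma~\ref{lem:chemin}, which says the transitive reduction of the block graph is a chain; modules containing $v$ are then unions of sets $m(v,x)$, i.e.\ $\{v\}$ together with out-closed sets of blocks. You instead prove the biconditional directly from Definition~\ref{def:forcage}, bypassing both lemmas. Your forward direction is sound: the module condition applied to pairs $\{u,v\}\subseteq M$ says exactly that no forcing arc leaves $M\setminus\{v\}$, which is therefore out-closed, hence swallows every strongly connected component it meets and forms a sink set of $\B(G,v)$. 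Your converse also checks out: the four-way case analysis (whether $w_1$ or $w_2$ equals $v$, and otherwise whether $w$ is adjacent to $v$) does funnel every splitter of $\{v\}\cup X$ into an arc of the forcing graph leaving $X$; the two implicit points are fine as well, namely that $w\notin M$ places $w$ in a block outside $X$ (since $v$ is not a vertex of the forcing graph), and that a single arc out of $X$ gives an out-neighbour in $\B(G,v)$ because blocks are strongly connected. What each approach buys: yours is more elementary and self-contained, needing no reachability argument and no chain structure, and it is a clean two-way translation between the splitter condition defining modules and the arc condition defining the forcing graph. The paper's heavier route is not wasted effort in context, though: Lemma~\ref{lem:cfc} additionally computes the minimal module $m(v,x)$ by a graph search, and Lemma~\ref{lem:chemin} is what makes the sink sets --- hence the strong modules containing $v$ --- totally ordered by inclusion, which is precisely what the skeleton algorithm needs to build $spine(G,v)$. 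Your argument establishes the corollary itself but yields neither of these algorithmic byproducts.
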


Thereby the forcing graph $\F(G_{/\M(G,v)}, v)$ describes the modules of $G_{/\M(G,v)}$ and the block graph $\B(G,v)$ allows us to compute $spine(G,v)$. Finally, $MD(G)$ is obtained recursively by following the lines of Lemma~\ref{lem:ehrenfreucht}.

\subsection{Complexity issues}
\label{sub:quasilinear}

Rather than detailing the complexity analysis, we point out the differences between the original skeleton algorithm presented in~\cite{EGMS94} and its later versions improved in~\cite{DGM01}. The interested reader should access the original papers for details. As already mentioned, a quadratic time complexity analysis is proposed in~\cite{EGMS94}. The main bottlenecks are the computation of the partition $\M(G,v)$ and the construction of $MD(G_{/\M(G,v)})$.

Two new versions of the skeleton algorithm proposed by Dahlhaus, Gustedt and McConnell~\cite{DGM01}, respectively run in $O(n+m.\alpha(n,m))$ time and in linear time. To improve the time complexity, the authors of~\cite{DGM01} borrowed from \cite{Dah95} the idea to first recursively compute the modular decomposition trees of the subgraphs induced by $N(v)$ and by $\overline{N}(v)$. It follows from the next Lemma, that $\M(G,v)$ is easy to retrieve from those trees.

\begin{lemma} \label{lem:submodule1}
If $\mathcal{X}$  is a module of $\M(G,v)$, then $\mathcal{X}$ is either a module of $G[N(v)]$ or a module of $G[\overline{N}(v)]$.
\end{lemma}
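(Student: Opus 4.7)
The plan is to break the argument into two elementary steps: first, show that the hypothesis forces $\mathcal{X}$ to lie entirely on one side of the neighbourhood bipartition of $v$; second, show that being a module survives restriction to any induced subgraph that contains the set.

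I first unpack the hypothesis. A module of $\M(G,v)$ is, by the natural reading consistent with Lemma~\ref{lem:partition}, a module $\mathcal{X}$ of $G$ that is contained in some part of the partition $\M(G,v)$; since $\{v\}$ is itself a part, this amounts to saying that either $\mathcal{X} = \{v\}$ (a trivial case that one may discard) or $\mathcal{X}$ is a module of $G$ with $v \notin \mathcal{X}$. Because $v \notin \mathcal{X}$ and $\mathcal{X}$ is a module of $G$, the vertex $v$ is not a splitter of $\mathcal{X}$, so $\mathcal{X}$ is homogeneous with respect to $v$. Unrolling the definition of homogeneity, this means either $\mathcal{X} \subseteq N(v)$ or $\mathcal{X} \cap N(v) = \emptyset$, the latter giving $\mathcal{X} \subseteq \overline{N}(v)$. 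This is the desired dichotomy.

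The second step invokes a general restriction principle for modules: if $\mathcal{X} \subseteq W \subseteq V$ and $\mathcal{X}$ is a module of $G=(V,E)$, then $\mathcal{X}$ is a module of $G[W]$. This is immediate from the definition: any $y \in W \setminus \mathcal{X}$ is, in particular, a vertex of $V \setminus \mathcal{X}$, hence a non-splitter of $\mathcal{X}$ in $G$ by hypothesis; and the adjacencies between $y$ and the vertices of $\mathcal{X}$ are literally the same in $G$ and in $G[W]$, so $y$ remains a non-splitter in $G[W]$. Applying this principle with $W = N(v)$ in the first horn of the dichotomy, and with $W = \overline{N}(v)$ in the second, yields the conclusion.

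The main obstacle is really interpretive rather than technical: one must recognize that the phrase ``$\mathcal{X}$ is a module of $\M(G,v)$'' silently carries the fact that $v \notin \mathcal{X}$, which is exactly what fuels the bipartition dichotomy. Once this is noted, neither of the two steps requires more than the definition of a module; in particular the argument does not rely on $\mathcal{X}$ being a \emph{maximal} module avoiding $v$, which is why the statement applies to arbitrary modules of $\M(G,v)$ and not just to its parts.
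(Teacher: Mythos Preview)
Your two-step argument is correct, and the paper in fact states this lemma without proof (it is offered as an easy observation explaining why $\M(G,v)$ can be read off from the modular decomposition trees of $G[N(v)]$ and $G[\overline{N}(v)]$). One small interpretive remark: in the paper's context, ``$\mathcal{X}$ is a module of $\M(G,v)$'' is most naturally read as $\mathcal{X}\in\M(G,v)$, i.e.\ $\mathcal{X}$ is one of the parts of that modular partition (a maximal module of $G$ avoiding $v$, or the singleton $\{v\}$), rather than your broader reading via Lemma~\ref{lem:partition}. As you yourself observe at the end, however, your proof only uses that $\mathcal{X}$ is a module of $G$ with $v\notin\mathcal{X}$, so it covers the intended reading a fortiori.
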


As in~\cite{EGMS94}, the technique used to compute $spine(G,v)$ relies on a forcing digraph. Remind that the vertices of $\F(\M(G,v))$ are the modules of $G$ (indeed the modules of $\M(G,v)$) which turns out to be a too strong condition for time complexity issues. In~\cite{DGM01}, the forcing digraph is rather defined with the help of an equivalence relation. The idea is that each equivalence class gathers vertices of $N(v)$ or of $\overline{N}(v)$ which appear in a set of sibling modules of some ancestor node of $v$ in $MD(G)$ (or $spine(G,v)$). The partition defined by the equivalence classes is a coarser partition than $\M(G,v)$.

The final trick is that given $MD(G[N(v)])$ and $MD(G[\overline{N}(v)])$, the computation of $\M(G,v)$, $spine(G,v)$ and finally $MD(G)$ has to be done in time linear in the number of \emph{active edges}, \ie the edges incident to $v$ and the edges linking vertices of $N(v)$ and $\overline{N}(v)$. The $\alpha(n,m)$ factor in the first version of the skeleton algorithm presented in~\cite{DGM01} is due to the use of some \emph{union-find} data-structures required to update the current tree. A clever time complexity analysis yields linear time if a careful pre-processing step is used to fix the recursion tree. 

\subsection{Bibliographic notes}
\label{sec:biblio-refine}

Let us mention that  the problem of finding a simple linear time algorithm for the modular decomposition is presented in~\cite{MS00} or \cite{Spi03} as an open problem. In its book~\cite{Spi03}, Spinrad wrote p.149:
 \begin{quote}
 "\emph{I hope and believe that in a number of years the linear algorithm can be simplified as well"}
 \end{quote}

Based on partition refinement techniques, a simplified $O(n+m\log n)$ version of the skeleton algorithm has been developed in~\cite{MS00}. 

\section{Factoring permutation algorithm}

In its PhD Thesis, Capelle~\cite{Cap97} proved that computing the modular decomposition tree of a graph and computing a \emph{factoring permutation} (see Definition~\ref{def:perm-fact} and Figures~\ref{fig:md-tree},~\ref{fig:fract-tree}) 
are two equivalent tasks, as one can be retrieved from each another in linear time~\cite{CHM02}. It follows that computing the modular decomposition of a graph can be divided into two different steps: 1) computation of a factoring permutation; 2) computation of the modular decomposition tree given the factoring permutation. The main interest of such a strategy is to obtain an algorithm that avoids the auxiliary data-structures needed to compute  \emph{union-find} and \emph{least common ancestor} operations, as used in~\cite{DGM01} for example. Moreover, in some recent applications (\emph{e.g.} comparative genomics~\cite{UY00,BHS02,HMS09}), the given data is not the graph nor the partitive family but rather a factoring permutation. This concept turns out to be of interest by itself.

As noticed by Capelle~\cite{Cap97}, this strategy was already used in few cases such as the computation of the modular decomposition tree of chordal graph~\cite{HM91} and the block tree of inheritance graphs~\cite{HHS95}. In~\cite{HPV98,HPV99}, a partition refinement algorithm is proposed to compute a factoring permutation of a graph in time $O(n+m\log n)$. Restricted to cographs, the complexity can be improved down to linear time~\cite{HP05}.

\medskip
We will first revisit Algorithm~\ref{alg:affinage} of~\cite{HPV98} and show how it can be adapted to compute a factoring permutation in time $O(n+m\log n)$. This algorithm has to be compared to the McConnell and Spinrad's implementation~\cite{MS00} of Ehrenfeucht et al.'s algorithm. The main differences are that the modular decomposition tree is never built and the relative order between the different parts of the partition is important.

There exist several linear time algorithms that given a factoring permutation of a graph compute its modular decomposition tree. A recent one is proposed in~\cite{BCMR05,BCMR08}. We describe the principle of the first one due to Capelle, Habib and de Montgolfier~\cite{CHM02}.

\subsection{Computing a factoring permutation}

An ordered partition $\P=[\X_1,\dots,\X_k]$ of a set $\mathcal{E}$ defines a partial order on $\mathcal{E}$, the maximal antichains  of which are exactly the parts of $\P$. In other words, we have $x_i<_{\P}x_j$ iff $x_i\in \X_i$, $x_j\in\X_j$ and $i<j$. Thereby refining an ordered partition could be understood as computing an extension of the corresponding partial order. 

We will abusively write $x<_{\P} M$, for $x\in \mathcal{E}$ and $M\subset \mathcal{E}$, if $x<_{\P} y$ for all $y\in M$. 
To prove the correctness of the algorithm, we need to generalize the definition of interval of permutations to ordered partitions.

\begin{definition}
Let $\P$ be an ordered partition of a set $\mathcal{E}$. A subset $S\subseteq\mathcal{E}$ is an \emph{interval} of $\P$ iff there are two parts $\mathcal{L}\in\P$ and $\mathcal{R}\in\P$ (not necessarily distinct) intersecting $S$ 
such that for any part $\X$:
\vspace{-0.4cm}
\begin{itemize}
\item if $\mathcal{L}<_{\P} \X<_{\P}\mathcal{R}$, then $\X\subset S$;
\item if $\X<_{\P}\mathcal{L}$ or $\mathcal{R}<_{\P}\X$, then $\X\cap S=\emptyset$.
\end{itemize}
\end{definition}

\medskip
To compute a factoring permutation, the main steps of the algorithm we present are: 1) computation of an ordered partition that is a modular partition $\M(G,v)$ such that the strong modules containing a vertex $v$ are intervals of $\M(G,v)$; and 2) recursive computation of a factoring permutation of each of the subgraphs induced by a module $M\in\M(G,v)$.

\begin{algorithm2e}[h]
\caption{\emph{Factoring-permutation($G$, $v$)} \label{alg:permutation-fact}}
\KwIn{A graph $G=(V,E)$ and a vertex $v\in V$}
\KwOut{A factoring permutation of $G$}
\Begin{
Let $\mathcal{P}=[\overline{N}(v),\{v\},N(v)]$ be an ordered partition\;
Apply Algorithm~\ref{alg:part-modulaire} with the following refinement rule\;
Let $x$ be the current pivot vertex and $\mathcal{Y}$ a part such that $N(x)\overlap\mathcal{Y}$\;

\eIf{$x\leqslant_{\P} v\leqslant_{\P} \mathcal{Y}$ or $\mathcal{Y}\leqslant_{\P} v\leqslant_{\P} x$}{
	Substitute $\mathcal{Y}$ by $[\mathcal{Y}\cap\overline{N}(x),\mathcal{Y}\cap N(x)]$\;
	}{
	Substitute $\mathcal{Y}$ by $[\mathcal{Y}\cap N(x),\mathcal{Y}\cap\overline{N}(x)]$\;
	}	

\ForEach{part $\mathcal{X}\in\M(G,v)$, such that $|\X|>1$}{
	Let $x$ be the last vertex of $\X$ used as pivot\;
	$\P_{\X}\leftarrow$ Factoring-permutation($G[\X]$, $x$)\;
	Substitute $\X$ by $\P_{\X}$\;
	}
}
\end{algorithm2e}

\begin{figure}[htp]
\centerline{\scalebox{1}{\includegraphics[width=13cm]{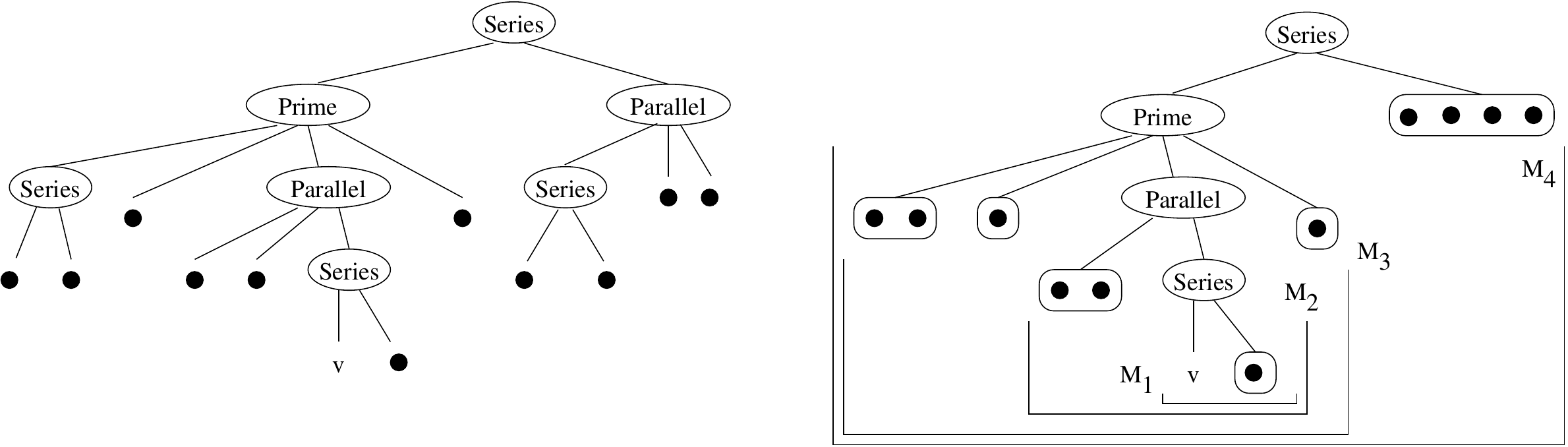}}}
\caption{Layout of the modular decomposition tree $MD(G)$ such that the neighbours of $v$ are placed on the right of $v$ and the non-neighbours on the left. The right tree enlights the modules of $\M(G,v)$ and the strong modules $M_1,M_2,M_3$ and $M_4$ containing $v$. Algorithm~\ref{alg:permutation-fact} first computes the partition $\M(G,v)$ and then recursively solves the problem on each module of $\M(G,v)$} 
\label{fig:permutation-fact}
\end{figure}

\begin{theorem} \label{th:logn}
Algorithm~\ref{alg:permutation-fact} compute in time $O(n+m\log n)$ a factoring permutation of a graph $G=(V,E)$.
\end{theorem}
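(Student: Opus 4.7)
The proof has two parts, correctness and complexity. My strategy for correctness is to maintain the following invariant throughout the refinement loop: every strong module $M$ of $G$ containing $v$ is an interval of the current ordered partition $\mathcal{P}$. For initialization, with $\mathcal{P}=[\overline{N}(v),\{v\},N(v)]$ and $v\in M$, the module $M$ is homogeneous with respect to each outside vertex, so $M$ splits as $(M\cap\overline{N}(v),\{v\},M\cap N(v))$, which is trivially an interval. The key structural fact I will use throughout is that $v$ is never removed from its singleton part by any refinement; hence vertices left of $v$ in $\mathcal{P}$ always lie in $\overline{N}(v)$ and vertices right of $v$ in $N(v)$. Consequently, for any pivot $x$, the test $x\leqslant_{\P}v$ is equivalent to $v\notin N(x)$.

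For preservation, fix $M$ and a refinement step using pivot $x$ on part $\mathcal{Y}$; the only non-trivial case is $x\notin M$ with $\mathcal{Y}$ overlapping $M$. Since $M$ is a module and $x\notin M$, either $M\subseteq N(x)$ (when $v\in N(x)$) or $M\cap N(x)=\emptyset$ (when $v\notin N(x)$), so $M\cap\mathcal{Y}$ is entirely contained in one of $\mathcal{Y}\cap N(x)$ or $\mathcal{Y}\cap\overline{N}(x)$, determined precisely by the order of $x$ relative to $v$. Since $v\in M$ and $M$ is an interval, the rest of $M$ lies on the side of $\mathcal{Y}$ closer to $v$, so for $M$ to remain an interval the subpart of $\mathcal{Y}$ containing $M\cap\mathcal{Y}$ must be placed on that side. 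A short case analysis on the four configurations in $\{x\text{ left/right of }v\}\times\{\mathcal{Y}\text{ left/right of }v\}$ shows that the rule of Algorithm~\ref{alg:permutation-fact} always does so: in the ``$v$ between $x$ and $\mathcal{Y}$'' case, the rule $[\mathcal{Y}\cap\overline{N}(x),\mathcal{Y}\cap N(x)]$ is correct, while in the complementary case the rule $[\mathcal{Y}\cap N(x),\mathcal{Y}\cap\overline{N}(x)]$ is correct.

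At termination of the main loop, the underlying unordered partition is exactly $\M(G,v)$ by Lemma~\ref{fallaitmettreunlemme} and Theorem~\ref{th:partition-modulaire}, and the invariant ensures that all strong modules of $G$ containing $v$ are intervals of the ordered $\M(G,v)$. By Lemma~\ref{lem:partition-fort}, every remaining strong module of $G$ lies inside some $\mathcal{X}\in\M(G,v)$ and is itself a strong module of $G[\mathcal{X}]$; by induction, the recursive calls produce factoring permutations of each $G[\mathcal{X}]$, so substituting these sub-permutations in place of the parts $\mathcal{X}$ yields a permutation of $V$ in which every strong module of $G$ is an interval, i.e., a factoring permutation.

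For the complexity, the main refinement is Algorithm~\ref{alg:part-modulaire} augmented by an $O(1)$ ordering decision per split, and therefore runs in $O(n+m\log n)$ by Theorem~\ref{th:partition-modulaire}; the recursive calls act on a partition of the vertex set, and a standard amortized Hopcroft-rule argument across the levels of the modular decomposition tree yields the same overall bound. The delicate step is the preservation argument: once one recognises that $v$'s immobility makes the order of $x$ relative to $v$ encode the value of $v\in N(x)$, the correct placement of the two halves of $\mathcal{Y}$ is essentially forced, but the bookkeeping of the four positional cases is where the proof requires the most care.
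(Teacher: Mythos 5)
Your invariant, initialization, termination and complexity arguments all match the paper's proof, but your preservation step has a genuine gap: the case $x\in M$ is \emph{not} trivial, and it is exactly half of the paper's own case analysis. When the pivot $x$ belongs to the strong module $M$, a boundary part $\mathcal{Y}$ partially intersecting $M$ can still be split by $N(x)$, and there your key mechanism is unavailable: $M$ is not homogeneous with respect to its own vertex $x$, so $M\cap\mathcal{Y}$ may be scattered over \emph{both} $\mathcal{Y}\cap N(x)$ and $\mathcal{Y}\cap\overline{N}(x)$, and the sentence ``$M\cap\mathcal{Y}$ is entirely contained in one of the two subparts'' fails. What saves this case is a dual splitter argument, which is the second bullet of the paper's proof: every $y\in\mathcal{Y}\setminus M$ is homogeneous with respect to $M\supseteq\{x,v\}$ and hence relates to $x$ exactly as to $v$; for instance if $\mathcal{Y}\subseteq N(v)$ then $\mathcal{Y}\cap\overline{N}(x)\subseteq M$, since any $y$ outside $M$ adjacent to $v$ but not to $x$ would split $\{x,v\}$. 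Thus one subpart is entirely inside $M$ while the other is mixed, and correctness requires placing the pure-$M$ subpart on the side of $v$ so that the mixed subpart becomes the new boundary of the interval.

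The omission is consequential, not cosmetic. Observe that if $\mathcal{Y}$ is a mixed boundary part of $M$ and the part containing $x$ lies strictly between $\{v\}$ and $\mathcal{Y}$ (so $x$ and $\mathcal{Y}$ are on the same side of $v$), then the interval property of $M$ \emph{forces} $x\in M$, since parts strictly inside the interval are subsets of $M$. In that configuration (say $\mathcal{Y}\subseteq N(v)$) the pure-$M$ subpart is $\mathcal{Y}\cap\overline{N}(x)$ and must be placed toward $v$, i.e.\ the required substitution is $[\mathcal{Y}\cap\overline{N}(x),\mathcal{Y}\cap N(x)]$ --- the opposite of the orientation you certify for the entire ``complementary'' branch. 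So your assertion that a four-case positional bookkeeping validates the rule as stated is false precisely in the sub-case you declared trivial: the ``else'' configuration must itself be split according to whether $\mathcal{Y}$ or $x$ is closer to $v$ (equivalently, whether $x\in M$), which is what the paper's two-case analysis supplies. A correct write-up therefore needs both splitter arguments: homogeneity of $M$ with respect to an outside pivot when $x\notin M$, and homogeneity of outside vertices with respect to $\{x,v\}$ when $x\in M$. (Two minor further points: the recursion should invoke Lemma~\ref{lem:ehrenfreucht}, which states that every module avoiding $v$ lies inside a part of $\M(G,v)$, rather than Lemma~\ref{lem:partition-fort}; and one must also note that a strong module of $G$ contained in a part $\mathcal{X}$ remains strong in $G[\mathcal{X}]$, since modules of $G[\mathcal{X}]$ are modules of $G$.)
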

\begin{proof}
Using lemma \ref{fallaitmettreunlemme} $\M(G,v)$ can be computed in $O(n+m\log n)$. By Lemma~\ref{lem:ehrenfreucht}, any module not containing $v$ is a subset of some module of $\M(G,v)$. It thereby suffices to prove that the following invariant is satisfied by Algorithm~\ref{alg:permutation-fact} (see Figure~\ref{fig:permutation-fact}):
 $$\Pi=\mbox{\emph{any strong module containing $v$ is an interval of the current partition}}$$
The property $\Pi$ is obviously satisfied by  the initial partition $[\overline{N}(v),\{v\},N(v)]$. Assume by induction $\Pi$ holds before the current partition $\mathcal{P}$ is refined by $N(x)$ for some vertex $x$. Let $M$ be a module containing $v$ and $\X$ be a part of $\P$ such that $\X\overlap N(x)$. There are two distinct cases:
\begin{itemize}
\item $x\notin M$: no vertex $y$ of $\X\cap N(x)$ belongs to $M$, otherwise $x$ would be a splitter for $v$ and $y$; 
\item $x\in M$:  if $\X\subset N(v)$, then any vertex $y\in\X\cap \overline{N}(x)$ belong to $M$, otherwise $y$ would be a splitter for $x$ and $v$. Similarly if  $\X\subset \overline{N}(v)$, then any vertex $y\in\X\cap N(x)$ belongs to $M$.
\end{itemize}
It follows that $\P'=$\emph{Refine}$(\P,N(x))$ also satisfies the invariant $\Pi$. The complexity analysis is similar to the analysis of Algorithm~\ref{alg:part-modulaire}.
\end{proof}

\subsection{The case of cographs}

The natural question is how to get rid of the $\log n$ factor in the complexity of Algorithm~\ref{alg:permutation-fact}. Restricting the problem to cographs (or totally decomposable graphs - see Section~\ref{sub:cographs}) gives some ideas. The reader should keep in mind that the $\log n$ factor corresponds to the number of times the neighbourhood of a vertex can be used to refine the partition. So, a linear time algorithm should use each vertex as a pivot a constant number of times.

The linear time cograph recognition algorithm proposed in~\cite{HP05} computes a factoring permutation as a preliminary step. It roughly proceeds as follows. It uses at most one vertex per partition part to refine the ordered partition $[\overline{N}(v),\{v\},N(v)]$. Assuming the input graph is a cograph, when none of the parts of the current partition is free of pivot, it can be proved that one of the two non-singleton parts closest to $v$ in the current partition, say $\X$, can be refined into $[\overline{N}(x)\cap \X,\{x\},N(x)\cap \X]$ ($x$ being the used pivot of $\X$). This step creates at least one new part free of pivot and thereby relaunches the refining process.

\subsection{From factoring permutation to modular decomposition tree}
\label{sec:perm-to-tree}


As already noticed, a natural idea to compute the modular decomposition tree is to compute for each pair  $x,y$ of vertices the set of \emph{splitter} $S(x,y)$. Unfortunately a linear time algorithm could not afford the computation of all these $O(n^2)$ sets. But if one has in hand a factoring permutation $\sigma$, it is then sufficient to consider the pairs of consecutive vertices in $\sigma$. Indeed, Capelle et al.'s algorithm~\cite{CHM02} only computes for each pair of vertices $x=\sigma(i)$ and $y=\sigma(i+1)$ ($i\in[1,n-1]$) the leftmost and the rightmost (in $\sigma$) splitter of $x$ and $y$. These two splitters define two intervals of $\sigma$, which are both contained in $m(x,y)$, the smallest module containing both $x$ and $y$:

\begin{itemize}
\item the \emph{left fracture} $F_{l}(x,y)=[z,x]$ if $z$ is the leftmost splitter of $\{x,y\}$ in $[\sigma(1),y]$ (if any);
\item the \emph{right fracture} $F_{d}(x,y)=[y,z]$ if $z$ is the rightmost splitter of $\{x,y\}$ in $[x,\sigma(n)]$ (if any).
\end{itemize}

\begin{figure}[htp]
\centerline{\scalebox{1}{\includegraphics[width=11cm]{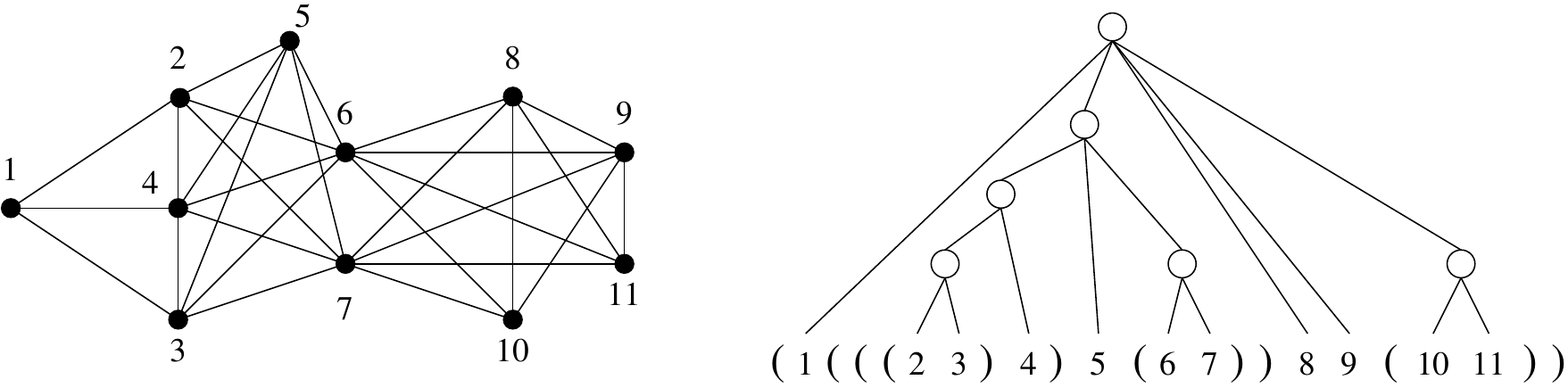}}}
\caption{A graph $G=(V,E)$ for which $\sigma=\textbf{1~2~3~4~5~6~7~8~9~10~11}$ is a factoring permutation (see Definition~\ref{def:perm-fact}). The right fracture of $(3,4)$ does not exist but $Fg(3,4)=[2,3]$. We also have $Fd(1,2)=[2,7]=Fg(7,8)$.}
\label{fig:fract-tree}
\end{figure}

The set of fractures (left and right) defines a parenthesis system. Forgetting the initial pairing of the parenthesis, this system
naturally yields a tree, called the \emph{fracture tree} and denoted $FT(G)$ (see Figure~\ref{fig:fract-tree}).  The fracture tree is actually a good estimation of the $MD(G)$ (see Lemma~\ref{lem:fracture}) which can be computed in linear time by two traversals of $\sigma$: the first traversal computes the fractures, the second builds the tree. 

\begin{lemma} \cite{CHM02} \label{lem:fracture}
Let $\sigma$ be a factoring permutation of a graph $G$ and $M$ be a strong module of $G$. If $M$ is a prime node of $MD(G)$ and if the father of $M$ is a degenerate, then there exists a node $N$ of the fracture tree $FT(G)$ such that $M$ is the set of leave of the subtree of $FT(G)$ rooted at $N$
\end{lemma}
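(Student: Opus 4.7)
The plan is to show that $M$ surfaces as an internal node of the fracture tree by proving two complementary facts: (a) no fracture has an endpoint strictly in the interior of $M$ that is paired with a position outside $M$, so the parenthesis structure inside $M$ is self-contained; and (b) there exist fractures whose endpoints delimit $M$ exactly at positions $i$ and $j$, where $M = \{\sigma(i),\ldots,\sigma(j)\}$ (an interval of $\sigma$ by Definition~\ref{def:perm-fact}, since $M$ is strong).

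For (a), I would classify every consecutive pair $(\sigma(k),\sigma(k+1))$ of $\sigma$ according to its position relative to $M$ and analyze the location of splitters. If both elements lie in $M$, then no vertex outside $M$ can be a splitter (since $M$ is a module), so the left and right fractures stay inside $[\sigma(i),\sigma(j)]$. If the pair straddles the boundary of $M$ (one element in $M$, one just outside), then $M$'s module property forces all splitters to lie outside $M$, so any resulting fracture has one endpoint exactly at the boundary ($\sigma(i)$ or $\sigma(j)$) and the other endpoint strictly outside $M$. Finally, if the pair lies fully on one side of $M$, then $M$ splits the pair either entirely or not at all (again by homogeneity), so when $M$'s vertices appear in the relevant search window the leftmost/rightmost among them is necessarily $\sigma(i)$ or $\sigma(j)$ -- never in the interior. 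Consequently every fracture that touches a position in $\{i+1,\ldots,j-1\}$ is entirely contained in $[\sigma(i),\sigma(j)]$.

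For (b), I use the primality of $M$: since $|M| \geq 4$ and $M$ is prime, every vertex of $M$ must split some pair of $M$ (otherwise it would form, together with the vertices it cannot distinguish, a non-trivial submodule of $G[M]$). In particular, the adjacency pattern of $\sigma(i)$ to the other vertices of $M$ is not constant along the order induced by $\sigma$, so there is a consecutive pair $(\sigma(k),\sigma(k+1))$ with $i < k$, $k+1\leq j$, for which $\sigma(i)$ is a splitter. Because all splitters of this pair lie in $M$ and $\sigma(i)$ is the leftmost position of $M$, $\sigma(i)$ is in fact the leftmost splitter in $[\sigma(1),\sigma(k+1)]$, so the left fracture $F_{l}(\sigma(k),\sigma(k+1))=[\sigma(i),\sigma(k)]$ opens a parenthesis exactly at position $i$. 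The symmetric argument, applied to $\sigma(j)$ as rightmost splitter, yields a right fracture $[\sigma(l),\sigma(j)]$ that closes a parenthesis exactly at position $j$.

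The role of the hypothesis that the father $D$ of $M$ is degenerate is to guarantee the outer parenthesis that makes $M$ an internal node rather than being absorbed by its ambient prime structure: $D$ degenerate supplies siblings of $M$ adjacent to $M$ in $\sigma$, whose boundary pairs (together with, if needed, a splitter outside $D$) produce a fracture whose interval strictly contains $[\sigma(i),\sigma(j)]$. Combining this outer fracture with the parens from (b) and the self-contained parenthesis system from (a) gives a well-nested block at positions $i,\ldots,j$, so the subtree of $FT(G)$ with leaf set exactly $M$ is rooted at some single node $N$. The principal obstacle, I expect, is the bookkeeping needed to resolve ties when several fractures open at $\sigma(i)$ or close at $\sigma(j)$: one must check that the nested re-pairing rule used in the construction of $FT(G)$ groups the inner $M$-parens together before matching them with any outer paren, which is why the degenerate-father hypothesis (providing an explicit outer witness) is indispensable to rule out a merge of $M$ with its parent.
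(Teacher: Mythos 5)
The paper itself offers no proof of Lemma~\ref{lem:fracture} --- it is quoted from \cite{CHM02} --- so your argument must stand on its own. Your part (b) does: if some $v\in M$ split no pair of $M$, then $M\setminus\{v\}$ would be a non-trivial module of $G[M]$ (its only external vertex, $v$, would be uniform on it), contradicting primality; since $M$ is an interval of $\sigma$, the adjacency of $v$ must flip on some consecutive pair inside $M$, and since every splitter of a pair contained in $M$ lies in $M$, the extremal vertices at positions $i$ and $j$ are indeed the leftmost, resp.\ rightmost, splitters of such pairs. But part (a) contains a false claim: for a pair straddling the boundary of $M$, the splitters need \emph{not} lie outside $M$ --- in fact, under the lemma's own hypothesis, the opposite typically holds. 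Take $M$ inducing a $P_4$ on $a,b,c,d$ with a \emph{series} father, and let $y$ be the element just right of $M$ in $\sigma$, lying in a sibling of $M$: then $y$ is adjacent to all of $M$, every vertex outside the father's module is uniform on it, and the other siblings are adjacent to both $d$ and $y$; so \emph{all} splitters of $\{d,y\}$ reachable by the left fracture are inside $M$ (here $a$ and $b$). The resulting fracture is contained in $[i,j]$ rather than having "one endpoint at the boundary and the other strictly outside", so your stated case analysis is wrong as written (the non-crossing conclusion happens to survive, but not by your reasoning).

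This same computation undermines your mechanism for the "outer witness": the boundary pair between $M$ and a sibling of its degenerate father produces a fracture \emph{inside} $[i,j]$, not one strictly containing it, because co-siblings and vertices outside the father never split that pair. More seriously, the step you yourself defer ("one must check that the nested re-pairing rule groups the inner $M$-parens together") is precisely the content of the lemma, and it is not mere bookkeeping: you have produced an open parenthesis at $i$ and a close at $j$, but nothing in your argument prevents the greedy re-pairing from matching the open at $i$ with a close at some interior position $k<j$ (such closes exist, e.g.\ from the left fracture $[\sigma^{-1}(i),\sigma^{-1}(k)]$ itself), yielding a node with leaf set $[i,k]$ and splitting $M$ across several nodes. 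That counting argument --- showing the parenthesis excess between positions $i$ and $j$ never returns to zero prematurely, and exactly where the degenerate-father hypothesis forbids $M$ from being absorbed into its parent --- is the heart of the proof in \cite{CHM02} and is missing here. The survey's own example shows the danger is real: the strong module $\{2,3,4\}$ of Figure~\ref{fig:fract-tree}, which is degenerate with a prime father, is not guaranteed a node, and $\{8,9,10,11\}$ has none; so the conclusion genuinely hinges on the exact hypothesis, not on an easily supplied outer parenthesis.
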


For example, in Figure~\ref{fig:fract-tree}, any strong module but $M=\{8,9,10,11\}$ is represented by some node of $FT(G)$. Let us notice that the above lemma does not implies that the strong module  $\{2,3,4\}$ has a corresponding node in $FT(G)$.

Henceforth to compute $MD(G)$, the fracture tree $FT(G)$ has to be cleaned. To that aim, Capelle et al.~\cite{CHM02} use four extra traversals of the factoring permutation. The first one identifies the strong modules represented by some nodes of $FT(G)$; the second finds the dummy nodes of $FT(G)$; the third search for strong modules that are merged in a single node of $FT(G)$; and the last one remove the nodes of $FT(G)$ that does not represesent strong modules. The complexity of each of these four traversals is linear in the size of $G$, $O(n+m)$.

\subsection{Bibliographic notes}

An attempt to generalize to arbitrary graphs  the linear time algorithm which computes a factoring permutation of a cograph has been proposed in~\cite{HMP04}. Unfortunately the algorithm of~\cite{HMP04} contains a flaw. The recent linear time modular decomposition algorithm presented in~\cite{TCHP08} mixes the ideas from the factoring permutation algorithms and the skeleton algorithm. It generalizes the ordered partition refining technique to tree partition and avoids union-find or least-common ancestor data-structures. In that sense this new algorithm may be considered as a positive answer to Spinrad's comment (see Section~\ref{sec:biblio-refine}).

\section{Three novel applications of the modular decomposition}

As mentioned in the introduction modular decomposition is used in a number of algorithmic graph theory applications and more generally applies to various discrete structures (see~\cite{MR84}). We conclude this survey with the presentation of three novel applications which are good witnesses of the use of modular decomposition. The first one is a pattern matching problem which is closely related to the concept of factoring permutations. The second one provides an example of dynamic programming on the modular decomposition tree in the context of comparative genomic. Finally, we list a series of parameterized problems for which module based data-reduction rules leads to polynomial size kernels.

\subsection{Pattern matching - common intervals of two permutations}

Motivated by a series of genetic algorithms for sequencing problems, \emph{e.g.} the TSP, Uno and Yagiura~\cite{UY00} formalized the concept of \emph{common interval} of two permutations. As we will see in the next subsection, in the context of comparative genonic, common intervals reveal conserved structures in chromosomal material.

\begin{definition}
A set $S$ of elements is a \emph{common interval} of a set of permutations $\Sigma$ if in each permutation $\sigma\in\Sigma$, the elements of $S$ form an interval of $\sigma$ (see Section~\ref{sec:permutation} for the definition of an interval). 
\end{definition}

It is fairly easy to observe that the family $\mathcal{I}$ of common intervals of two permutations is a weakly partitive family (see Definition~\ref{def:partitive}) and thus all the results from the theory presented in Section~\ref{sec:partitive} apply. In particular, the set of strong common intervals are organized into a tree, namely the \emph{strong interval tree}.

\begin{figure}[htp]
\centerline{\scalebox{1}{\includegraphics[width=8cm]{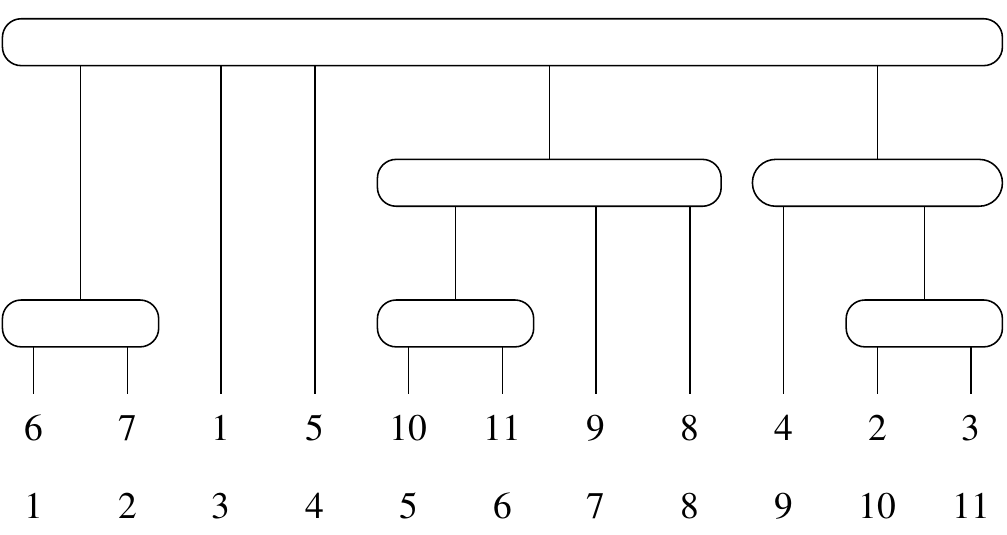}}}
\caption{The strong interval tree of two permutations. Remark $\{9,10,11\}$ and $\{8,9\}$ are also a common interval, but they are not strong as they overlap.}
\label{fig:interval-commun}
\end{figure}

Despite of the existence of the (weakly) partitive set theory for more that thirty years, the natural concept of interval substitution and decomposition appeared only very recently in the context of the combinatorial study of permutations (see \eg ~\cite{AS02,AA05}). Atkinson and Stitt~\cite{AS02} (re)discovered the concept of substitution under the name of \emph{wreath product}. In 2005, Albert and Atkinson showed that, if the number of \emph{simple} (\ie prime) permutations in a pattern restricted class of permutations is finite, the class
has an algebraic generating function and is defined by a finite set of restrictions. More recently, Bouvel, Rossin and Viallette~\cite{BR06,BRV07} used the strong interval tree to solve the longest common pattern problem between two permutations.

Uno and Yagiura~\cite{UY00} proposed the first linear time algorithm to enumerate the common intervals of two permutations. More precisely, it runs in $O(n+K)$ time, where $K$ is the number of those common intervals (which is possibly quadratic). Alternative algorithms have been recently proposed~\cite{HMS09,BCMR08}. We sketch Uno and Yagiura's algorithm and discuss how it can be genralized to compute the modules of a graph when a factoring permutation is given.

Without loss of generality, we will consider the problem of computing the common intervals of a permutation $\sigma$ and the identity permutation $\mathbb{I}_n$. To identify the common intervals of a permutation $\sigma$ and $\mathbb{I}_n$, the algorithm traverses $\sigma$ only once. We denote by  $[i,j]$ the interval of $\sigma$ composed by the elements whose indexes are between $i$ and $j$ in $\sigma$: i.e.  $[i,j]=\{x\mid i\leqslant \sigma(x)\leqslant j\}$. An element $x\notin[i,j]$ is a \emph{splitter} of the interval $[i,j]$ if there exist $y\in [i,j]$ and $z\in [i,j]$ such that $y<x<z$. By $s([i,j])$ we denote the
number  of \emph{splitters} of the interval $[i,j]$.
The algorithm uses a list  \texttt{Potentiel} to filter and extract $\sigma$ the common intervals of $\sigma$ and $\mathbb{I}_n$. An element $i$ belongs to the list \texttt{Potentiel} as long as it may be the right boundary of a common interval.
The step $i$ consists in removing those elements which we know they cannot be the left boundary of a common containing. This filtering can be done efficiently by computing $s([i,j])$ (see Lemmas~\ref{lem:UY1} and~\ref{lem:UY2}).

\begin{algorithm2e}[h]
\label{alg:UY00}
\caption{Uno and Yagiura's algorithm~\cite{UY00}}
\KwIn{A permutation $\sigma$}
\KwOut{The set of intervals common to $\sigma$ and the identity permutation $\mathbb{I}_n$}
\Begin{
Let \texttt{Potentiel} be an empty list\;
\For{$i=n$ downto $1$}{
	(Filter) Remove from \texttt{Potentiel} the boundaries $r$ s.t. $\forall j\leqslant i$, $[j,r]$ is not a \textbf{common interval} of $\sigma$ and $\mathbb{I}_n$ \;
	(Extraction) Search \texttt{Potentiel} to find the boundaries $r$ s.t. $[i,r]$ is a \textbf{common interval} of $\sigma$ and $\mathbb{I}_n$ and output those intervals $[i,r]$\;
	(Addition) Add $i$ to \texttt{Potentiel}\;
	}
}
\end{algorithm2e}

The following properties are fundamental in the correctness of the algorithm:

\begin{lemma} \cite{UY00} \label{lem:UY1}
An interval $[i,j]$ of $\sigma$ is a common interval of $\sigma$ and $\mathbb{I}_n$ iff $s([i,j])=0$.
\end{lemma}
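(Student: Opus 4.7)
The statement is essentially an unfolding of definitions, so the plan is to prove both implications directly from the characterization of intervals of $\mathbb{I}_n$ as sets of consecutive integers.

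For the forward direction, suppose $[i,j]$ is a common interval of $\sigma$ and $\mathbb{I}_n$. Then, viewed as a set of integers, it must be of the form $\{a,a+1,\dots,b\}$ for some $a\leqslant b$ (the only intervals of the identity permutation). Any element $x\notin [i,j]$ then satisfies either $x<a$ or $x>b$; in either case, $x$ cannot lie strictly between two elements of $[i,j]$, whose minimum and maximum are exactly $a$ and $b$. Hence no such $x$ is a splitter, i.e.\ $s([i,j])=0$.

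For the converse, assume $s([i,j])=0$. Set $a=\min [i,j]$ and $b=\max [i,j]$. Pick any integer $x$ with $a<x<b$: by definition $a,b\in [i,j]$ witness that $x$ would be a splitter if $x\notin [i,j]$; since $s([i,j])=0$, we must have $x\in [i,j]$. This yields $\{a,a+1,\dots,b\}\subseteq [i,j]$. The reverse inclusion $[i,j]\subseteq\{a,\dots,b\}$ is immediate from the definition of $a$ and $b$, so $[i,j]=\{a,\dots,b\}$, which is an interval of $\mathbb{I}_n$. Combined with the fact that $[i,j]$ is by construction an interval of $\sigma$, it is a common interval of $\sigma$ and $\mathbb{I}_n$.

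There is no real obstacle here beyond keeping the two uses of the word ``interval'' straight: on the $\sigma$ side, $[i,j]$ is an interval by definition (the elements occupying positions $i,\dots,j$), while on the $\mathbb{I}_n$ side, being an interval means consisting of consecutive integers. The splitter count $s([i,j])$ is precisely the obstruction to the latter, which is what makes the equivalence tight.
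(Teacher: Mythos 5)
Your proof is correct: the paper itself gives no proof of this lemma (it is cited from Uno--Yagiura as an immediate consequence of the definitions), and your definition-unfolding argument -- splitters of $[i,j]$ are exactly the integers strictly between $\min[i,j]$ and $\max[i,j]$ that are missing from the set, so $s([i,j])=0$ iff the set is a block of consecutive integers, i.e.\ an interval of $\mathbb{I}_n$ -- is precisely the intended reasoning. Nothing further is needed.
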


\begin{lemma} \cite{UY00,BHP05} \label{lem:UY2}
If $s([i,j])>s([i,j+1])$, then it does not exist $r<i$ such that $[r,j]$ is a common interval of $\sigma$ and $\mathbb{I}_n$.
\end{lemma}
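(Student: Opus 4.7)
The plan is to reduce the lemma to a one-line argument after first producing a convenient closed form for $s$. Because $\mathbb{I}_n$ is the identity permutation, the splitters of $[i,j]$ are precisely the integers lying strictly between $\min([i,j])$ and $\max([i,j])$ that are absent from the position-range $\{\sigma^{-1}(i),\dots,\sigma^{-1}(j)\}$. Counting these gives the formula
\[
s([i,j]) \;=\; \max([i,j]) - \min([i,j]) - (j-i),
\]
which I would verify directly from the definitions in a couple of lines. Observe in passing that this already provides a very short alternate proof of Lemma~\ref{lem:UY1}, since $s([i,j])=0$ iff $[i,j]$ is the full integer range from its min to its max, i.e.\ a common interval of $\sigma$ and $\mathbb{I}_n$.

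Now set $x=\sigma^{-1}(j+1)$, so that $[i,j+1]=[i,j]\cup\{x\}$. Substituting the closed form, the hypothesis $s([i,j])>s([i,j+1])$ becomes
\[
\max([i,j]) - \min([i,j]) + 1 \;>\; \max([i,j+1]) - \min([i,j+1]).
\]
A short case analysis on the position of $x$ relative to the extrema of $[i,j]$ shows that this inequality can hold only when $\min([i,j])<x<\max([i,j])$: if $x>\max([i,j])$ then the right-hand side equals $x-\min([i,j])\geq\max([i,j])-\min([i,j])+1$, contradicting the strict inequality, and the case $x<\min([i,j])$ is symmetric. Hence the hypothesis forces $x$ to lie strictly between the two extrema of $[i,j]$.

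Finally, let $r<i$ be arbitrary. Since $[i,j]\subseteq[r,j]$, the extrema of $[r,j]$ can only widen, so
\[
\min([r,j])\;\leq\;\min([i,j])\;<\;x\;<\;\max([i,j])\;\leq\;\max([r,j]).
\]
On the other hand $x$ occupies position $j+1$ in $\sigma$, so $x\notin[r,j]$. Therefore $x$ is a splitter of $[r,j]$, which gives $s([r,j])\geq 1$, and by Lemma~\ref{lem:UY1} the set $[r,j]$ is not a common interval of $\sigma$ and $\mathbb{I}_n$. The only step that demands any care is the three-case analysis of the middle paragraph; the rest is a straightforward monotonicity observation on min and max of a nested family of position-intervals.
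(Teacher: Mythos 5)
Your proof is correct and takes essentially the same route as the paper's own (sketched) argument: the hypothesis $s([i,j])>s([i,j+1])$ is exactly the statement that $x=\sigma^{-1}(j+1)$ is a splitter of $[i,j]$, and since any $[r,j]$ with $r<i$ contains $[i,j]$ but occupies only positions $r,\dots,j$ and so excludes $x$, it inherits $x$ as a splitter and cannot be a common interval. Your closed form $s([i,j])=\max([i,j])-\min([i,j])-(j-i)$ is simply a clean, self-contained way of making the paper's remark rigorous (and, as you note, it also gives Lemma~\ref{lem:UY1} immediately).
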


The second lemma above means that if $s([i,j])>s([i,j+1])$ then the vertex $\sigma^{-1}(j+1)$ is a splitter of $[i,j]$. Thereby any common interval containing $[i,j]$ as a subset has to extend up to $\sigma^{-1}(j+1)$.

\paragraph*{Application to factoring permutations of a graph.} The most striking link between common intervals and modules of graphs is observed on permutation graphs (see Lemma~\ref{lem:interval-permutation}). \emph{Permutation graphs} are defined as the intersection graphs of a set of segments between two parallel lines (see~\cite{Gol80,BLS99} for example). It follows that the vertices of a permutation graph $G=(V,E)$ can be numbered from $1$ to $n$ such that there exists a permutation $\sigma$ of $[1,n]$ such that vertex numbered $i$ is adjacent to vertex numbered $j$ iff $i<j$ and $\sigma(j)<\sigma(i)$. The permutations $\sigma$ and $\mathbb{I}_n$ form the \emph{realizer} of $G$. As first observed by de Montgolfier, any permutation belonging to a realizer of a permutation graph is a factorizing permutation of that graph. It follows from Lemma~\ref{lem:fp-perm-partitive} that:

\begin{lemma}~\cite{Mon03} \label{lem:interval-permutation}
Let $G=(V,E)$ be a permutation graph and $(\mathbb{I}_n,\sigma)$ be its realizer. A set of vertices $M$ is a strong module iff $M$ is a strong common interval of $\mathbb{I}_n$ and $\sigma$.
\end{lemma}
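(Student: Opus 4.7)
The plan is to invoke Lemma~\ref{lem:fp-perm-partitive} with the module family $\M$ of $G$ and with $\mathbb{I}_n$ playing the role of factoring permutation, as in de Montgolfier's observation quoted just above the statement. Two preliminary reductions identify the family of common intervals with the intervals of $\mathbb{I}_n$ that happen to be modules, after which the real effort goes into verifying that $\mathbb{I}_n$ really is a factoring permutation of $\M$.

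The first preliminary is that every common interval $M$ is a module. Representing each vertex $v$ by the point $(v,\sigma(v))$ in the grid, the realizer condition reads ``$u$ and $v$ are adjacent iff segment $uv$ has negative slope''; the common-interval hypothesis says no point outside $M$ lies inside the axis-parallel bounding box of $M$, so any $k \notin M$ sits in one of the four corner regions, which forces all segments from $k$ to $M$ to share a slope sign and hence $k$ not to split $M$. The second preliminary is that if $M$ is an interval of $\mathbb{I}_n$ and a module, then it is automatically an interval of $\sigma$: otherwise some $k \notin M = \{m_{\min},\dots,m_{\max}\}$ would have $\sigma(k)$ strictly between $\min_{a\in M}\sigma(a)$ and $\max_{a\in M}\sigma(a)$, and since $k$ lies outside $[m_{\min},m_{\max}]$ the realizer formula would directly produce $a,a' \in M$ with $k \sim a$ and $k \not\sim a'$.

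The main step is to prove de Montgolfier's observation that every strong module $M$ of $G$ is an interval of $\mathbb{I}_n$. Arguing by contradiction, fix $c \in [m_{\min},m_{\max}] \setminus M$; replacing $G$ by $\overline{G}$ if needed (this preserves $\M$ and $\mathbb{I}_n$ and only reverses $\sigma$) we may assume $c$ is adjacent to every vertex of $M$. Writing $M_L = \{a \in M : a<c\}$ and $M_R = \{a \in M : a>c\}$, both nonempty, the realizer then forces $\sigma(a) > \sigma(c)$ on $M_L$ and $\sigma(a) < \sigma(c)$ on $M_R$, so in particular $\sigma(m_{\max}) < \sigma(c) < \sigma(m_{\min})$. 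A first consequence is that every other $c' \in C := [m_{\min},m_{\max}] \setminus M$ must also be adjacent to all of $M$, since the only other uniform possibility (non-adjacency to every element of $M$) would yield the incompatible inequality $\sigma(m_{\min}) < \sigma(m_{\max})$; and each such $c'$ is then trapped with $\sigma(c') \in (\sigma(m_{\max}),\sigma(m_{\min}))$. The crux is the claim that $N = M_L \cup C$ is a module of $G$. For $v \in M_R$, splitting on $v > c'$ versus $v < c'$ and using that $c'$ is adjacent to $v \in M$, one gets $v \sim c'$; and $v \sim a$ for $a \in M_L$ follows from $\sigma(v) < \sigma(c) < \sigma(a)$. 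For $v \notin [m_{\min},m_{\max}]$, the $M$-uniformity of $v$ combined with the trapped range of every $\sigma(c')$ extends $v$'s uniform behavior on $M$ to a uniform behavior on $N$ (the two symmetric positions $v < m_{\min}$ and $v > m_{\max}$, each with the two sub-cases of being adjacent to all or to none of $M$, all go through by direct $\sigma$-comparisons). Once $N$ is a module, the three nonempty sets $N \cap M = M_L$, $N \setminus M = C$, $M \setminus N = M_R$ witness that $N$ overlaps $M$, contradicting the strongness of $M$. The hard part is exactly this verification that $N$ is a module; weaker candidates like $M_L$ alone or $M \cup \{c\}$ either sit inside $M$ or fail to be modules in some sub-cases.

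Applying the main step symmetrically (swapping the roles of $\mathbb{I}_n$ and $\sigma$) shows that every strong module is also an interval of $\sigma$, hence a common interval. Combined with the two preliminaries, this yields the equality $\mathcal{I}(\M,\mathbb{I}_n) = \{M \subseteq V : M \text{ is a common interval of } \mathbb{I}_n \text{ and } \sigma\}$. Lemma~\ref{lem:fp-perm-partitive} applied to the partitive family $\M$ with factoring permutation $\mathbb{I}_n$ finally identifies the strong elements of $\mathcal{I}(\M,\mathbb{I}_n)$ with those of $\M$, concluding that the strong common intervals of $(\mathbb{I}_n,\sigma)$ are exactly the strong modules of $G$.
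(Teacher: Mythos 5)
Your proof is correct and follows essentially the same route as the paper: show that $\mathbb{I}_n$ is a factoring permutation of the module family $\M$, identify the common intervals of $(\mathbb{I}_n,\sigma)$ with $\mathcal{I}(\M,\mathbb{I}_n)$, and conclude via Lemma~\ref{lem:fp-perm-partitive}. The only difference is that the paper merely cites de Montgolfier's observation from~\cite{Mon03}, whereas you prove it outright; your two preliminaries (every common interval is a module; a module that is an interval of $\mathbb{I}_n$ is automatically an interval of $\sigma$) and the construction of the overlapping module $N = M_L \cup C$ contradicting the strongness of $M$ all check out, including the complementation step and the symmetric application to $\sigma$.
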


The permutation graph corresponding to the permutations depicted in Figure~\ref{fig:interval-commun} is the graph $G$ of Figure~\ref{fig:md-tree}. Notice that the strong interval tree of these two permutations is isomorphic to the modular decomposition tree of $G$.

It follows from Lemma~\ref{lem:interval-permutation} that applied to the realizer of a permutation graph, Algorithm~\ref{alg:UY00} computes its strong modules. Though some extra work is required to obtained the modular decomposition tree, the complexity remains linear time. Moreover, as shown in~\cite{BHP05}, Uno and Yagiura's algorithm can directly be adapted to compute, given a factoring permutation, the strong modules of a graph. The number $s([i,j])$ becomes the number of splitters (in the sense of the modular decomposition, see Section~\ref{sec:module}) of the vertices contained in the interval $[i,j]$ of the factoring permutation. Now notice that Algorithm~\ref{alg:UY00} does not only output the strong common intervals. In order to restrict the enumeration to strong modules, a slight modification is required. A first traversal computes the strong right modules (i.e. the modules that are intervals of $\sigma$ and which are not overlapped on their right boundary by any other module). Then a second traversal can detect those modules which are overlapped on the left boundary.

\subsection{Comparative genomic - perfect sorting by reversals}

A \emph{reversal} in a permutation $\sigma$ consists in reversing the order of the elements of an interval of $\sigma$. When dealing with \emph{signed} permutations (whose elements are positive or negative), a reversal also flips the sign of the element of the reserved interval. Given two (signed) permutations $\sigma$ and $\tau$, the problem of \emph{sorting by reversals} asks for a series of reversals (a \emph{scenario}) to transform $\sigma$ into $\tau$.

Sorting by reversals is used in comparative genomic to measure the evolutionary distance between the genomes of two chromosomes, modeled as signed permutations~\cite{BHS02}. When comparing two genomic sequences, it can be assumed that the intervals having the same gene content are likely to have been present in their common ancestor and may witness to some functionally interacting proteins. Such a conserved genomic structure in the signed permutation model corresponds to common intervals. So to guess an evolutionary scenario between two genomic sequences represented by signed permutations $\sigma$ and $\tau$, one could asks for the smallest \emph{perfect scenario}, which is a series of reversals that preserves any common interval of $\sigma$ and $\tau$. For further details on this topic, the reader could refer to~\cite{BHS02,BBCP05}.

\begin{figure}[htp]
\centerline{\scalebox{1}{\includegraphics[width=8cm]{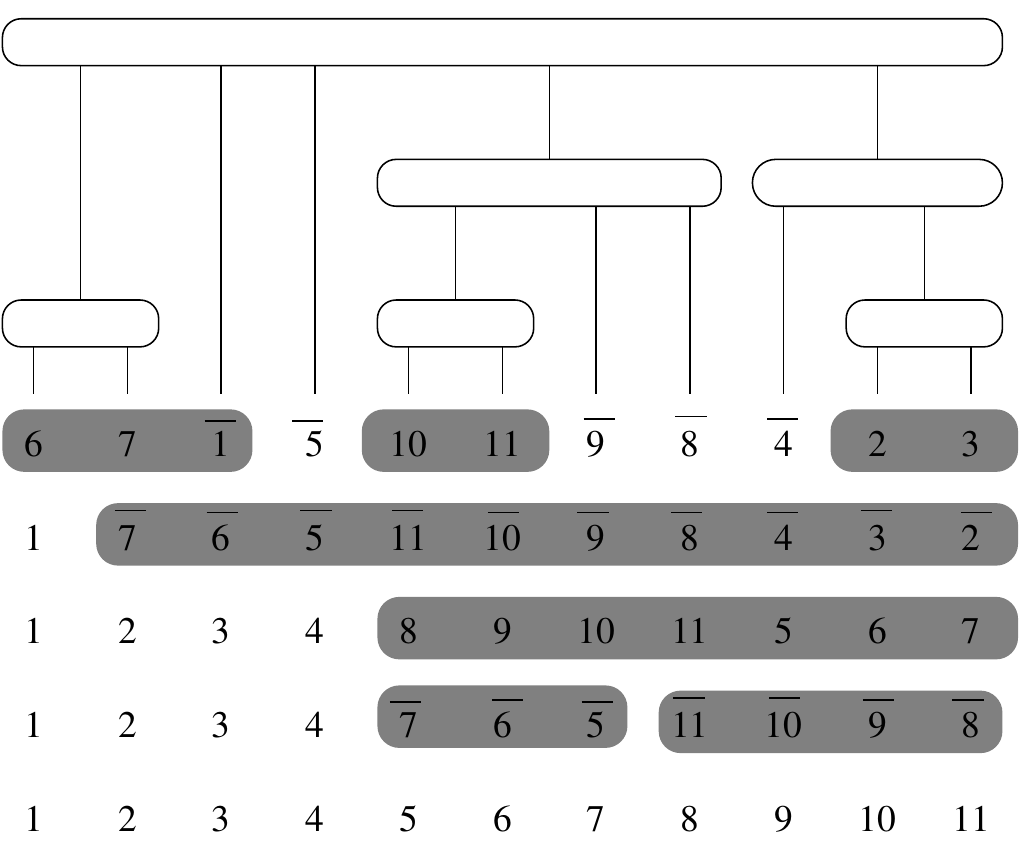}}}
\caption{A perfect scenario of length $7$.}
\label{fig:interval-commun-sc}
\end{figure}

As mentioned in the previous subsection, the set of common intervals of two permutations (signed or not) defines a weakly partitive family. It follows that one can distinguish prime from degenerate strong common intervals. As shown by the following lemma, we can read on the strong interval tree which are the perfect scenarios.

\begin{lemma}~\cite{BBCP05}
A reversal scenario for two signed permutations $\sigma$ and $\tau$ is perfect iff any reversed interval is either a prime common interval of $\sigma$ and $\tau$, or the union of strong common intervals which form a subset of the children of a prime common interval.
\end{lemma}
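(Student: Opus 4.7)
My plan is to exploit the structural characterization of the weakly partitive family $\mathcal{I}(\sigma,\tau)$ of common intervals via its strong interval tree $T$ (Theorem~\ref{th:partitive}). The cornerstone observation for both directions is a \emph{preservation criterion}: a reversal of an interval $R$ preserves a common interval $J$ if and only if $R$ and $J$ do not strictly overlap (i.e.\ $R\cap J=\emptyset$, $R\subseteq J$, or $J\subseteq R$); otherwise some elements of $J$ are shuffled while others remain fixed, breaking its contiguity in one of the permutations. So the task reduces to characterising which $R$ fail to overlap any element of $\mathcal{I}$.

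For the $(\Leftarrow)$ direction, I take a scenario whose reversed intervals are all of the prescribed form and show, for each such $R$ and each $J\in\mathcal{I}$, that $R$ and $J$ do not strictly overlap. If $R$ is a prime strong common interval, it overlaps no other strong element (by definition of strongness), and any non-strong $J$ lies entirely inside some degenerate node $D$ by Theorem~\ref{th:partitive}; since $D$ is strong and thus does not overlap $R$, neither does $J\subseteq D$. If $R$ is a union of strong common intervals forming a subset of the children of a prime node $P$, I distinguish $J$'s depending on their position in $T$: any $J$ lying inside $P$ is either contained in a single child of $P$ (hence disjoint from, or contained in, $R$) or is a union of children of $P$; but primality of $P$ forbids proper such unions to be common intervals, so $J$ must equal $P$ and contain $R$. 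Common intervals not inside $P$ are disjoint from $P$ or contain $P$, hence disjoint from $R$ or contain $R$.

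For the $(\Rightarrow)$ direction, let $R$ be any reversed interval of a perfect scenario. The preservation criterion applied to every strong common interval shows that $R$ strictly overlaps no element of $T$; Theorem~\ref{th:partitive} applied to $\mathcal{I}$ then forces $R$ to be either a node of $T$ or a union of consecutive children of some node $N$. I now argue $N$ can be taken to be prime (or $R$ itself is prime). If $R$ is a prime strong common interval, we are in the first case. If $R$ equals a degenerate strong node, then by the alternation property of the strong interval tree for weakly partitive families its parent (if any) is prime, so $R$ appears as a one-element subset of the children of a prime node, fitting the second case. The only remaining possibility to rule out is that $R$ is a proper sub-union of children of a degenerate node $N$, say $R=c_i\cup\cdots\cup c_j$ with $\{i,\dots,j\}\subsetneq\{1,\dots,k\}$. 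Because $N$ is degenerate, $c_i\cup\cdots\cup c_{j+1}$ (or $c_{i-1}\cup\cdots\cup c_j$) is also a common interval and it strictly overlaps $R$; by the preservation criterion the reversal of $R$ destroys it, contradicting perfectness.

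The main obstacle is the degenerate-node dichotomy in the forward direction: I must cleanly separate the case where $R$ is an entire degenerate strong interval (harmless, since by the alternation of labels along paths in $T$ it absorbs into case~(b) under its prime parent) from the case where $R$ is a proper non-singleton sub-union of a degenerate node's children, which is the only thing truly excluded. The overlap-with-a-sibling-union argument above is the key technical step; a small extra care is needed at the root of $T$, where if the root is degenerate one either rules out the root-reversal (it is a trivial no-op up to global reflection and thus never needed in a perfect scenario, or one adjoins a virtual prime root) to keep the statement uniform.
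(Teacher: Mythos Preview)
The survey does not prove this lemma at all; it simply quotes it from~\cite{BBCP05}. So there is no ``paper's own proof'' to compare against. Your approach via the preservation criterion (a reversal on $R$ keeps a common interval $J$ contiguous iff $R$ and $J$ do not strictly overlap) together with the structure of the strong interval tree is exactly the right machinery, and your case analysis is largely sound.

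There is, however, a genuine gap in your $(\Rightarrow)$ direction: the ``alternation property'' you invoke --- that the parent of a degenerate (linear) node in the strong interval tree is necessarily prime --- is \emph{false}. For common intervals of two permutations the family is only weakly partitive; linear nodes come in two flavours (increasing and decreasing), and while two consecutive linear nodes of the \emph{same} orientation are merged, a linear increasing node may perfectly well have a linear decreasing child. Concretely, take $\sigma=1\,2\,3\,4\,5$ and $\tau=2\,1\,3\,4\,5$: the root is linear increasing with children $\{1,2\},\{3\},\{4\},\{5\}$, and $\{1,2\}$ is linear decreasing. Reversing $\{1,2\}$ breaks no common interval, yet $\{1,2\}$ is neither a prime strong interval nor a union of children of a prime node. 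This shows two things: your alternation step does not go through, and the survey's statement of the lemma is itself slightly imprecise (the first clause should read ``a \emph{strong} common interval'' rather than ``a \emph{prime} common interval'', which is how it is handled in~\cite{BBCP05}). Your argument actually proves the corrected statement cleanly once you drop the alternation claim: any strong node $R$ overlaps no common interval, as your tree analysis shows.

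A smaller point on the $(\Leftarrow)$ direction: when $R$ is a prime strong node and $J$ is a non-strong common interval inside a linear node $D$, your sentence ``since $D$ is strong and thus does not overlap $R$, neither does $J\subseteq D$'' skips the case $R\subseteq D$. You need one more line: if $R\subseteq D$ then, $R$ being a strong node distinct from $D$, one has $R\subseteq c_\ell$ for some child $c_\ell$ of $D$, whence $R\subseteq J$ or $R\cap J=\emptyset$ according to whether $c_\ell$ is among the children composing $J$.
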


It follows from the previous lemma that the strong interval tree is useful to compute minimum perfect scenarios.  Indeed with some extra technical properties to deal with the signs it can be shown that a simple dynamic programming algorithm on the strong interval tree solves the problem in time $O(2^k\times n\sqrt{n\log n})$, where $k$ is the maximum number of prime nodes which are children of the same prime node. In practice, the parameter $k$ keeps very small~\cite{BCP08}: e.g. when comparing the chromosome $X$ of the mouse and the rat, we have $k=0$~\cite{BBCP07}.

\subsection{Parameterized complexity and kernel reductions - cluster editing}

The design of parameterized algorithms is, among others, one of the modern techniques to cope with NP-hard problems. A problem $\Pi$ is \emph{fixed parameter tractable} (FPT) with respect to parameter $k$ if it can be solved in time $f(k).n^{O(1)}$ where $n$ is the input size. The idea behind parameterized algorithms is to find a parameter $k$, as small as possible, which controls the combinatorial explosion.  Many algorithm techniques have been developed in the context of fixed parameter complexity, among which \emph{kernelization}. A parameterized problem $(\Pi,k)$ admits a \emph{polynomial kernel} if there is a polynomial time algorithm (a set of \emph{reduction rules}) that reduces the input instance to an instance whose size is bounded by a polynomial $p(k)$ depending only in $k$, while preserving the output. The classical example of parameterized problem having a polynomial kernel is the problem \textsc{vertex cover} parameterized by $k$ the solution size, which has a $2k$ vertex kernel. For textbooks on this topics, the reader should refer to~\cite{DF99,Nie06,FG06}.

Recently, the modular decomposition appeared in kernalization algorithms for a series of parameterized problems among which: \textsc{cluster editing}~\cite{Nie06}, \textsc{bicluster editing}~\cite{PSS07}, \textsc{fast} (feedback arc set in tournament)~\cite{DGH06b}, \textsc{closest 3-leaf power}~\cite{BPP09}, \textsc{flip consensus tree}~\cite{BBT08}. We discuss the \textsc{cluster editing} problem. Concerning the others, the reader should refer to the original papers.

The parameterized \textsc{cluster editing} problem asks whether the edge set of an input graph $G$ can be modified by at most $k$ modifications (deletions or insertions) such that the resulting graph $H$ is a disjoint union of cliques (e.g. clusters). This problem is NP-complete but can be solved in time $O^*(3^k)$ by a simple bounded search tree algorithm~\cite{Cai96}, which iteratively branches on at most $k$ $P_3$'s. Recent papers~\cite{Guo07,FLR07} showed the existence of a linear kernel (best bound is $4k$). The reduction rules used for these linear kernels are crown rules involving modules. For the sake of simplicity we only present the two basic reduction rules which leads to a quadratic vertex kernel.

\begin{lemma}
Let $G=(V,E)$ be a graph. A quadratic vertex kernel for the \textsc{cluster editing} problem is obtained by the following reduction rules:
\begin{enumerate}
\item Remove from $G$ the connected components which are cliques.
\item If $G$ contains a clique module $C$ of size at least $k+1$, then remove from $|C|-k-1$ vertex from $C$.
\end{enumerate}
\end{lemma}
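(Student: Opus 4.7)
The plan is to split the argument into two independent pieces: safety of each reduction rule (the reduced instance has the same answer as the original), and a polynomial bound on the number of vertices after exhaustive application. The safety part is mostly a case analysis driven by the module property, while the size bound exploits a classical ``clean vs.\ dirty'' partition of the vertex set with respect to a hypothetical optimal solution.

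For the safety of Rule 1, I would simply observe that a connected component that is already a clique needs no edit and is isolated from the rest, so its removal preserves the optimal edit number. For Rule 2, let $C$ be a clique module of $G$ with $|C|\geqslant k+1$, let $C^{\star}\subseteq C$ denote the $k+1$ retained vertices and $G'=G-(C\setminus C^{\star})$. The key claim is that in \emph{any} editing scenario using at most $k$ edits the whole of $C$ must end up in a single cluster $Q$ of the target $H$: splitting $C$ between clusters $Q_1,Q_2$ would force the deletion of all $|C_1|\cdot|C_2|\geqslant|C|-1\geqslant k$ edges inside $C$, and a short exchange argument (move the smaller side back to the larger side, exploiting that, $C$ being a module, every vertex outside $C$ has identical adjacencies to $C_1$ and $C_2$) shows this can never decrease the edit count. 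Using this claim one gets Rule 2's safety in both directions: given a solution for $(G,k)$, removing the vertices of $C\setminus C^{\star}$ from $H$ removes no edits (no edit was incident to $C$ in the first place, again by the module property of $C$ combined with $|C|\geqslant k+1$), and conversely, given a solution for $(G',k)$, the removed vertices can be reinserted into the cluster containing $C^{\star}$ without creating any new edit.

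For the size bound, let $(G,k)$ be fully reduced and YES, and fix an editing scenario of size at most $k$ producing a cluster graph $H=Q_{1}\sqcup\dots\sqcup Q_{t}$. Call a vertex \emph{dirty} if it is incident to some edit and \emph{clean} otherwise, so the set $D$ of dirty vertices has $|D|\leqslant 2k$. For each cluster $Q_{i}$, set $C_{i}=Q_{i}\setminus D$. I would show that $C_{i}$ is a clique module of $G$: any two clean vertices of $Q_{i}$ are joined by an unedited edge (hence adjacent in $G$ as in $H$), and for any $w\notin C_{i}$ every edge between $w$ and a vertex of $C_{i}$ is unedited, so its status in $G$ equals its status in $H$, which is uniform on $C_{i}$ (either $w\in Q_{i}$, making $w$ adjacent to all of $C_{i}$, or $w\notin Q_{i}$, making $w$ adjacent to none). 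Rule 2 then gives $|C_{i}|\leqslant k+1$. Moreover, every cluster must contain at least one dirty vertex, for otherwise $Q_{i}=C_{i}$ would be a connected component of $G$ (all edges leaving $Q_{i}$ are clean and absent in $H$, hence absent in $G$) and a clique (all internal edges are clean and present in $H$), contradicting Rule 1; therefore $t\leqslant|D|\leqslant 2k$. Putting it together, $|V(G)|=\sum_{i}|C_{i}|+|D|\leqslant 2k(k+1)+2k=O(k^{2})$.

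The main obstacle I foresee is the exchange argument underlying Rule 2: in the boundary case where $C$ is split so that one side is a singleton, the purely internal count $|C_1|\cdot|C_2|=k$ is within budget, and one must use the module property of $C$ to argue that the external edits incurred by this split can be saved by re-merging the singleton into the larger part. This is the only place where the global structure of the modular decomposition truly enters the proof; the rest follows from counting clean vertices cluster by cluster.
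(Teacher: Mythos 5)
Your proof is correct and follows essentially the same route as the paper's: safety of Rule~2 via the observation that a clique module of size at least $k+1$ cannot be split or separated from its neighbourhood within a budget of $k$ edits, and the $2k^2+4k$ vertex bound via the same bipartition of each cluster into affected (dirty) vertices, at most $2k$ in total, and clean vertices forming a clique module of size at most $k+1$, with Rule~1 forcing a dirty vertex into every cluster and hence at most $2k$ clusters. The only difference is one of detail: you spell out the exchange argument for the boundary case where $C$ splits into a singleton and the rest (and correctly weaken ``any solution keeps $C$ together'' to ``some solution of no greater cost does''), a point the paper's proof sketch glosses over.
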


It is clear that these rules can be applied in linear time using modular decomposition algorithms. The proof idea works as follows. The first rule is obviously safe. Concerning the second rule, simply observe that to disconnect a clique module of size $k+1$ from the rest of the graph, at least $k+1$ edge deletions are required. Now assuming $G$ is a positive instance, each cluster of the resulting graph $H$ can be bipartitioned into the vertices non-incident to a modified edge and the other vertices (the \emph{affected vertices}). Finally $k$ edge modifications can create at most $2k$ clusters and the total number of affected vertices is bounded by $2k$. This shows that the number of vertices in the reduced graph $H$ is at most $2k^2+4k$.

The \textsc{bicluster editing} problem edits the edge set of a graph to obtain a disjoint union of complete bipartite graphs. Instead of considering clique modules, we need to consider independent set modules~\cite{PSS07}. The proof is then slightly more complicated and relies on a careful analysis of the modification of the modular decomposition under edge insertion or deletion. In the case of \textsc{fast}, similar rules involving transitive modules also yields a quadratic kernel bound. Note that for these two problems, linear kernels can be obtained with more sophisticated reduction rules~\cite{GHKZ08,BFG09}.

\section{Conclusions and perspectives}

An important remaining open problem is the proposal of a simple linear time certifying algorithm for modular decomposition. In fact the algorithms described here produce a labelled tree that can be checked in linear time if they are decomposition trees. But for certifying  that some decomposition tree is the modular decomposition one must certify all node labels. The bottleneck is the certification of prime nodes.

We have presented above the principles of a fully dynamic algorithm for modular decomposition of cographs, these can be also done for permutation graphs and interval graphs using
their geometric representation \cite{CP06, Cre09}.
Fully dynamic modular decomposition for the general case is still an open problem.

For some applications one wants to extend the notion of module to some notion of  \textit{approximative} module, for which we want to extend the notion having the same behaviour outside of the module. Several attemps have already been considered  \cite{BHLM09}. The main difficulty is to find an interesting extension of module polynomially tractable, since many of the natural extensions yield to NP-complete problems  \cite{FP03}.

\newcommand{\etalchar}[1]{$^{#1}$}



\end{document}